\documentclass[11pt,a4paper]{article}
\usepackage[T1]{fontenc}
\usepackage[utf8]{inputenc}
\usepackage{lmodern}
\usepackage[margin=27mm,headheight=14pt]{geometry}
\usepackage{amsmath,amssymb,amsthm,mathtools}
\usepackage{microtype}
\usepackage{enumitem,booktabs,array}
\usepackage{xcolor}
\usepackage{tikz}
\usetikzlibrary{arrows.meta,decorations.pathreplacing}
\usepackage{fancyhdr}
\usepackage[unicode,colorlinks=true,linkcolor=blue!45!black,citecolor=blue!45!black,urlcolor=blue!45!black]{hyperref}
\hypersetup{pdftitle={An integer programming algorithm with exponent linear in the dimension},pdfauthor={}}
\setlist{itemsep=2pt,topsep=4pt}
\allowdisplaybreaks[2]
\numberwithin{equation}{section}
\newtheorem{theorem}{Theorem}[section]
\newtheorem{lemma}[theorem]{Lemma}
\newtheorem{proposition}[theorem]{Proposition}

\theoremstyle{definition}
\newtheorem{definition}[theorem]{Definition}
\theoremstyle{remark}
\newtheorem{remark}[theorem]{Remark}
\newcommand{\R}{\mathbb R}
\newcommand{\Z}{\mathbb Z}
\newcommand{\eps}{\varepsilon}
\newcommand{\cL}{\mathcal L}
\newcommand{\cW}{\mathcal W}
\newcommand{\cD}{\mathcal D}
\newcommand{\cT}{\mathcal T}
\newcommand{\cR}{\mathcal R}
\newcommand{\Athin}{\mathcal A_{\rm geo}}
\newcommand{\Alift}{\mathcal A_{\rm lift}}
\newcommand{\Bx}{\mathsf B}
\newcommand{\Pref}{P_{\rm ref}}
\newcommand{\T}{\mathsf T}
\newcommand{\va}{\mathbf a}\newcommand{\vb}{\mathbf b}\newcommand{\vc}{\mathbf c}
\newcommand{\ve}{\mathbf e}\newcommand{\vo}{\mathbf o}\newcommand{\vp}{\mathbf p}
\newcommand{\vq}{\mathbf q}\newcommand{\vt}{\mathbf t}\newcommand{\vu}{\mathbf u}
\newcommand{\vv}{\mathbf v}\newcommand{\vw}{\mathbf w}\newcommand{\vx}{\mathbf x}
\newcommand{\vy}{\mathbf y}\newcommand{\vz}{\mathbf z}\newcommand{\vY}{\mathbf Y}
\newcommand{\vpi}{\boldsymbol\pi}\newcommand{\vbeta}{\boldsymbol\beta}
\newcommand{\vlambda}{\boldsymbol\lambda}\newcommand{\vpsi}{\boldsymbol\psi}
\newcommand{\vdelta}{\boldsymbol\delta}\newcommand{\vell}{\boldsymbol\ell}
\newcommand{\vtau}{\boldsymbol\tau}\newcommand{\vzero}{\mathbf 0}
\DeclareMathOperator{\lat}{lat}
\DeclareMathOperator{\conv}{conv}
\DeclareMathOperator{\covol}{covol}
\DeclareMathOperator{\epi}{epi}
\DeclareMathOperator{\rank}{rank}
\DeclareMathOperator{\vol}{vol}
\DeclareMathOperator{\relint}{relint}
\DeclareMathOperator{\proj}{proj}
\title{Integer Programing: An Algorithm with Exponent Linear in the Dimension}
\author{Lin Chen, Yuchen Mao, Guochuan Zhang\footnote{\{chenlin198662,maoyc,zgc\}@zju.edu.cn. Zhejiang University.}}
\date{\today}
\begin{document}
\maketitle
\begin{abstract}
We study integer programs in standard form
$\min\{\vc^{\T}\vx:A\vx=\vb,\ \vell\le\vx\le\vu,\ \vx\in\Z^n\}$, where
$A\in\Z^{d\times n}$ and all entries of $\vb,\vc,\vell,\vu$ are integers.
Let $N$ be an upper bound on the absolute values of the entries of $A$.
We give a deterministic algorithm running in time
\[
 \operatorname{poly}(L_{\rm in})+
 f(d)\,N^{O(d)}\bigl(1+\log(c_{\max}+1)\bigr)^{O(1)},
\]
where $c_{\max}$ bounds the absolute values of the costs and
$L_{\rm in}$ is the input length. Prior to our work, the best known
running time was $N^{O(d^2)}$ [Eisenbrand \& Weismantel'20], and it has
been a fundamental open problem whether the exponent $O(d^2)$ can be
improved to one linear in $d$. We answer this question affirmatively.
\end{abstract}

\section*{AI Statement} The main proof route was designed by the authors, building upon their recent work on subset sum and integer programming feasibility \cite{CMZ26}. The authors came up with the basic idea of adapting the algorithmic technique for integer programming feasibility test to optimization, and
GPT Astra Ultra was used to turn the insights of the authors into complete mathematical proofs.  GPT and Claude were  used for drafting the paper. GPT also verified this paper’s main results using the Lean 4 proof. The authors take full responsibility for
the entire content, technical claims, and integrity of the paper. 
\tableofcontents
\clearpage

%======================================================================
\section{Introduction}\label{sec:intro}

We study the integer program
\begin{equation}\label{eq:main}
 \min\{\vc^{\T}\vx:\ A\vx=\vb,\ \vell\le \vx\le \vu,\ \vx\in\Z^n\},
\end{equation}
in standard form with finite variable bounds, where $A\in\Z^{d\times n}$
and all entries of $\vb,\vc,\vell,\vu$ are
integers. The number $d$ counts only the equations $A\vx=\vb$, not the
variable bounds, which are finite and binary encoded. Put
\[
 N=\max\{2,\max_{i,j}|A_{ij}|\},\qquad c_{\max}=\max_i|c_i|
\]
($c_{\max}=0$ for empty input). An algorithm must report infeasibility or
return an optimal integer vector. Integer programming is NP-hard in
general~\cite{Karp72}, already for $d=1$ when $N$ is part of the input, so
the question is how the running time depends on the structural parameters
$d$ and $N$, while the targets, the bounds, and the costs may be
arbitrarily large. (In the literature the number of constraints is usually
denoted by $m$ and the largest entry by $\Delta$; we use $d$ and $N$.)

\subsection{Previous work}

\paragraph{Few variables.}
When the number $n$ of variables is fixed, integer programming is
solvable in polynomial time by Lenstra's algorithm~\cite{Lenstra83}.
Kannan improved the dependence on the dimension to $n^{O(n)}$
times a polynomial in the input length~\cite{Kannan87}, and Reis and
Rothvoss reduced it to $(\log n)^{O(n)}$~\cite{ReisRothvoss23}. These
algorithms underlie the mixed-integer feasibility subroutine used in this
paper~\cite{Lenstra83,Bredereck19}.

\paragraph{Few constraints and small coefficients.}
For programs in standard form with $d$ equations and nonnegative
variables, Papadimitriou gave a pseudopolynomial algorithm for fixed
$d$~\cite{Papadimitriou81}. Eisenbrand and Weismantel used the Steinitz
lemma to obtain running time $(dN)^{O(d)}\|\vb\|_\infty^2$ for
$\min\{\vc^{\T}\vx:A\vx=\vb,\ \vx\ge\vzero,\ \vx\in\Z^n\}$, and proved that an
optimal LP vertex has an optimal integer solution within $\ell_1$-distance
$d(2dN+1)^d$, independently of $n$, $\vb$, and the objective~\cite{EW19};
proximity results of this kind go back to Cook, Gerards, Schrijver, and
Tardos~\cite{CGST86}. Jansen and Rohwedder sharpened the dependence on
$d$ and $N$ for nonnegative variables to $O(\sqrt d\,N)^{2d}$ times
lower-order factors, and showed that improving the exponent $2d$
substantially would contradict the $(\min,+)$-convolution
hypothesis~\cite{JansenRohwedder19}. Under the Exponential Time Hypothesis,
Fomin, Panolan, Ramanujan, and Saurabh~\cite{FPRS23} and Knop, Pilipczuk,
and Wrochna~\cite{KPW20} showed that an exponent of order $d$ is necessary,
up to lower-order factors. Thus running times of the form
$(dN)^{O(d)}$ are the natural target.

\paragraph{Upper bounds.}
Finite upper bounds on the variables make the problem harder to handle by
these methods. For the bounded version, Eisenbrand and Weismantel obtained
running time $n\cdot(dN)^{O(d^2)}$~\cite{EW19}, by combining proximity with
a dynamic program over a bounded region; to our knowledge this quadratic
dependence on $d$ in the exponent had remained the best general bound for
bounded optimization. The case $d=1$ contains subset sum and bounded
knapsack; see Section~\ref{sec:further-related}.

\paragraph{Feasibility.}
Chen, Mao, and Zhang recently obtained near-optimal algorithms for subset
sum and for testing feasibility of integer programs~\cite{CMZ26}. A key
ingredient is a constructive structure theorem: after removing few
exceptional items, the remaining items admit common coordinates in a
generalized arithmetic progression, together with a small reserve whose
subset sums fill a box (Theorem~\ref{thm:gap}). Feasibility tests are
insensitive to costs, and a reserve that repairs a weight error may be
far too expensive for an optimization query. Extending the approach to
optimization is the subject of this paper.

\subsection{Our results}

Our main result is an algorithm for \eqref{eq:main} whose running time
is $N^{O(d)}$, with an exponent linear in $d$, and in which the costs
enter only polynomially through their bit length, with an absolute
exponent. It is obtained from the following binary version.

\begin{theorem}[Distinct-column binary optimization]\label{thm:binary}
Let $A\in\Z^{d\times n}$
have distinct columns with $\|\va_i\|_\infty\le N$, $N\ge2$. Then
$\min\{\vc^{\T}\vx:A\vx=\vb,\ \vx\in\{0,1\}^n\}$ with $|c_i|\le c_{\max}$ can be
solved deterministically, returning an optimal set of item labels, in time
\begin{equation}\label{eq:binary-time}
 f(d)\,N^{C_0d}\bigl(1+\log(c_{\max}+1)\bigr)^{p}
\end{equation}
plus the time to read the target, where $C_0$ and $p$ are absolute
constants; one may take $p=6$.
\end{theorem}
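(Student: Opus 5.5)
Since the columns are distinct and satisfy $\|\va_i\|_\infty\le N$, we have $n\le(2N+1)^d=N^{O(d)}$. Running time polynomial in $n$ with a fixed exponent is therefore affordable, and the difficulty is only to avoid an exponent of order $d^2$ coming from the target $\vb$, which may be huge. We may first assume $\|\vb\|_\infty\le nN$; otherwise the instance is trivially infeasible. We then attack costs by \emph{bit scaling}. Write $c_i=\sum_k 2^k c_i^{(k)}$ and solve a sequence of $O(\log(c_{\max}+1))$ problems. At each scale the current cost vector is the rounded version $\lfloor \vc/2^k\rfloor$, and we start from an optimal solution $\vx^{(k+1)}$ of the previous scale. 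The standard Steinitz/proximity argument of Eisenbrand--Weismantel, applied to the difference $\vx^{(k)}-\vx^{(k+1)}$, shows that an optimum at scale $k$ can be found within $\ell_1$-distance $\mathrm{poly}(n)$ of $\vx^{(k+1)}$. Indeed, the cost change between scales is at most $n$ per unit, so any sub-multiset of the difference that sums to $\vzero$ in $A$-space can be exchanged. Hence every scale reduces to an \emph{augmentation} problem: find a minimum-cost $\{-1,0,1\}$ modification $\vy$ with $A\vy=\vzero$ (or a prescribed small residual), respecting the binary bounds, and with costs bounded by $\mathrm{poly}(n)$.

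The core step is solving this augmentation problem in time $f(d)N^{O(d)}$ rather than $N^{O(d^2)}$. For it we use the structure theorem (Theorem~\ref{thm:gap}) from \cite{CMZ26}. We first remove the few exceptional items; there are $f(d)$ of them or $N^{O(d)}$ in the worst case, and we guess their values by enumeration or a DP on their $O(d)$-dimensional sums bounded by $N^{O(d)}$. The remaining items then live on a generalized arithmetic progression with $O(d)$ generators, together with a reserve whose subset sums fill a box. The feasibility version only needs the reserve to repair the weight error. For optimization, we additionally \emph{group items by cost} (after scaling, costs are $\mathrm{poly}(n)$, so there are $N^{O(d)}$ cost classes). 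Inside each class we apply the structure theorem separately, and then run a dynamic program over the GAP coordinates, whose relevant range is $N^{O(d)}$ by the proximity bound. The reserve is used only to fix the residual weight, and its contribution to the cost is charged against the exchange argument, so that the DP state space stays $N^{O(d)}$.

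I expect the main obstacle to be exactly the cost-awareness of the reserve. The reserve guaranteed by Theorem~\ref{thm:gap} may consist of expensive items, so repairing a weight error through it could destroy optimality. My first attempt is to choose the reserve \emph{cheapest-first within a cost class*}. I would then prove an exchange lemma: any optimal augmentation can be transformed, without increasing cost, into one that uses reserve items only in a box of side $N^{O(d)}$. The argument decomposes the difference into zero-sum cycles via Steinitz, and replaces cycles through reserve items by cycles of equal weight among cheaper GAP items. If this works, each scale costs $f(d)N^{O(d)}$ times a polynomial in $\log(c_{\max}+1)$ (from arithmetic on costs and from the number of cost classes, bucketed at $O(\log)$ granularity). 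With $O(\log(c_{\max}+1))$ scales this gives the bound \eqref{eq:binary-time} with a modest exponent $p$. Recording item labels throughout the DP recovers the optimal set, and reading $\vb$ once accounts for the additive term.
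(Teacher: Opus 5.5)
\paragraph{The reduction to augmentation does no work.}
With distinct columns, any two binary vectors are within $\ell_1$-distance $n\le(2N+1)^d$. A proximity bound of size $\operatorname{poly}(n)$ therefore localizes nothing.

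In bit scaling, the quantity bounded by $n$ is the improvement $\vc^{(k)}(\vx^{(k+1)}-\vx^{(k)})$, not the item costs. At the last scale the costs are the original $c_i$. Knowing that $\vx^{(k+1)}$ is optimal for $\lfloor\vc/2^{k+1}\rfloor$ does not let you drop that part of the objective. So your augmentation problem is again a binary program with $d$ equations, columns $\pm\va_i$, and arbitrary costs, i.e.\ the original problem.

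Even costs of size $N^{O(d)}$ would not help. Appending the cost as an extra row gives entries $N^{O(d)}$, and a method polynomial of degree $O(d)$ in the entry size then costs $N^{O(d^2)}$. The exponent $d^2$ in the bounded case comes from tables over a weight region of side $N^{O(d)}$ in $d$ coordinates, which is what proximity plus dynamic programming produces. The entire content of the theorem therefore sits in the step you call the core step.

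\paragraph{The core step, as sketched, does not give $N^{O(d)}$.}
A dynamic program over GAP coordinates with range $N^{O(d)}$ per coordinate has $N^{O(dk)}$ states. The GAP rank satisfies $k\ge\operatorname{rank}\phi$, which can be $d$, so this is again $N^{\Omega(d^2)}$. The paper stresses that per-coordinate accounting reintroduces $d^2$ (remark after Lemma~\ref{lem:boxing}) and never uses the progression as a state space.

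Splitting by raw cost classes can leave every class too small for Theorem~\ref{thm:gap}, which then declares all items outliers. Raw cost is also the wrong measure of cheapness: two costs within a factor two can differ by far more than the available slack.

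Your exchange lemma is only conjectured ("if this works"). Nothing guarantees that unused cheaper items realize the weights of reserve cycles; the reserve is needed precisely because generic items need not fill a box.

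\paragraph{The missing idea.}
The paper measures cheapness against an LP supporting price $\vpi$ at an anchor. With $\bar c_i=|c_i-\vpi\va_i|$, a solution at the anchor target costs $f_I(\vt_v)$ plus its change costs. The structure theorem is applied only to a canonical resilient core of items with $\bar c_i=O(g_v/k_0)$, so every reserve subset costs at most $g_v/64$, where $g_v$ is the gap.

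When this cannot succeed, the paper extracts progress instead:
\begin{itemize}
\item Depth is measured by the Macbeath body $Q_I$ in the epigraph of $f_I$.
\item Thin anchors shrink the item hull (Lemma~\ref{lem:thin}).
\item Deep anchors either restrict to a proper lattice or produce a centre. On the centre's affordable fibre, the point-or-thin primitive yields either a rounded solution or one thin integral coordinate that retains every completion (Proposition~\ref{prop:lift}).
\item The potential $\Phi=\Omega^rZ^{\T_{\rm lift}-\ell}w^3\Psi^2$ bounds the whole recursion.
\item The horizontal packing of Section~\ref{sec:profit-precision} makes the cost-bit exponent absolute.
\end{itemize}
Your plan has no substitute for this recursion, so it does not establish the theorem.
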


\begin{theorem}[Finite lower and upper bounds]\label{thm:bounded}
Problem \eqref{eq:main} can be solved deterministically
in time
\begin{equation}\label{eq:bounded-time}
 \operatorname{poly}(L_{\rm in})+
 F(d)\,N^{Cd}\bigl(1+\log(c_{\max}+1)\bigr)^{p},
\end{equation}
where $L_{\rm in}$ is the input length, $p$ is as in
Theorem~\ref{thm:binary}, and $C$ and the degree of the polynomial are
absolute. Repeated columns, with possibly different
costs, are allowed.
\end{theorem}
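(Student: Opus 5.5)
We derive Theorem~\ref{thm:bounded} from Theorem~\ref{thm:binary} in three reductions: proximity, aggregation of equal columns, and binary splitting of multiplicities.

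\textbf{Step 1: LP solution and proximity.} Solve the LP relaxation of \eqref{eq:main} in $\operatorname{poly}(L_{\rm in})$ time and take an optimal vertex $\vx^*$. If the LP is infeasible, so is \eqref{eq:main}. If the LP is unbounded, then, since all bounds are finite, this cannot happen. By the Eisenbrand--Weismantel proximity bound~\cite{EW19}, which extends to finite lower and upper bounds via the Cook et al.\ style argument, either the IP is infeasible or some optimal $\vz$ satisfies $\|\vz-\vx^*\|_1\le d(2dN+1)^d=:K$.

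A vertex has at most $d$ fractional coordinates, since at most $d$ variables strictly lie between their bounds. Substitute $\vx=\lfloor \vx^*\rfloor+\vy$. The new problem has the same matrix $A$, right-hand side $\vb'=\vb-A\lfloor\vx^*\rfloor$, and bounds $\max(\ell_i-\lfloor x^*_i\rfloor,-K)\le y_i\le\min(u_i-\lfloor x^*_i\rfloor,K+1)$. So every variable ranges in an interval of length $O(K)=(dN)^{O(d)}$, and $\|\vb'\|_\infty\le NK'$ for the relevant feasible solutions. Shifting each $y_i$ by its lower bound makes every variable range over $\{0,\dots,U_i\}$ with $U_i\le 2K+2$.

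\textbf{Step 2: aggregate equal columns.} There are at most $(2N+1)^d$ distinct columns. For a fixed column $\va$, consider all variables carrying it, with costs $c_i$ and capacities $U_i$. The cheapest way to take total amount $t$ of column $\va$ is greedy: fill the cheapest variables first. This cost function $g_{\va}(t)$ is convex piecewise linear in $t$.

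The total amount of column $\va$ that is relevant lies within $O(K)$ of some base value. More precisely, proximity applies to the aggregated problem, whose variables are amounts per distinct column, so we may bound the total amount by $M=(dN)^{O(d)}$ after a suitable shift. The shift is computed greedily, again in polynomial time.

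\textbf{Step 3: binary splitting into distinct columns.} Represent an integer amount $t\in[0,M]$ of column $\va$ by items of column $\va$ weighted by multiplicities $1,1,2,2,4,4,\dots$. Because $g_{\va}$ is convex, the marginal costs are nondecreasing, and the standard convex-to-binary encoding does not apply directly to powers of two. Instead we use unit items: the marginal costs $g(t)-g(t-1)$ are nondecreasing, so a $0/1$ choice of $M$ unit copies with these costs automatically realizes $g$, since an optimum picks a prefix.

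This leaves two problems. First, the copies produce repeated columns, which Theorem~\ref{thm:binary} forbids. Second, $M$ copies per column gives $n'=(2N+1)^dM=N^{O(d)}f(d)$, which is acceptable in size. To make the columns distinct, lift to dimension $d+O(1)$ or perturb: append coordinates encoding the copy index. The simplest route is a bijection trick. Item $(\va,j)$ gets column $(\va, j, 1)$ with extra target constraints, but that breaks distinctness-only-free sums. I would instead use $\va$ scaled: the copy representing the $2^k$-block of units gets column $2^k\va$. The blocks have sizes $2^k$ and costs equal to sums of consecutive marginal costs, which is valid for convex $g$ only if chosen blocks form a prefix. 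This fails in general.

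Hence the main obstacle is encoding the bounded convex multiplicities by distinct columns with entries still bounded by $N^{O(1)}$, so that the exponent stays linear. The approach I would try first is to add one extra row per column class, which is too many. A better option is a single extra row counting items modulo a large base, combined with scaling $N\to N^{O(1)}$, which only changes the constant $C$.

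The cost bit-length grows by at most $\log M=O(d\log (dN))$. That factor, raised to the power $p$, is absorbed into $F(d)N^{Cd}$.
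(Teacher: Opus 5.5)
Your Steps 1--2 and the unit-copy representation in Step 3 essentially match the first half of the paper's reduction: an optimal LP vertex, Eisenbrand--Weismantel proximity, unit slots per column sorted by cost so that an optimum uses a prefix, and a window of width $2R_{\rm px}$ per distinct column. One correction here: the cited theorem does not apply to the ``aggregated problem'', whose costs are convex piecewise linear. Argue directly instead: $|k^*_a-s_a|\le\sum_{i\in I_a}|z^*_i-x^*_i|\le R_{\rm px}$ for the group totals $s_a$ of the LP vertex. Then replace each group by its cheapest prefix of the same total, which preserves feasibility and optimality.

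The genuine gap is the step you yourself flag as the main obstacle. You never give a valid way to make the repeated columns distinct, and the fix you suggest cannot work. With your window, a single column can have up to $2R_{\rm px}=\Theta_d(N^d)$ exposed copies. Suppose these copies are told apart only by one extra coordinate with entries bounded by $N'$. Then $2N'+1\ge2R_{\rm px}$, so $N'=\Omega_d(N^d)$, and the bound $N'^{C_0(d+1)}$ of Theorem~\ref{thm:binary} becomes $N^{\Omega(d^2)}$, which is exactly the exponent you must avoid. With $N'=N^{c}$ for an absolute $c$ there are simply too few values. Your sketch also never says how the target of such a tag row stays satisfiable for every choice of items.

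The missing idea is the digit-tag construction of Lemma~\ref{lem:tags}: spread the index over $\Theta(d)$ coordinates with a small base, and neutralize those rows with zero-cost companions. Number the $M\le2R_{\rm px}(2N+1)^d\le(6dN)^{2d}$ exposed items $j=0,\dots,M-1$. Let $\vtau_j\in\{0,\dots,6dN-1\}^{2d}$ be the base-$6dN$ digits of $j$. Replace item $j$ by the column $(\va_j,1,\vtau_j)$ with cost $c_j$ and the column $(\vzero,-1,-\vtau_j)$ with cost $0$, and use target $(\vt,0,\vzero)$. The marker coordinate separates the two families, and the digits separate items within a family. Hence all $2M$ columns are distinct, there are $3d+1$ rows, and all entries are at most $6dN$. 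An original solution extends by also selecting the companions of the chosen items. Conversely, the first $d$ rows of any extended solution give an original solution of the same cost; the tag rows need not force this pairing, which is harmless. Theorem~\ref{thm:binary} then gives $f(3d+1)(6dN)^{C_0(3d+1)}(1+\log(c_{\max}+1))^p\le F(d)N^{Cd}(1+\log(c_{\max}+1))^p$. Finally, slot costs are the original item costs, so, contrary to your last paragraph, the cost bit length does not grow at all.
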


The polynomial term covers one exact LP and arithmetic on the binary
encoded bounds; the numerical sizes of $\vb$, $\vell$, and $\vu$ do not
enter the second term. In particular, for fixed $d$ the running time is
polynomial in $N$ and in the input length, with an exponent linear in
$d$; compared with the $(dN)^{O(d^2)}$ bound for bounded
variables~\cite{EW19}, the quadratic dependence on $d$ in the exponent
is removed, and the bound has the same form $N^{O(d)}$ as the best
algorithms for nonnegative variables. Moreover, the cost bit length
enters with an exponent that does not depend on $d$. We have not
attempted to optimize the constants $C_0$ and $C$; the value $p=6$
comes from a conservative bit-complexity accounting. The functions $f$
and $F$ depend only on $d$.

The proof proceeds in two steps. Sections~\ref{sec:prelim}--\ref{sec:bounded}
develop the algorithm with queries grouped into factor-two gap bands and
prove the weaker bound $f(d)[N(1+\log(c_{\max}+1))]^{O(d)}$, in which
the exponent of the cost bit length grows with $d$
(Propositions~\ref{prop:binary-banded} and~\ref{prop:bounded-banded}).
Section~\ref{sec:profit-precision} replaces the banded packing by a single
packing ordered by decreasing gap, makes all scales and the recursion
potential independent of the costs, and proves
Theorems~\ref{thm:binary} and~\ref{thm:bounded}.

The algorithm works with strict decision queries
\begin{equation}\label{eq:strict}
 A\vx=\vt,\qquad \vc^{\T}\vx<s,\qquad \vx\in\{0,1\}^n;
\end{equation}
the bound $\vc^{\T}\vx\le q$ is the query with $s=q+1$, and binary
search over $q$ gives optimization. Costs are never used as an
additional geometric coordinate: they are compared only through an LP
price, and exact costs are stored as values attached to weights.

\paragraph{Our approach.}
The algorithm follows the principle of the feasibility framework: solve a
target that lies deep inside the region of fractional solutions by
rounding and repairing with the reserve, and reduce the problem when a
boundary obstructs rounding. For optimization, \emph{depth} is measured in
the epigraph of the fractional minimum-cost function rather than in the
weight space alone, so that it accounts for the cost of moving the target.
A target with little affordable room yields subproblems whose items span
a thinner region; a deep target either forces a proper canonical
sublattice, when no residual query keeps enough of its cost allowance, or
leads to an affordable fibre of the common progression, which yields a
solution or a new thin integer coordinate. A single potential, combining
a lattice-point mass, a normalized volume, the rank, and the number of
added coordinates, bounds the whole recursion. Finite bounds are reduced
to the binary case by LP proximity and short digit tags that make
repeated columns distinct while keeping the entries $O(dN)$.
Section~\ref{sec:technical-overview} explains these ideas informally,
with worked examples.

\subsection{Further related work}\label{sec:further-related}

Subset sum is the special case of \eqref{eq:main} with $d=1$, binary
variables, and zero costs: given positive integers $a_1,\dots,a_n$ and a
target $t$, decide whether some subset sums to $t$. It is NP-hard, and
its pseudopolynomial algorithms have been studied intensively, with
running times measured either in the target $t$ or in the largest item
$w_{\max}=\max_ia_i$, which plays the role of $N$.

\paragraph{Running time in terms of the target.}
Bellman's dynamic program solves subset sum in time
$O(nt)$~\cite{Bellman57}. Koiliaris and Xu gave a deterministic
$\widetilde O(n+\sqrt n\,t)$ algorithm using sumset
computations~\cite{KoiliarisXu19}, and Bringmann gave a randomized
algorithm running in $\widetilde O(n+t)$~\cite{Bringmann17}, which was
recently derandomized by Chan~\cite{Chan26}; Jin and Wu found a simpler
randomized $\widetilde O(n+t)$ algorithm based on formal power
series~\cite{JinWu19}. This is essentially optimal: Abboud,
Bringmann, Hermelin, and Shabtay showed that an algorithm running in
time $t^{1-\varepsilon}2^{o(n)}$ for some $\varepsilon>0$ would refute the
Strong Exponential Time Hypothesis~\cite{ABHS22}; in particular, an
$\widetilde O(n+w_{\max}^{1-\varepsilon})$ algorithm is ruled out under
this hypothesis.

\paragraph{Running time in terms of the largest item.}
When the items are small compared with the target, the relevant
parameter is $w_{\max}$. Pisinger showed that subset sum can be solved in
time $O(nw_{\max})$, linear in the number of items~\cite{Pisinger99}.
Polak, Rohwedder, and W\k{e}grzycki combined proximity arguments with
additive combinatorics to obtain $\widetilde O(n+w_{\max}^{5/3})$, and
gave an $\widetilde O(n+w_{\max}^{3})$ algorithm for $0$-$1$
knapsack~\cite{PolakRW21}. Chen, Lian, Mao, and Zhang improved subset sum
to $\widetilde O(n+w_{\max}^{3/2})$ by fine-grained proximity results,
which also apply to bounded multiplicities~\cite{CLMZ24}, and, for
distinct items, to $\widetilde O(n+w_{\max}^{5/4})$; without the
distinctness assumption they obtained
$\widetilde O(n+\sqrt{w_{\max}t})$~\cite{CLMZ24c}. Very recently, Jin and
Mao gave a deterministic algorithm running in
$O(n+w_{\max}^{1+o(1)})$ time, which is optimal up to the
$w_{\max}^{o(1)}$ factor under the Strong Exponential Time
Hypothesis~\cite{JinMao26}. Their algorithm also solves the
single-constraint bounded feasibility problem $\sum_ia_ix_i=b$,
$x_i\in\{0,\dots,u_i\}$, with $a_i\in[-w_{\max},w_{\max}]$. It models
almost all items as elements of a generalized arithmetic progression
using the inverse theorem of Conlon, Fox, and Pham~\cite{CFP23}, so that
every lattice point deep inside the zonotope of the progression
coordinates is a subset sum, and computes the remaining sums near the
boundary by a quotient version of economic cap covering with Macbeath
regions. Independently and concurrently, Chen, Mao, and Zhang obtained
the same running time and extended their method to multidimensional
subset sum and integer programming feasibility~\cite{CMZ26}; their
structure theorem is the starting point of this paper. Macbeath regions
also play a central role here, but in the epigraph of the fractional
cost function rather than in the zonotope
(Section~\ref{sec:buffers}).

\paragraph{Dense instances and additive structure.}
A parallel line of work exploits the additive structure of dense
instances, where the number of items is large compared with $w_{\max}$.
Galil and Margalit showed, using results on arithmetic progressions in
subset sums, that dense subset sum can be solved in almost linear
time~\cite{GalilMargalit91}, and Bringmann and Wellnitz characterized
the targets reachable in dense instances and obtained near-linear
algorithms in this regime~\cite{BringmannWellnitz21}. The common
progressions with filling reserves used in this paper
(Theorem~\ref{thm:gap}) are a constructive, multidimensional analogue of
this structure.

\paragraph{Optimization versus feasibility.}
For optimization the picture is different. Under the
$(\min,+)$-convolution hypothesis, $0$-$1$ knapsack has no algorithm with
running time $O((n+t)^{2-\varepsilon})$~\cite{CMWW19}, whereas subset sum
is solvable in $\widetilde O(n+t)$. In terms of the largest weight, $0$-$1$
knapsack can be solved in $\widetilde O(n+w_{\max}^2)$ time by the
algorithms of Jin and of Bringmann~\cite{Jin24,Bringmann24}. Costs thus
add genuine difficulty beyond feasibility; in our setting this
difficulty appears as the need to measure depth in the epigraph of the
fractional cost function rather than in the weight space.

\section{Technical overview}\label{sec:technical-overview}

\emph{Optimization follows the same guiding principle as feasibility:
solve an interior target by rounding, and reduce the problem when a
boundary obstructs rounding.} The extra issue is that a feasible repair
can be too expensive. We therefore measure interior using both weight and
cost. We use \emph{interior case} and \emph{boundary case} for the two
complementary outcomes: enough room for affordable movement in every
item direction, or insufficient room in at least one direction.

There is one further distinction to keep in mind. A target can have ample
room in the current weight space while the additive structure has a thin direction
that those coordinates do not show. In that event we expose the missing
direction and recurse. Thus the interior branch either finds an affordable
solution, restricts the problem to a proper lattice, or discovers a hidden
boundary. The last two outcomes reduce the problem rather than ending the
algorithm.

As in \eqref{eq:main}, we minimize cost. In the binary case,
given distinct columns $\mathbf{a}_i\in\Z^d$ with coordinates bounded by $N$ and
integer costs $c_i$, we seek
\[
 \min\left\{\sum_i c_ix_i:\ \sum_i \mathbf{a}_ix_i=\mathbf{t},\quad
                              x_i\in\{0,1\}\right\}.
\]
For a decision query, $s\in\Z$ is the given cost threshold: we ask
whether there exists a \emph{binary} selection with weight $\mathbf{t}$ and
cost strictly below $s$. (A non-strict integer bound
$\sum_ic_ix_i\le s'$ is the strict query with $s=s'+1$.) A logarithmic
number of threshold queries recovers an optimum. The goal is a running
time of the form
\[
 C_dN^{Cd}\bigl(1+\log(c_{\max}+1)\bigr)^{O(1)},
 \qquad c_{\max}=\max_i|c_i|,
\]
where $C_d$ depends only on $d$ and $C$ is an absolute constant, together
with input-reading costs. Large
costs should enter through their bit length, not the size of a dynamic
programming array.

\paragraph{Notation.}
As throughout the paper, vectors are bold and their scalar components are
in ordinary type (Table~\ref{tab:notation}). For equally sized vectors,
$\langle\mathbf{v},\mathbf{w}\rangle=\sum_jv_jw_j$ denotes their inner
product. We write price
vectors as rows, so $\boldsymbol{\pi}\mathbf{a}_i
=\langle\boldsymbol{\pi},\mathbf{a}_i\rangle
=\sum_{j=1}^d\pi_j(\mathbf{a}_i)_j$ is the total priced weight of item $i$.
In a one-dimensional example, the price $\pi$ is a scalar.

\paragraph{Roadmap.}
Section~\ref{ov:feasibility} first recalls the feasibility framework:
structured interior targets can be rounded and repaired, while boundary
targets lead to smaller instances. This is background for the optimization
algorithm. The remaining sections explain how cost changes that framework.
\begin{itemize}
\item \textbf{Price the repair}
(Section~\ref{ov:pricing}). An LP price separates the cost already needed to
reach the target from the extra cost of rounding. This tells us which
items can safely form the reserve.
\item \textbf{Apply the interior/boundary test}
(Section~\ref{ov:affordability}). In the boundary case, process the
few exceptional changes and shrink the remaining weight problem
(Section~\ref{ov:boundary}). In the interior case, construct cheap structure and
attempt the feasibility repair (Section~\ref{ov:interior}). Arithmetic
obstructions lead to a proper lattice; a thin direction hidden by the
current weights leads to a new coordinate. These are outcomes
\emph{within the interior branch}.
\item \textbf{Bound the whole recursion}
(Section~\ref{ov:accounting}). Both smaller weight regions and proper
lattices decrease one geometric measure. Shared accounting controls all
branches, including those created when a hidden coordinate is exposed.
\end{itemize}

Our structural starting point is the constructive common-progression and
filling-reserve framework of Chen, Mao, and Zhang~\cite{CMZ26}, in the
form stated as Theorem~\ref{thm:gap}. The overview explains how the
optimization argument uses it. We suppress scale choices and proof bookkeeping
to emphasize why each ingredient is needed.

\subsection{Feasibility background: interior is easy, boundary reduces}
\label{ov:feasibility}

The feasibility algorithm has two complementary responses to a target.
In the interior, enough surrounding subset sums allow us to round a
fractional solution and repair its error. Near a boundary, movement is
restricted in some direction. Relative to a suitable fractional reference,
only a controlled number of items large in that direction can change;
processing those choices exactly leaves a smaller weight problem.
The comparison is always at a scale set by the items. We explain the
interior repair below; Section~\ref{ov:affordability} will adapt both
responses to a cost constraint.

Allowing $0\le x_i\le1$ gives the fractional weight region
\[
 Z=\left\{\sum_i \mathbf{a}_ix_i:0\le x_i\le1\right\}.
\]
This is a sum of line segments, usually called a \emph{zonotope}.
An integer point in $Z$ need not be a subset sum. For example, even item
weights can produce a fractional sum of odd weight. Being far from the
boundary does not remove such arithmetic obstructions. We first need
coordinates that describe the subset sums, rather than only their convex
hull.

\paragraph{Common progression coordinates.}
Consider the four weights $11,12,22,24$. Their subset sums look like
separated clusters on the line. Represent the items instead by
$(1,0),(0,1),(2,0),(0,2)$, and recover the original weights by
\[
 \phi(x,y)=11x+12y.
\]
In these coordinates, the subset sums fill every integer point of
$[0,3]^2$. The two descriptions in Figure~\ref{fig:progression} are
equivalent, but the second makes the regularity visible.

\begin{figure}[htbp]
\centering
\begin{tikzpicture}[x=1cm,y=1cm,>=Stealth,
  every node/.style={font=\small},
  ovblue/.style={blue!65!black},ovteal/.style={teal!70!black}]
  \begin{scope}[shift={(0.3,0.1)},x=.77cm,y=.77cm]
    \fill[teal!7] (0,0) rectangle (3,3);
    \draw[step=1,teal!18] (0,0) grid (3,3);
    \draw[teal!65!black,thick] (0,0) rectangle (3,3);
    \foreach \xx in {0,1,2,3} {
      \foreach \yy in {0,1,2,3} {
        \fill[teal!70!black] (\xx,\yy) circle (2pt);
      }
      \node[below] at (\xx,-.07) {$\xx$};
    }
    \foreach \yy in {1,2,3} \node[left] at (-.04,\yy) {$\yy$};
    \draw[->,gray] (-.3,0)--(3.45,0) node[right] {$x$};
    \draw[->,gray] (0,-.2)--(0,3.5) node[left] {$y$};
    \node[align=center] at (1.5,4.1)
      {All integer points of $[0,3]^2$\\are subset sums of $\mathbf{e}_1,\mathbf{e}_2,2\mathbf{e}_1,2\mathbf{e}_2$.};
  \end{scope}
  \draw[->,blue!65!black,thick] (3.6,1.45)--(5.25,1.45)
    node[midway,above,align=center] {$\phi(x,y)$\\$=11x+12y$};
  \begin{scope}[shift={(5.55,1.45)},x=.081cm,y=1cm]
    \draw[->,gray] (-3,0)--(74,0);
    \foreach \xx in {0,11,12,22,23,24,33,34,35,36,45,46,47,57,58,69}
      \fill[blue!65!black] (\xx,0) circle (1.8pt);
    \foreach \xx/\txt in {0/0,69/69}
      \node[below] at (\xx,-.08) {$\txt$};
    \draw[gray!60] (11.5,.06)--(11.5,.38);
    \node[above] at (11.5,.4) {$11,12$};
    \draw[gray!60] (23,-.06)--(23,-.38);
    \node[below] at (23,-.4) {$22,23,24$};
    \draw[gray!60] (34.5,.06)--(34.5,.38);
    \node[above] at (34.5,.4) {$33,34,35,36$};
    \draw[gray!60] (46,-.06)--(46,-.38);
    \node[below] at (46,-.4) {$45,46,47$};
    \draw[gray!60] (57.5,.06)--(57.5,.38);
    \node[above] at (57.5,.4) {$57,58$};
    \node[align=center] at (34.5,1.7)
      {Physical values look clustered;\\the common coordinates are a full box.};
  \end{scope}
  \node[align=center] at (6.3,-.72)
    {Weights $11,12,22,24$, with representatives
      $\mathbf{e}_1,\mathbf{e}_2,2\mathbf{e}_1,2\mathbf{e}_2$.};
\end{tikzpicture}
\caption{An exact common-progression example. The sixteen coordinate sums
fill the integer box, although their images in one dimension have gaps.
The map $\phi$ recovers every physical item and every subset sum.}
\label{fig:progression}
\end{figure}

A generalized arithmetic progression is the image of an integer box
under a linear map such as $\phi$. The structural construction finds one
common system of this kind for most of the items: each surviving item
$\mathbf{a}_i$ has an integer representative $\mathbf{u}_i$ in a bounded box, with
$\phi\mathbf{u}_i=\mathbf{a}_i$. The same map therefore recovers every subset sum. The
coordinate dimension can exceed the original weight dimension; its job
is to record additive information that the original coordinates hide.

The construction separates a small set of exceptional items, a small
\emph{reserve}, and a remaining \emph{bulk}. Here small means a small
fixed power of $N$, up to logarithmic and dimension-dependent factors.
Exceptional items are handled
exactly. The bulk and reserve have common progression coordinates, and
the reserve's subset sums fill a box large enough to correct rounding
errors. Here \emph{constructive} means that we compute these coordinates
and recover the actual selected reserve items whenever a correction is
used. A mere statement that many subset sums exist would not suffice.

\paragraph{What the reserve repairs.}
Suppose there are $k$ progression coordinates. Fix one reference subset
of the reserve. If an integer coordinate target can be reached by this
reference subset plus a fractional bulk selection, choose an LP vertex
for that representation. There are only $k$ coordinate equations, so at
most $k$ bulk variables are fractional.

Round those variables down. The lost vector is an integer vector, since
both the target and the rounded sum are integer. It is also small: it is
a sum of at most $k$ fractional item vectors. The reserve is designed
precisely for this scale. Its subset sums contain a whole integer box
around the reference sum, so we replace the reference subset by one that
supplies the lost vector. The rounded bulk and this replacement reserve
subset give the exact target. They use disjoint actual items.

For example, the two bulk vectors $(1,1)$ and $(-1,1)$ reach $(0,1)$ by
taking one half of each. Rounding down loses $(0,1)$. A disjoint reserve
containing $(0,1)$ repairs the error. Figure~\ref{fig:reserve} shows a
reserve with a stronger guarantee: its subset sums fill the entire box
$[-6,6]^2\cap\Z^2$, so it handles all errors of the required size, not
only this particular one.

\begin{figure}[htbp]
\centering
\begin{tikzpicture}[x=1cm,y=1cm,>=Stealth,
  every node/.style={font=\small}]
  \begin{scope}[shift={(2.4,.2)},x=1.25cm,y=1.25cm]
    \fill[blue!6] (0,0)--(1,1)--(0,2)--(-1,1)--cycle;
    \draw[blue!35] (0,0)--(1,1)--(0,2)--(-1,1)--cycle;
    \draw[->,gray] (-1.35,0)--(1.4,0);
    \draw[->,gray] (0,-.15)--(0,2.35);
    \draw[->,blue!65!black,thick] (0,0)--(1,1)
      node[right] {$\mathbf{u}_1$};
    \draw[->,blue!65!black,thick] (0,0)--(-1,1)
      node[left] {$\mathbf{u}_2$};
    \fill[orange!90!black] (0,1) circle (2.5pt);
    \node[above right] at (.06,1) {$\mathbf{z}=(0,1)$};
    \fill[blue!65!black] (0,0) circle (2.5pt);
    \node[below right] at (.03,-.02) {$\widehat{\mathbf{z}}=(0,0)$};
    \draw[->,orange!85!black,thick,bend right=38]
      (-.12,.9) to node[left,align=center,font=\scriptsize]
      {round both\\coefficients down} (-.12,.12);
    \node[align=center] at (0,2.85)
      {Two used bulk columns\\
       $\frac12\mathbf{u}_1+\frac12\mathbf{u}_2=\mathbf{z}$.};
  \end{scope}
  \draw[->,gray!75,thick] (5,1.4)--(6.7,1.4)
    node[midway,above,align=center] {error\\$\boldsymbol{\Delta}=\mathbf{z}-\widehat{\mathbf{z}}$};
  \begin{scope}[shift={(9.2,1.45)},x=.2cm,y=.2cm]
    \fill[teal!7] (-6,-6) rectangle (6,6);
    \draw[step=1,teal!13] (-6,-6) grid (6,6);
    \draw[teal!65!black] (-6,-6) rectangle (6,6);
    \foreach \xx in {-6,...,6} \foreach \yy in {-6,...,6}
      \fill[teal!65!black] (\xx,\yy) circle (.65pt);
    \draw[->,gray] (-7,0)--(7,0);
    \draw[->,gray] (0,-7)--(0,7);
    \draw[->,orange!85!black,very thick] (0,0)--(0,1);
    \fill[orange!85!black] (0,1) circle (2pt);
    \draw[orange!85!black] (.3,1.2)--(3.4,3.5);
    \node[right,fill=white,inner sep=1.5pt] at (3.4,3.5) {$\boldsymbol{\Delta}=\mathbf{e}_2$};
    \node[align=center] at (0,11.6)
      {Disjoint reserve: $R=\{\pm k\mathbf{e}_1,\pm k\mathbf{e}_2:1\le k\le3\}$\\
       Subset sums fill $[-6,6]^2\cap\mathbb Z^2$.};
  \end{scope}
  \node[align=center] at (6.25,-.55)
    {Bulk subset $\varnothing$ plus reserve subset $\{\mathbf{e}_2\}$ has sum $\mathbf{z}$.};
\end{tikzpicture}
\caption{The reserve repairs a small rounding error. On the left,
$\mathbf{u}_1=(1,1)$ and $\mathbf{u}_2=(-1,1)$ reach $\mathbf{z}=(0,1)$ with coefficients
$1/2,1/2$. Rounding down gives $\widehat{\mathbf{z}}=0$; the disjoint reserve
supplies the missing vector. On the right, the twelve reserve items
$\{\pm k\mathbf{e}_1,\pm k\mathbf{e}_2:1\le k\le3\}$ fill every integer point of
$[-6,6]^2$ by subset sums, providing a whole region of possible repairs.
Here $\mathbf{e}_1=(1,0)$ and $\mathbf{e}_2=(0,1)$.}
\label{fig:reserve}
\end{figure}

This is the easy interior mechanism. The bulk supplies the large-scale
weight, while the reserve supplies a bounded correction. The reserve is
not supposed to realize the entire target by itself. For optimization,
the same repair must also fit below the cost threshold. Understanding
its \emph{extra} cost is the purpose of Section~\ref{ov:pricing}.

\subsection{LP pricing: separate the unavoidable cost from the repair}
\label{ov:pricing}

If the cost coefficients were already small, we could include the cost
constraint in the feasibility problem. Indeed, for integer costs,
$\sum_i c_ix_i<s$ is equivalent to
$\sum_i c_ix_i+y=s-1$ with a bounded nonnegative integer slack $y$.
This adds one equation whose coefficients are the costs, so a
feasibility method whose running time depends on the largest coefficient
would remain efficient only while the costs are small compared with
$N$. (We do not use this route.) The challenge is that the costs can be
arbitrarily larger than $N$: using them as another coordinate would make
the geometric problem arbitrarily large.

We therefore seek an equivalent way to measure cost that removes a large
common contribution. Assign a price to each unit of weight and subtract
the priced weight from every item cost. At a fixed target, the total
subtracted is the same for every feasible solution, so this changes
neither their ordering nor the optimal subsets. A suitable LP price will measure
only the extra cost of changing an individually preferred choice.

This does not promise that \emph{all} new costs are small. The algorithm
uses the changes that are cheap for the reserve and processes the
expensive changes separately. The following example first illustrates
the favorable situation: an arbitrarily large common cost disappears,
and the remaining rounding cost is just $1$.

\begin{samepage}
\paragraph{An equivalent objective exposes a cheap repair.}
Take three items:
\begin{center}
\begin{tabular}{lrrr}
\toprule
 & item 1 & item 2 & item 3\\
\midrule
Weight & 2 & 3 & 5\\
Cost & 199 & 301 & 500\\
\bottomrule
\end{tabular}
\end{center}
\end{samepage}
At target weight $5$, ask for a subset costing strictly less than $501$.
The fractional optimum takes item 1 and $3/5$ of
item 3, for cost $199+300=499$. Rounding down leaves only item 1.
Using item 2 as the repair supplies the missing weight $3$ and gives
total cost $199+301=500$. The repair item costs $301$, but the increase
over the LP optimum is only $1$: the LP had already paid $300$ for the
same missing weight. Figure~\ref{fig:lp-price} makes this comparison
directly.

% Embeddable TikZ figure. Required packages: amsmath, tikz, xcolor.
% Required TikZ library: arrows.meta.
\begin{figure}[htbp]
\centering
\begin{tikzpicture}[x=1cm,y=1cm,>=Stealth,
  every node/.style={font=\small,align=center}]
  \node[font=\small\bfseries] at (6.3,4.05)
    {Both rows have target weight $5$};

  % Widths encode the weight contributions: 2+3 in both rows.
  \node[anchor=east,align=right] at (2.55,3.13)
    {Fractional LP\\$\mathbf{x}=(1,0,3/5)$};
  \fill[blue!11] (2.85,2.7) rectangle (5.25,3.55);
  \draw[blue!65!black,thick] (2.85,2.7) rectangle (5.25,3.55);
  \node at (4.05,3.13) {weight $2$\\cost $199$};
  \fill[teal!9] (5.25,2.7) rectangle (8.85,3.55);
  \draw[teal!65!black,thick] (5.25,2.7) rectangle (8.85,3.55);
  \node at (7.05,3.13)
    {$3/5$ of weight-$5$ item\\weight $3$, cost $300$};
  \node[anchor=west,align=left] at (9.2,3.13)
    {Total cost $499$\\LP optimum};

  \node[anchor=east,align=right] at (2.55,1.73)
    {Binary repair\\$\mathbf{x}=(1,1,0)$};
  \fill[blue!11] (2.85,1.3) rectangle (5.25,2.15);
  \draw[blue!65!black,thick] (2.85,1.3) rectangle (5.25,2.15);
  \node at (4.05,1.73) {weight $2$\\cost $199$};
  \fill[orange!12] (5.25,1.3) rectangle (8.85,2.15);
  \draw[orange!80!black,thick] (5.25,1.3) rectangle (8.85,2.15);
  \node at (7.05,1.73)
    {reserve item\\weight $3$, cost $301$};
  \node[anchor=west,align=left] at (9.2,1.73)
    {Total cost $500<501$\\integral and affordable};

  \draw[->,orange!85!black,thick] (10.15,2.75)--(10.15,2.13)
    node[midway,right=5pt,font=\small\bfseries] {$+1$};

  \node[align=center] at (6.3,.5)
    {At the LP price $\pi^*=100$, the missing weight $3$ is priced at $300$.\\
     The reserve cost is $301=\pi^*\cdot3+1$: the repair adds only $1$.};
\end{tikzpicture}
\caption{An exact example of LP pricing with weights $(2,3,5)$,
costs $(199,301,500)$, target $5$, and strict allowance $501$.
The equal bar widths represent equal weight contributions. The price
$\pi^*=100$ per unit weight supports the LP optimum. The reserve pays
for weight that the fractional solution already budgeted, leaving a
surcharge of only $1$. This toy shows one repair, rather than the full
filled-reserve guarantee used by the algorithm.}
\label{fig:lp-price}
\end{figure}

An optimal LP price here is $\pi^*=100$ per unit of weight. Every selection
of weight $5$ pays the same priced amount $100\cdot5=500$. Subtracting
priced weight from each item cost leaves
\[
 199-100\cdot2=-1,\qquad
 301-100\cdot3=1,\qquad
 500-100\cdot5=0.
\]
The first item is better than the price predicts, the second is worse by
$1$, and the third agrees exactly. Thus we prefer to include item 1 and
omit item 2. Either choice for item 3 is equally good at this price.
For \emph{every} fractional or binary selection of weight $5$,
\[
 \text{cost}=500-x_1+x_2
            =499+(1-x_1)+x_2.
\]
The last two terms now tell us exactly how much more than the LP optimum
we pay: $1$ for omitting item 1 and $1$ for adding item 2. The repair above
adds item 2 and pays just $1$ extra. If the threshold is $501$, we may
spend strictly less than $501-499=2$ extra, so this repair succeeds.

The number $100$ is only for readability. Replace it by any large integer
$H$, and give the same three weights costs $2H-1,3H+1,5H$.
At target $5$, the common cost is $5H$, the fractional optimum is
$5H-1$, and the repaired subset costs $5H$. Subtracting the price $H$
again leaves $-1,1,0$ as the item costs. Thus the example addresses huge
costs, not just a small numerical instance. The reserve repairs weight
that the LP already supplied; only the difference between the actual
repair cost and that fractional payment consumes the room above the LP
optimum.

\paragraph{What an LP price does in general.}
Fix the decision query: find a binary selection of weight $\mathbf{t}$ and
original cost $\sum_i c_ix_i<s$, where $s$ is the given integer threshold.
To obtain a lower bound, first allow fractional choices and write
\[
 f(\mathbf{t})=\min\left\{\sum_i c_ix_i:\ \sum_i \mathbf{a}_ix_i=\mathbf{t},
                                      \ 0\le x_i\le1\right\}
\]
for their minimum cost, with value $+\infty$ if the fractional system
is infeasible. If $f(\mathbf{t})\ge s$, the binary query has no solution.
If $f(\mathbf{t})<s$, the query remains unresolved: a fractional solution can be
cheaper than every binary solution. The positive difference $s-f(\mathbf{t})$ is
the extra cost we may spend, strictly, above the LP optimum.

Start with any price vector $\boldsymbol{\pi}\in\R^d$.
The adjusted cost of item $i$ is $c_i-\boldsymbol{\pi}\mathbf{a}_i$, where the
product is the inner product defined above. Subtracting priced weight
preserves all cost comparisons at target $\mathbf{t}$: every feasible selection
has the same amount $\boldsymbol{\pi}\mathbf{t}$ subtracted from its total cost.
If we now temporarily drop the weight equation, the best adjusted choice
for each item contributes $\min\{0,c_i-\boldsymbol{\pi}\mathbf{a}_i\}$. Hence
\[
 \boldsymbol{\pi}\mathbf{t}+\sum_i\min\{0,c_i-\boldsymbol{\pi}\mathbf{a}_i\}
 \ \le\ f(\mathbf{t}).
\]
This is a lower bound for an arbitrary price. LP duality guarantees an
\emph{optimal price} $\boldsymbol{\pi}^{*}$ at the current target for
which the bound is attained:
\begin{equation}\label{ov:price}
 f(\mathbf{t})=\boldsymbol{\pi}^{*}\mathbf{t}
       +\sum_i\min\{0,c_i-\boldsymbol{\pi}^{*}\mathbf{a}_i\}.
\end{equation}
We use the star whenever this equality is needed. The price need not be
unique. Geometrically it is a supporting slope: for every feasible
displacement $\mathbf{v}$,
\[
 f(\mathbf{t}+\mathbf{v})\ge f(\mathbf{t})+\boldsymbol{\pi}^{*}\mathbf{v}.
\]
Thus $\boldsymbol{\pi}^{*}\mathbf{v}$ predicts the unavoidable linear part of
the cost of moving the target; the actual increase may be larger.

\paragraph{Preferred choices give a reference selection.}
Fix $\boldsymbol{\pi}^{*}$ and again set aside the weight equation for
a moment. Item $i$ has two binary choices. Omitting it contributes $0$
to the adjusted objective; including it contributes
$c_i-\boldsymbol{\pi}^{*}\mathbf{a}_i$. Choose the cheaper of these two options:
include the item when the adjusted cost is negative, and omit it when
the adjusted cost is positive. At a tie, use omission as a convention.
These are the \emph{preferred choices}; the word \emph{endpoint} simply
refers to choosing $0$ or $1$ in the interval $[0,1]$.

Doing this independently for all items produces a reference subset,
which need not have weight $\mathbf{t}$. In the three-item example,
$\pi^*=100$ gives adjusted costs $-1,1,0$, so the reference subset
contains only item 1 and has weight $2$, rather than $5$.
Because the price is optimal, an optimal fractional solution can keep
every non-tied preferred choice and adjust only tied items to reach
the target. Here it uses $3/5$ of tied item 3. This fractional completion
does not yet provide the binary solution we seek.

To obtain a binary solution, we may have to change some preferred
choices. Removing an item preferred for inclusion loses its adjusted
saving; adding an item preferred for omission incurs its adjusted cost.
Either change costs exactly $|c_i-\boldsymbol{\pi}^{*}\mathbf{a}_i|\ge0$.
For any binary subset of the required weight, equation~\eqref{ov:price}
therefore gives
\begin{equation}\label{ov:extra}
 \text{original cost}
 =f(\mathbf{t})+\sum_{\text{changed items }i}
                         |c_i-\boldsymbol{\pi}^{*}\mathbf{a}_i|.
\end{equation}
For fractional choices, the same formula counts the corresponding
fraction of each change. In our example, removing item 1 costs $1$ extra,
adding item 2 costs $1$ extra, and using any amount of item 3 costs $0$
extra. This recovers the earlier identity
$\text{cost}=499+(1-x_1)+x_2$ at weight $5$.

This identity is exact, but the numbers
$|c_i-\boldsymbol{\pi}^{*}\mathbf{a}_i|$ can still be huge. What matters is how
they compare with the available extra cost $s-f(\mathbf{t})$. The following two
uses connect prices to the recursion and the reserve.

\paragraph{First use: limit the expensive changes.}
As a separate example, take weights $1,2,3,4,5$. Each item costs $100$
times its weight, except that the weight-$1$ and weight-$5$ items each
cost an additional $3$. At target $8$, the fractional optimum is $800$:
take the weight-$2$ and weight-$3$ items and $3/4$ of the weight-$4$ item.
The optimal price $\pi^*=100$ leaves adjusted costs $3,0,0,0,3$.
With threshold $s=805$, we may spend strictly less than $5$ extra.
We can therefore select at most one of the two items carrying the
extra cost $3$. The subset with weights $3,5$ costs $803$ and passes;
the subset with weights $1,2,5$ has the same weight $8$ but costs $806$
and fails. Thus the price identity limits changes even among selections
that satisfy the weight equation.

In general, the number of expensive changes is bounded by the available
extra cost divided by the smallest cost of such a change. If one change
costs at least the whole available amount, that choice is fixed in every
successful binary solution. The algorithm processes the possible
expensive changes with exact weight tables, storing their costs as
values rather than enumerating cost as another coordinate.

\paragraph{Second use: make every reserve repair cheap.}
Return to the original three-item example, where $f(5)=499$ and $s=501$.
Rounding the LP leaves the weight-$2$ item. Adding the reserve item of
weight $3$ costs $301$ in the original objective, but changes a preferred
omission at an extra cost of only $1$. The repaired subset costs
$499+1=500<501$. This is the favorable repair illustrated in
Figure~\ref{fig:lp-price}.

For the algorithm we need a guarantee for the \emph{whole reserve},
because the required correction is not known in advance. For example,
suppose the available extra cost is $10$ and a reserve has three items,
each costing at most $1$ to change from its preferred choice. Every
selection of reserve changes then costs at most $3$. If the rounded bulk
uses less than $7$ extra, any correction supplied by this reserve fits
below the threshold. The filling property from
Section~\ref{ov:feasibility} supplies the needed weight correction;
the bound $3$ makes it affordable. This is why the algorithm chooses the
cheapness threshold with the total reserve size in mind.

The convenient binary variable now records whether a preferred choice
is changed. For a preferred inclusion, a change removes the item, so
its weight is negated; for a preferred omission, a change adds the item.
All change costs are nonnegative. Rounding down these fractional
\emph{change variables} cannot increase their cost, and the reserve
supplies the missing weight.

Thus prices identify where the loss from integrality is paid. They do
not alter the optimization problem or erase the loss. The exact dynamic
programs continue to store original costs and actual item selections.
Formally, the price identity is Remark~\ref{rem:anchor-allowance}, the
normalization is \eqref{eq:normalize}, and the bound on expensive changes
is Lemma~\ref{lem:drift}.

\subsection{The affordable interior/boundary test}
\label{ov:affordability}

We now adapt the interior-easy, boundary-reduce framework recalled in
Section~\ref{ov:feasibility}. Feasibility measures the room around the
target inside the fractional weight region $Z$. Section~\ref{ov:pricing}
explained why optimization must also account for the cost of moving
within that region. A target can be far from the
boundary of $Z$ but have almost no affordable freedom: moving away from
it may force costly changes.

\paragraph{What should ``affordable movement'' measure?}
The LP minimum at the current target is $f(\mathbf{t})$, and the query leaves
$s-f(\mathbf{t})$ extra to spend. For a displacement $\mathbf{u}$, compare the best
fractional selections at the two opposite targets $\mathbf{t}+\mathbf{u}$ and $\mathbf{t}-\mathbf{u}$.
We include the displacement when their \emph{average} cost is at most
$s$. This gives a symmetric neighborhood of zero,
\begin{equation}\label{ov:Q}
 Q=\{\mathbf{u}:f(\mathbf{t}+\mathbf{u})+f(\mathbf{t}-\mathbf{u})\le2s\}.
\end{equation}
If either target is fractionally infeasible, its cost is $+\infty$ and
the displacement is excluded. Equivalently, the test asks whether
\[
 \underbrace{f(\mathbf{t}+\mathbf{u})+f(\mathbf{t}-\mathbf{u})-2f(\mathbf{t})}_{\text{extra cost of the two opposite movements}}
 \ \le\ 2\bigl(s-f(\mathbf{t})\bigr).
\]
Thus $Q$ measures two-sided flexibility beyond the fractional optimum.
It is a comparison region for the proof; it does not assert that each
neighboring target individually has a solution below the original
threshold.

Why take an average instead of requiring both costs to be at most $s$?
Moving in one direction changes the priced weight by
$\boldsymbol{\pi}^{*}\mathbf{u}$, while the opposite movement changes it by
$-\boldsymbol{\pi}^{*}\mathbf{u}$. These terms cancel in the sum. Consequently,
subtracting an LP price leaves the above extra-cost test unchanged.
The test measures the freedom left after the unavoidable linear cost has
been removed. It also treats opposite directions symmetrically, which
is what the boundary comparison with a reference LP needs.

\paragraph{Reading the affordable neighborhood.}
Take two items of weights $2,-2$ and costs $8,4$, respectively. Fractional
weights range over $[-2,2]$. To reach a nonnegative weight $w$, we take
$w/2$ of the positive item and pay $4w$; to reach a nonpositive weight,
we take $-w/2$ of the negative item and pay $-2w$. Hence
\[
 f(w)=\begin{cases}4w,&0\le w\le2,\\-2w,&-2\le w\le0.\end{cases}
\]
At target $0$ the LP costs $0$. With allowance $s=3$, a displacement of
$1$ gives opposite costs $f(1)=4$ and $f(-1)=2$. One exceeds $3$, but
their average equals $3$, so the displacement belongs to $Q$. More
generally, the average cost at $w,-w$ is $3|w|$. It follows that
$Q=[-1,1]$. With allowance $s=1$, the same calculation gives
$Q=[-1/3,1/3]$. The weight region is unchanged; the room left by the cost
constraint has shrunk.

\begin{samepage}
\paragraph{Two complementary cases.}
Interior must be measured relative to the items: a neighborhood that
looks large in absolute units may still be too small to absorb their
rounding errors. Fix a sufficiently large number, depending on $N$ and
$d$ but not on the costs. The proof chooses it large enough to leave
room for the reserve, the exceptional items, and the rounding repairs.
We use this number in the following complementary test.
\begin{itemize}
\item In the \emph{interior case}, $Q$ contains that number times the
convex hull of the item vectors and their negatives. There is enough room
in every item direction to construct cheap structure and try reserve
repair (Section~\ref{ov:interior}).
\item In the \emph{boundary case}, this containment fails. A separating
direction then certifies insufficient room relative to the items, and
the algorithm uses it to shrink the remaining weight problem
(Section~\ref{ov:boundary}).
\end{itemize}
\end{samepage}
Here ``boundary'' includes a neighborhood of the boundary at the
chosen scale, not only points exactly on its surface.
At an actual face of the weight zonotope, some choices are forced;
handling them first reduces the weight dimension. The discussion below
concerns the remaining targets.

\paragraph{Correspondence with the formal proof.}
In Sections~\ref{sec:buffers}--\ref{sec:lift} the test is made not at
every target separately but at \emph{anchors}: nearby queries of similar
gap are grouped around one anchor $\vv$ (Lemma~\ref{lem:packing}), and all of
them share its neighbourhood $Q_I(\vv)$, the body $Q$ above. The interior
case is a \emph{deep} anchor, $Q_I(\vv)\supseteq KP_I$, treated in
Sections~\ref{sec:depth}--\ref{sec:lift}; the boundary case is a
\emph{thin} anchor (Lemma~\ref{lem:thin}); and targets on faces of the
zonotope are reduced first by Lemma~\ref{lem:boundary}.

\subsubsection{Boundary branch: few large changes, then a smaller problem}
\label{ov:boundary}

This branch follows the same geometric strategy as the feasibility
test: choose a fractional reference, process the items that are large
in a restricted direction, and recurse on the smaller hull of the
surviving items. Optimization changes the reference and the reason that
only a few large choices can change. We use a minimum-cost LP solution
as the reference and measure available room in $Q$, so that the
deviation bound applies to solutions satisfying the cost threshold.

The basic intuition is again that a tight region leaves little freedom.
A binary solution
below the threshold cannot differ from it too much in a direction in
which affordable movement is scarce. The proof makes this stronger than
a bound on the final displacement: it bounds the \emph{sum of the
absolute changes}, measuring each item against the available room in
$Q$.

Consequently, among items large relative to that room, only a controlled
number can change their reference choices. The few fractional variables
of the reference LP are handled separately. An exact dynamic program
processes the large items and records the possible changed weights with
their least original costs. After removing these items, every survivor
is small relative to $Q$. Since $Q$ was thin in some direction, the
survivor hull is much thinner in that direction than the original item
hull; see Figure~\ref{fig:boundary}.

\begin{figure}[htbp]
\centering
\begin{tikzpicture}[x=1cm,y=1cm,>=Stealth,
  every node/.style={font=\small}]
  \begin{scope}[shift={(2.3,0)},x=.9cm,y=.9cm]
    \fill[gray!5] (1.2,1.6)--(-1.8,1.4)--(-2.1,-.6)--(-1.2,-1.6)
      --(1.8,-1.4)--(2.1,.6)--cycle;
    \draw[gray!65] (1.2,1.6)--(-1.8,1.4)--(-2.1,-.6)--(-1.2,-1.6)
      --(1.8,-1.4)--(2.1,.6)--cycle;
    \fill[teal!10] (-1.7,-.3) rectangle (1.7,.3);
    \draw[teal!65!black,dashed] (-1.7,-.3) rectangle (1.7,.3);
    \foreach \xx/\yy in {1.2/1.6,-1.8/1.4,-2.1/-.6,-1.2/-1.6,1.8/-1.4,2.1/.6}
      \fill[orange!85!black] (\xx,\yy) circle (2.5pt);
    \foreach \xx/\yy in {-1.4/.05,-.8/.23,.7/.2,1.4/-.05,.8/-.23,-.7/-.2}
      \fill[teal!65!black] (\xx,\yy) circle (2.5pt);
    \node at (0,2.5) {Original item hull};
    \draw[->,gray] (0,-1.8)--(0,1.95) ;
    \node[align=center] at (0,-2.35)
      {Orange columns: handled by exact DP\\around their reference LP endpoints.};
  \end{scope}
  \draw[->,blue!65!black,thick] (4.75,0)--(7.1,0)
    node[midway,above,align=center] {keep small\\columns};
  \begin{scope}[shift={(9.6,0)},x=.9cm,y=.9cm]
    \draw[gray!45,dashed] (1.2,1.6)--(-1.8,1.4)--(-2.1,-.6)--(-1.2,-1.6)
      --(1.8,-1.4)--(2.1,.6)--cycle;
    \fill[teal!12] (-1.4,.05)--(-.8,.23)--(.7,.2)--(1.4,-.05)
      --(.8,-.23)--(-.7,-.2)--cycle;
    \draw[teal!65!black,thick] (-1.4,.05)--(-.8,.23)--(.7,.2)--(1.4,-.05)
      --(.8,-.23)--(-.7,-.2)--cycle;
    \draw[teal!50,dashed] (-2.3,.3)--(2.3,.3);
    \draw[teal!50,dashed] (-2.3,-.3)--(2.3,-.3);
    \foreach \xx/\yy in {-1.4/.05,-.8/.23,.7/.2,1.4/-.05,.8/-.23,-.7/-.2}
      \fill[teal!65!black] (\xx,\yy) circle (2.5pt);
    \node at (0,2.5) {Survivor hull in a thin slab};
    \node[align=center] at (0,-2.35)
      {The remaining items occupy\\a much thinner region.};
  \end{scope}
\end{tikzpicture}
\caption{The boundary branch of Section~\ref{ov:affordability}. The target has little affordable
room in the vertical direction. Items large relative to that room
(orange) are processed by an exact table of the possible changes.
The survivors (teal) lie in a much thinner hull. Shapes and scales are
schematic; the progress comes from shrinking the item hull.}
\label{fig:boundary}
\end{figure}

There are two reasons to insist on the stronger absolute-change bound.
Signed item weights can cancel, so a small final displacement alone does
not bound the intermediate states of a dynamic program. Bounding the
sum of absolute contributions controls those states as well. Also,
progress must concern the remaining weights, not just a smaller cost
allowance; the latter could still leave an arbitrarily long sequence of
cost thresholds to process.

\paragraph{How cost changes the exact weight table.}
In feasibility, a table entry answers whether the items processed so far
can realize a given weight. In optimization, the weight states are the
same, but a yes/no answer loses information: two selections of the same
weight can leave different amounts to spend on the remaining items.
We therefore store the \emph{least original cost} at each weight,
together with an actual selection attaining it.

The item-by-item update still has the same two choices: omit the next
item or select it. If $D(\mathbf{u})$ is the stored cost at weight $\mathbf{u}$, the update
for item $i$ is
\[
 D_{\mathrm{new}}(\mathbf{u})=\min\{D_{\mathrm{old}}(\mathbf{u}),\,
                         D_{\mathrm{old}}(\mathbf{u}-\mathbf{a}_i)+c_i\},
\]
with $+\infty$ for an unreachable weight. The two alternatives use only
the old table, so the item can be selected at most once. Feasibility
would combine the two reachability answers by ``or''; optimization
keeps the cheaper of the two actual selections. The geometric deviation
bound determines the weight region that must be stored. Costs change
the value at each state, rather than adding a new array coordinate.

For example, take item weights $2,3,5,7$, with costs $3,4,11,6$,
target $12$, and threshold $15$. After processing the first three
items, weight $5$ has two witnesses: the items of weights $2,3$ cost
$7$, whereas the single weight-$5$ item costs $11$. Retain the former.
Both leave the same untouched item and residual target $7$, but their
remaining allowances are $15-7=8$ and $15-11=4$. The remaining item
costs $6$: the retained witness gives total cost $13<15$, while the
other gives $17$ and fails.

This explains why keeping only the cheapest witness is exact. At the
same processed weight and with the same remaining item pool, any
completion that works with a more expensive witness also works with
the cheapest one. After the selected items are removed, subtract their
weight from the target and their original cost from the threshold.
We then test the resulting residual query, including whether its
fractional minimum lies below its remaining threshold. For a batch of
queries, equal residual targets on the same pool are merged by keeping
the largest remaining allowance and its attaining witness. The recursive
question is whether \emph{any} retained residual query has a completion.
Thus the added operations are cost addition, minimum comparisons, and
threshold updates; even huge costs are stored as values, not enumerated.

\subsubsection{Interior branch: try a cheap repair, then expose what blocks it}
\label{ov:interior}

Return now to the interior branch: the affordable neighborhood has ample
room in all current weight directions. Our aim is to turn this room
into an affordable binary solution, or to obtain a smaller problem for
recursion. The three stages below have different roles. Stage 1 extracts
the structure needed for repair. Stage 2 either passes a suitable residual
query to stage 3, rejects the current branch, or recurses on a proper
lattice. Stage 3 either constructs a solution or adds a thin coordinate
and recurses. In particular, stage 2 need not reach stage 3.

\paragraph{1. Prepare a cheap core and reserve.}
In feasibility, the common-progression construction from
Section~\ref{ov:feasibility} separates a small exceptional set, a reserve,
and a bulk. It gives the bulk and reserve common integer coordinates,
and the reserve fills a box large enough to correct the error from
rounding a fractional bulk solution. Thus, once a suitable integer
coordinate target has been found, the construction turns it into an
actual subset of items.

For optimization, we want to use exactly this repair while controlling
its cost. We therefore apply the same construction \emph{only to cheap
items}, where cheap means a small cost of changing from the preferred
choice at the fixed LP price, not a small original item cost. We choose
the per-item cutoff so that the maximum reserve size times this cutoff
is a small part of the extra-cost allowance. Then \emph{every} reserve
subset the repair might request is affordable.

We cannot simply discard all other items and assume that enough
structure remains. We first extract a cheap \emph{core} with two further
properties. It contains all but a small number of the cheap items and
retains enough affordable interior room for the later fractional
completion. Also, deleting any sufficiently small set leaves its
generated lattice unchanged. In particular, setting aside the exceptional
items and reserve leaves the bulk generating the same lattice.
A lattice here consists of all integer combinations of the item vectors.
The core has the same real span as the current pool, but its lattice may
be smaller; stage 2 will deal with that arithmetic restriction.

Why does enough structure remain after imposing cheapness? A selection
with bounded change cost can use only a small total amount of expensive
changes: at most its allowance divided by the expensive-item cutoff.
Removing these changes therefore moves its weight by only a controlled
multiple of the item hull. The affordable neighborhood in the interior
case is wide enough to absorb this loss. Consequently, the cheap items
still leave a large inner region of weights with low-cost fractional
representations. This is where the stronger, cost-aware notion of
interior is used.

The deletion property can also be enforced while losing few cheap items.
Whenever deleting a permitted small set changes the generated lattice,
make that deletion and restore every original cheap item whose weight
lies in the new lattice. Each restriction excludes few cheap items.
The available interior room ensures that the resulting lattices still
have the same real span.

To see why this cannot continue for long, the \emph{index} of a smaller
lattice in a larger one is the number of residue classes of the larger
lattice modulo the smaller one. Two points belong to the same class
when their difference lies in the smaller lattice.
For example, $2\Z$ has index $2$ in $\Z$, corresponding to even and odd
integers; $4\Z$ has index $4$ in $\Z$. Passing from $2\Z$ to $4\Z$
again makes the lattice twice as sparse. In general, each proper
restriction of the same rank multiplies the index \emph{relative to the
fixed starting lattice} by an integer at least two. The volume bound in
Section~\ref{ov:accounting} therefore allows only logarithmically many
restrictions, and the total number of excluded cheap items stays small.

The proof implements these restrictions efficiently and produces the
unique largest core with the deletion property. This canonical choice
matters later: enlarging the cheap pool can only enlarge its core, which
lets different recursive branches share their accounting. Applying the
feasibility construction to this core now gives the desired bulk and
cheap reserve. Stage 1 prepares the repair; it does not yet solve the
target query.

\paragraph{2. Pay for the outside choices and check what remains.}
In feasibility, once a bulk and reserve have been set aside, we process
the other items exactly. For each possible choice among those items,
we subtract the selected weights from the target. The question left
for the bulk and reserve is then simply whether they can supply this
\emph{remaining weight}. This is the residual target.

Here the items processed are every item outside the cheap core, together
with the exceptional items returned by the progression construction.
Fixing one such choice means deciding, for each processed item, whether
it is selected or omitted; all processed items are then removed from
the remaining pool. Only selected items contribute to the subtraction.
Optimization also subtracts their \emph{original costs} from the original
threshold. The residual question is therefore concrete: can the untouched
bulk and reserve supply the remaining weight at a cost strictly below
the remaining threshold? A solution of this question, joined to the
stored outside selection, is a solution of the original query.

These outside choices matter even in feasibility, because the core's
lattice can be smaller than the lattice of all items. For example,
an even core cannot reach an odd target by itself. Suppose the cheap
pool is $\{3,4,6,8,10\}$ and its core must survive one deletion.
The canonical core is $\{4,6,8,10\}$: it still generates $2\Z$ after
any one item is removed, whereas the lone odd item $3$ cannot survive
that requirement. The algorithm therefore processes item $3$ separately.
For target $17$, omitting it leaves the odd target $17$, so that branch
is discarded. Selecting it leaves target $14$ and subtracts its cost
from the threshold. The survivors can supply $14$ as $4+10$ or $6+8$;
optimization must also check whether the chosen completion fits the
remaining threshold.

We represent all these outside choices by the exact weight table from
the boundary branch. Feasibility records reachable weights; optimization
keeps the least original cost and an actual selection at each weight.
As explained there, a more expensive selection at the same weight can
never help the untouched items complete the solution. The bound on
expensive changes and the small number of cheap exceptions keep the
necessary weight states in a bounded region. The table includes all
potentially successful choices, including choices of expensive items
whose weights happen to lie in the core lattice.

Discard residuals outside the core lattice, without fractional
completions, or with fractional minimum at least their remaining
threshold. The first two checks are the familiar arithmetic and
fractional-feasibility checks; the last also considers cost.

Passing these checks still does not guarantee that rounding and repair
will be affordable. For that, return to the LP price interpretation from
Section~\ref{ov:pricing}. The original query allows total change cost
less than $s-f(\mathbf{t})$. Each fixed outside choice uses some of this allowance
if it differs from its preferred choice. Subtract those nonnegative
charges; what is left must pay for \emph{both} the bulk changes and the
reserve repair. For example, an initial allowance of $10$ leaves $7$
after outside changes charged $3$, but only $1$ after changes charged
$9$. The first leaves much more room for bulk changes and repair.
The second might still have an affordable completion, but need not
leave the margin required by our repair argument.
Throughout this comparison the LP price is fixed:
we are accounting for how the original allowance is spent, not taking
the gap above a newly optimized residual LP.

This trial has three possible outcomes, which are the three cases of
Lemma~\ref{lem:high-low}. Only the second passes to stage 3 in the current
call.

\emph{No candidate survives: reject this branch.}
Every affordable binary completion would give a table entry passing
the necessary filters, because the trial table is complete. If none survives, this interior branch has no
solution. Other branches, if present, continue independently.

\emph{Some candidate keeps enough allowance: proceed to stage 3.}
If a compatible candidate retains a fixed fraction of the reference
allowance, such as one quarter, select one for the geometric test.
The whole reserve was made much cheaper than this amount, leaving room
for both bulk choices and repair. Compatibility and a cheap fractional
completion do not yet give a binary solution; that is the task of stage 3.

\emph{Candidates survive, but all have little allowance: recurse on
a proper lattice.}
The trial gives no suitable repair target. We restore every current
item whose weight belongs to the core lattice, including expensive
items and former exceptional items. Recompute the exact table using
only items outside this lattice, so its witnesses use no restored items.
The remaining pool now contains every item permitted by the lattice
restriction, with its original cost.

Why must this be a proper restriction? If the core lattice were the
whole current lattice, all items would be restored and no outside
choices would be needed. In the trial, all additionally processed items
could be left at their preferred choices, using zero extra cost.
The retained interior room guarantees a cheap fractional completion
on the bulk and reserve. We would therefore have found a compatible
trial entry with substantial allowance, contrary to this case.
We recurse on a strictly smaller lattice and skip stage 3 in the
current call. Restricting to a proper lattice is a familiar kind of
feasibility reduction. The new argument is that \emph{insufficient
remaining cost allowance} forces this arithmetic progress.

\paragraph{3. Return a solution or add a thin coordinate.}
In feasibility, the common progression lets us look for an integer
coordinate point representing the residual target. If we find one
reachable by the reference reserve subset plus a fractional bulk
selection, rounding and the reserve give an actual solution.
A thin direction in the coordinate region provides
the alternative: it reveals a restriction hidden by the old weights.
Retaining that direction as an additional integer equation gives a
smaller problem for recursion. Inside the progression coordinates,
we again solve interior targets and reduce boundary targets.

Optimization uses the same two outcomes. Stage 2 has supplied a target
with substantial extra-cost allowance, and stage 1 has supplied the
cheap reserve. We now look for a fractional bulk representation using
only a small part of that allowance, leaving enough for repair.
If the resulting low-cost region instead yields a thin direction, we
must show that it reduces \emph{every} relevant affordable completion,
including those spending more than the small amount used in the test.
Thus a successful test returns an affordable solution; a thin outcome
must preserve the whole trial table for recursion.

Let $\phi$ map common-progression coordinates back to the old weights,
and let $\mathbf{t}$ now denote the selected residual target. The equation
$\phi\mathbf{z}=\mathbf{t}$ defines its \emph{fiber}: different coordinate points can
represent the same old weight. Fix the certified reference subset of
the reserve and consider the points on this fiber that it can reach
together with a fractional bulk selection using a small fraction of
the remaining allowance. Affordable interior and the properties from
stage 1 ensure that this test region is nonempty. A fixed-dimensional
lattice procedure either finds an integer point there or finds an
integer linear coordinate varying over a short interval there, whose
value is not already fixed by the old weight equations.

\emph{Integer-point outcome: return an affordable solution.}
At the returned coordinate point, minimize the bulk change cost, take
an LP vertex, round down its few fractional variables, and use the reserve
to supply the missing coordinate vector. This is the feasibility repair.
The added cost guarantee is that the change costs are nonnegative, so
rounding down cannot increase the bulk cost, and every reserve subset
is cheap. The completed selection therefore fits the residual threshold.
Combine it with the stored outside-item witness and return a solution
to the original query.

\emph{Thin-direction outcome: add a coordinate and recurse.}
A thin test region need not be free of integer points, and it does not
rule out a solution spending more. Its value is that it reveals a
direction hidden by the old weight map. For example, $\phi(x,y)=x+y$
identifies the entire line $x+y=1$ with the same old weight $1$.
A short segment of that line can lie between consecutive integer
points, a restriction that the old weight alone cannot see
(Figure~\ref{fig:fiber}).

\begin{figure}[htbp]
\centering
\begin{tikzpicture}[x=1cm,y=1cm,>=Stealth,
  every node/.style={font=\small}]
  \begin{scope}[shift={(1.1,.2)},x=2.6cm,y=2.6cm]
    \draw[step=1,gray!18] (-.1,-.1) grid (1.1,1.1);
    \draw[->,gray] (-.15,0)--(1.35,0) node[right] {$x$};
    \draw[->,gray] (0,-.15)--(0,1.35) node[above] {$y$};
    \draw[gray!55,dashed] (-.15,1.15)--(1.15,-.15);
    \draw[blue!65!black,line width=4pt] (.25,.75)--(.75,.25);
    \draw[blue!65!black,fill=white] (.25,.75) circle (1.7pt);
    \draw[blue!65!black,fill=white] (.75,.25) circle (1.7pt);
    \fill[black!70] (0,1) circle (2.4pt);
    \fill[black!70] (1,0) circle (2.4pt);
    \node[above left] at (0,1) {$(0,1)$};
    \node[below] at (1,0) {$(1,0)$};
    \node[above right] at (.25,.75) {$K$};
    \node[align=center] at (.5,1.65)
      {$\phi(x,y)=x+y=1$\\
       $K=\{(x,1-x):\frac14\le x\le\frac34\}$};
  \end{scope}
  \draw[->,blue!65!black,thick] (5.05,1.65)--(7.25,1.65)
    node[midway,above,align=center] {new coordinate\\$x$};
  \begin{scope}[shift={(7.9,1.65)},x=3cm,y=1cm]
    \draw[->,gray] (-.15,0)--(1.35,0) node[right] {$x$};
    \draw[blue!65!black,line width=4pt] (.25,0)--(.75,0);
    \foreach \xx/\lab in {0/0,1/1} {
      \fill[black!70] (\xx,0) circle (2.5pt);
      \node[below] at (\xx,-.12) {$\lab$};
    }
    \draw[decorate,decoration={brace,amplitude=4pt,mirror}]
      (.25,-.45)--(.75,-.45)
      node[midway,below=5pt] {width $\frac12$};
    \node[align=center] at (.5,1.48)
      {A thin interval in an integer coordinate\\
       $\frac14\le x\le\frac34$.};
  \end{scope}
\end{tikzpicture}
\caption{A boundary hidden inside the interior branch. All points
on the line have the same old weight $x+y=1$. The short blue segment
between integer points is visible only if we retain another coordinate,
such as $x$. In the algorithm, a thin direction found at a small budget
is extended to cover every relevant completion before the new equation
is imposed. A failed small-budget repair alone does not reject the query.}
\label{fig:fiber}
\end{figure}

The additional optimization argument extends this thinness from the
small test region to all relevant affordable completions. Interior
room lets a small set of bulk choices be prescribed while the other
items still complete fractionally at low cost. If many bulk items
had large effects in the discovered direction, prescribing those
choices would force a large variation inside the test fiber,
contradicting thinness. LP duality gives a quantitative bound that
accounts for both old weights and cost.

As in feasibility, process the resulting small exceptional set and the
reserve, whose columns need not be thin, by an exact table. Here the
table retains minimum costs as before. The surviving columns are then
thin in the discovered coordinate, relative to a linear function of
the old coordinates. Express the new coordinate in terms of the
original item choices, append it to the surviving weights, and retain
their original costs. Above every old residual target, only
a dimension-dependent number of new integer values remain possible;
we keep all of them. This common reduction covers the \emph{whole trial
table}. The candidate selected in stage 2 is used to discover the
coordinate, not to discard other possible solutions. We now recurse
with one additional thin coordinate; the volume argument below explains
the progress.

Thus the interior branch either returns a solution, ends an infeasible
branch, recurses on a proper lattice in stage 2, or recurses with a
thin new coordinate in stage 3. Its geometric construction and repair
follow the feasibility framework. Cheap-core extraction, the test of
remaining allowance, and extension to all affordable completions are
what make that framework work for optimization.

\subsection{Why these reductions control the worst case}
\label{ov:accounting}

The local picture is now complete. An unresolved query leaves a smaller
item hull, a lower-dimensional problem, a proper lattice, or a problem
with one additional but very thin coordinate. We still have to explain
why these outcomes do not generate too much total work.

\paragraph{Measure size in the items' own lattice.}
Take the symmetric convex hull of the remaining item vectors and measure
its volume in units of a fundamental cell of their generated lattice.
With a dimension-dependent normalization, this quantity is at least $1$
for every nontrivial item pool. Initially it is at most $C_dN^d$.
Shrinking the hull decreases it. Passing to a proper sublattice also
decreases it, because a lattice cell becomes larger. Rank reductions
are compatible with the same measure.

Adding a coordinate might seem to undo this progress. Here it does not:
after subtracting a real linear function of the old coordinates for the
analysis, the new hull lies in a very thin prism over the old one. The
new coordinate is integer, so the new lattice cannot have an arbitrarily
small vertical spacing. Its volume in lattice units decreases by a
small power of $N$. Starting from $N^{O(d)}$ and staying at least $1$
allows only $O(d)$ such additions. The stored dimension remains $O(d)$;
the full temporary progression dimension is never appended at once.

\paragraph{Charge all children together.}
A short path does not suffice: many paths could repeat the same work.
The proof therefore groups nearby residual queries and gives each group
a small affordable neighborhood to pay for its computation. When actual
items are selected and removed, every fractional completion of the child
extends to one of the parent with exactly those weight and cost offsets.
Child affordable neighborhoods consequently fit inside the corresponding
parent geometry. The absolute-change bounds also control the intermediate
states of the removed-item tables.

Arithmetic branches require an additional sharing argument. At a parent
state charged by several cores, LP duality and local price stability let
those cores be compared
using one common supporting price. The canonical core lattices are then
nested; identical remaining pools are merged. Their distinct indices
relative to the same parent lattice grow at least as $2,4,8,\ldots$,
so charging a squared lattice-volume
factor gives a convergent geometric sum. This is why canonical cores,
rather than arbitrary cheap subsets, matter to the worst-case bound.
The formal potential (Proposition~\ref{prop:contraction}) combines these
savings to control the entire recursion tree.

\begin{samepage}
\paragraph{Keep geometric size separate from cost magnitude.}
Thin integer coordinates can have large coefficients. After a reduction,
we therefore change to a basis of the full generated lattice and put the
items in a coordinate box whose \emph{product} of side lengths is bounded
by the volume measure. This keeps the geometric encodings within $N^{O(d)}$;
bounding each side separately by the full volume could instead cost
$N^{O(d^2)}$.

Throughout, exact original costs are values attached to these weight
states. They are not enumerated. The proof groups positive gaps
$s-f(\mathbf{t})$ into factor-two ranges; bounds on LP denominators imply only
logarithmically many such ranges in the input magnitudes. Combining these
ranges with the shared recursion accounting gives a running time
$C_d[N(1+\log(c_{\max}+1))]^{O(d)}$
(Proposition~\ref{prop:binary-banded}). To remove the
$d$-dependent power of the cost bit length, Section~\ref{sec:profit-precision}
uses a single packing ordered by decreasing gap instead of the ranges.
Its locality is only one-sided, since a query may have a much smaller
gap than its anchor, but this suffices for every reduction, and the
recursion potential becomes independent of the costs. This gives the
bound of Theorem~\ref{thm:binary}.
\par
\end{samepage}

%======================================================================
\section{Preliminaries}\label{sec:prelim}

\subsection{Notation}

All item collections are \emph{labelled}: equal vectors still represent
different selectable items. Vectors (including points, targets, prices,
and linear functionals) are written in bold, and their scalar components
in ordinary type; matrices and linear maps are not bold. For a pool $I$
with columns $\va_i$ and costs $c_i$ write $A_X \vx$ and $\vc_X^{\T}\vx$ for
sums over $X\subseteq I$, and
\[
 Z(I)=\sum_{i\in I}[\vzero,\va_i],\qquad
 P_I=\conv\{\pm \va_i:i\in I\},\qquad
 \Lambda_I=\lat\{\va_i:i\in I\},\qquad r=\rank I .
\]
Volumes and covolumes are measured in the real span of the pool. For a
compact convex $K$, $h_K(\vlambda)=\max_{\vz\in K}\vlambda\vz$ is its support
function; for a symmetric body $K$, $\|\cdot\|_K$ is its gauge. A
\emph{supporting price} at $\vt$ is a subgradient (a row vector) of a
convex function at $\vt$. Symbols $C_d,f(d)$ denote quantities depending only on $d$ and may
change between occurrences; $C$ denotes an absolute constant.
Table~\ref{tab:notation} lists the principal symbols.

\begin{table}[ht]
\centering\small
\begin{tabular}{@{}ll@{}}
\toprule
Symbol & Meaning\\
\midrule
$I$, $X$, $R$ & item pool; removed (transferred) items; remaining items or core\\
$\vq=(\vt,s)$, $\cW$ & strict query (target, allowance); disjunctive batch\\
$f_I$, $g(\vq)$ & fractional value function; gap $s-f_I(\vt)$\\
$M_I(\vq)$, $Q_I(\vq)$ & Macbeath body in the epigraph; its weight projection\\
$\Psi(I)$ & normalized volume \eqref{eq:psi}\\
$\eps_j,H_j$; $w$ & buffer scales at arithmetic depth $j$; buffered lattice mass\\
$\cL$ & number of factor-two gap bands\\
$\alpha,\gamma=4\alpha$ & packing scale; anchor comparability scale\\
$K$ & thin/deep threshold\\
$\vv=(\vt_v,s_v)$, $g_v$, $\vpi$ & anchor, its gap, a supporting price at the anchor\\
$e_i(\vpi),\sigma_i,\bar\va_i,\bar c_i$ & preferred endpoint, sign, working column, normalized cost \eqref{eq:normalize}\\
$\cT_Y(b)$ & affordable body of pool $Y$ at normalized budget $b$\\
$\cR_\kappa$, $\kappa$, $k_0$, $J$ & canonical resilient core; resilience; outlier bound; depth bound\\
$\varrho=g_v/32$ & cheap budget\\
$E$ & drift bound (Lemma~\ref{lem:drift})\\
$\nu,k,\vu_i,\phi,\Bx$ & GAP cardinality scale, GAP rank, coordinates, decoding map, box\\
$I_{\rm out},I_{\rm res},I_{\rm rem}$ & GAP outliers, reserve, remainder\\
$\vpsi$, $W$ & thin integral functional; its width bound\\
$\T_{\rm lift}$, $D$ & lift budget; stored-rank bound\\
$\Phi,\Omega,Z$ & potential and its weights\\
\bottomrule
\end{tabular}
\caption{Principal notation.}\label{tab:notation}
\end{table}

\subsection{Strict queries, batches, and exact transfer}\label{sec:queries}

A strict query on pool $I$ asks for
\[
 A_I\vx=\vt,\qquad \vc_I^{\T}\vx<s,\qquad \vx\in\{0,1\}^I,
\]
with integer $\vt$, $s$, and costs. A recursive \emph{state} stores: the
labelled pool $I$, current integer columns in the original orientation of
the items, original costs, the lattice $\Lambda_I$, a finite
\emph{disjunctive batch} $\cW$ of queries (it suffices to solve one of
them), the arithmetic depth $j$, the lift count $\ell$, and witness
pointers recording the assignment of all removed items. Among queries with
the same target only the largest allowance is kept, together with one
source attaining it; this is valid for a disjunctive batch. A query is
discarded if $\vt\notin\Lambda_I$, if $\vt\notin Z(I)$, or if $s\le f_I(\vt)$.

\begin{lemma}[Exact transfer and dominance]\label{lem:transfer}
Let $X\subseteq I$, $R=I\setminus X$, and $\vY\in\{0,1\}^X$. Binary or
fractional solutions of the residual query
$(\vt-A_X\vY,\ s-\vc_X^{\T}\vY)$ on $R$ extend by $\vY$ to solutions of $(\vt,s)$ on
$I$; in particular
$(A_X\vY,\vc_X^{\T}\vY)+\epi f_R\subseteq\epi f_I$. If $\vY'$ is another
assignment of $X$ with $A_X\vY'=A_X\vY$ and $\vc_X^{\T}\vY'\le \vc_X^{\T}\vY$, every
completion of $\vY$ is a completion of $\vY'$.
\end{lemma}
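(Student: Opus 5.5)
The proof is a direct verification from the definitions, split into three parts: extension of solutions, the epigraph inclusion, and dominance.

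\emph{Extension.} Let $\vx_R\in[0,1]^R$ (binary or fractional) satisfy $A_R\vx_R=\vt-A_X\vY$ and $\vc_R^{\T}\vx_R<s-\vc_X^{\T}\vY$. Define $\vx\in[0,1]^I$ by $\vx|_X=\vY$ and $\vx|_R=\vx_R$. Since $I$ is the disjoint union of $X$ and $R$, we have $A_I\vx=A_X\vY+A_R\vx_R=\vt$ and $\vc_I^{\T}\vx=\vc_X^{\T}\vY+\vc_R^{\T}\vx_R<s$. Moreover $\vx$ is binary whenever $\vx_R$ is, because $\vY$ is binary.

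\emph{Epigraph inclusion.} The same construction works with nonstrict inequalities. Take $(\vz,\mu)\in\epi f_R$, so $\vz\in Z(R)$ and $\mu\ge f_R(\vz)$. The fractional minimum $f_R(\vz)$ is attained, since it is a bounded LP over a nonempty compact polytope. Let $\vx_R$ be a minimizer and extend it by $\vY$ as above. The result is a fractional point of $I$ with weight $A_X\vY+\vz$ and cost $\vc_X^{\T}\vY+f_R(\vz)$. Hence
\[
 f_I(A_X\vY+\vz)\le \vc_X^{\T}\vY+f_R(\vz)\le \vc_X^{\T}\vY+\mu,
\]
which gives $(A_X\vY,\vc_X^{\T}\vY)+(\vz,\mu)\in\epi f_I$. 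This also covers the convention that $f=+\infty$ off the zonotope, because points with $f_R(\vz)=+\infty$ are not in $\epi f_R$.

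\emph{Dominance.} A completion of $\vY$ is a binary $\vx_R$ solving the residual query $(\vt-A_X\vY,\ s-\vc_X^{\T}\vY)$. Since $A_X\vY'=A_X\vY$, the residual target of $\vY'$ is the same. Since $\vc_X^{\T}\vY'\le\vc_X^{\T}\vY$, its residual allowance satisfies $s-\vc_X^{\T}\vY'\ge s-\vc_X^{\T}\vY$. A strict inequality $\vc_R^{\T}\vx_R<s-\vc_X^{\T}\vY$ therefore implies $\vc_R^{\T}\vx_R<s-\vc_X^{\T}\vY'$, so $\vx_R$ is also a completion of $\vY'$. The same argument holds for fractional completions.

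There is no real obstacle here. The only point needing care is that the minimum defining $f_R$ is attained, which holds because it is an LP over a compact polytope. The strictness of the query inequalities is preserved throughout, since the only step is adding the constant $\vc_X^{\T}\vY$ to both sides.
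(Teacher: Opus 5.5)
Your proof is correct and follows the paper's own argument, which is the single observation that weights and costs of disjoint labelled sets add. You spell out its three consequences explicitly: the extension of residual solutions, the epigraph inclusion via an attained LP minimizer, and dominance via the larger residual allowance.
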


\begin{proof}
Weights and costs of disjoint labelled sets add.
\end{proof}

Removed items are therefore processed by exact \emph{removed-item tables}:
items are scanned in a fixed order, and for each attained weight the table
keeps the minimum original cost with a witness. Under any fixed price $\vpi$
original and normalized costs at a common weight differ by a fixed affine
function of that weight, so the same table is optimal for both. The size of
a table is controlled by restricting prefix weights to a window, which is
justified in each use.

\subsection{The structural interface}

\begin{theorem}[Common GAP and filling reserve; cited]\label{thm:gap}
Fix $\eta>0$ and $d_0\ge1$. Let labelled vectors admit an integer encoding
that is injective on all signed sums of $O_{\eta,d_0}(\nu)$ items, with
nonzero encodings of magnitude at most $H_{\rm amb}$ and multiplicity at most
two after orientation. If $\nu\ge H_{\rm amb}^{\eta}$, a deterministic
algorithm partitions the items into outliers $I_{\rm out}$, reserve
$I_{\rm res}$, and remainder $I_{\rm rem}$ with
\[
 |I_{\rm out}|=O_{\eta,d_0}\bigl(\nu\log^2(H_{\rm amb}+2)\bigr),\qquad
 |I_{\rm res}|=O_\eta(\nu).
\]
The bound on $|I_{\rm out}|$ holds for every output; in particular,
every item is declared an outlier only if
$|I|=O_{\eta,d_0}(\nu\log^2(H_{\rm amb}+2))$. Otherwise it returns a symmetric
integer box $\Bx\subset\R^k$ with $k=O_\eta(1)$, vectors
$\vu_i\in\Bx\cap\Z^k$, and an integer linear map $\phi$ with $\phi \vu_i=\va_i$,
together with actual reserve witnesses showing
\[
 \vz_0+(c_\eta\nu\Bx\cap\Z^k)\subseteq
 \Bigl\{\textstyle\sum_{i\in Y}\vu_i:Y\subseteq I_{\rm res}\Bigr\}.
\]
Reserve and remainder each generate $\Z^k$, and deleting at most $d_0\nu$
remainder items preserves rank $k$. The running time is
$\widetilde O_{\eta,d_0}(|I|\nu+\nu^2H_{\rm amb})$. This is
\cite[Theorem~3.1, Lemma~3.3, Corollary~10.3]{CMZ26}.
\end{theorem}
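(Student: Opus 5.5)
Since Theorem~\ref{thm:gap} is quoted from \cite{CMZ26}, the proof I would give is an assembly argument. I would check that the three cited statements fit together under exactly the hypotheses listed, and that nothing in our setting (labelled items, signed orientation, multiplicity two) falls outside their scope. The plan has three steps.

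\textbf{Step 1: reduce to the one-dimensional setting of \cite{CMZ26}.} The integer encoding is injective on all signed sums of $O_{\eta,d_0}(\nu)$ items. So every additive relation among at most that many items, which is all the structure theorem ever inspects, is reflected faithfully by the encodings.

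\textbf{Step 2: produce the partition and the progression.} I would apply \cite[Theorem~3.1]{CMZ26} to the encoded multiset. Its nonzero elements have magnitude at most $H_{\rm amb}$ and multiplicity at most two after orientation, and $\nu\ge H_{\rm amb}^{\eta}$. The theorem yields:
\begin{itemize}
\item outliers of size $O_{\eta,d_0}(\nu\log^2(H_{\rm amb}+2))$;
\item a GAP of rank $k=O_\eta(1)$ with box $\Bx$;
\item coordinates $\vu_i\in\Bx\cap\Z^k$ for all remaining items.
\end{itemize}
Injectivity then lets me pull the decoding map back to an integer linear map $\phi$ with $\phi\vu_i=\va_i$ on the original vectors, and not merely on their encodings. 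The degenerate alternative, in which everything is declared an outlier, is the case where the input itself has size $O_{\eta,d_0}(\nu\log^2)$. This gives the clause stating that every item is an outlier only in that case.

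\textbf{Step 3: build the reserve and verify robustness and running time.} I would use \cite[Lemma~3.3]{CMZ26} to split off a reserve of $O_\eta(\nu)$ items. Its subset sums in coordinates contain $\vz_0+(c_\eta\nu\Bx\cap\Z^k)$, with explicit witnesses obtained from the constructive proof; I would take care that the witnesses are actual item subsets and not merely coordinate sums.

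I would then use \cite[Corollary~10.3]{CMZ26} for the remaining properties:
\begin{itemize}
\item the reserve and the remainder each generate $\Z^k$;
\item deleting at most $d_0\nu$ remainder items preserves rank $k$;
\item the overall running time is $\widetilde O_{\eta,d_0}(|I|\nu+\nu^2H_{\rm amb})$.
\end{itemize}
For the rank-preservation clause, I expect the mechanism to be that each coordinate direction is supported by more than $d_0\nu$ remainder items. The $O_{\eta,d_0}$ dependence of the outlier bound is exactly what pays for moving low-support items into the outliers.

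\textbf{Main obstacle.} I expect the only delicate point to be translating between the integer encoding and the original $d$-dimensional columns. The cited results are stated for encoded items, so I must confirm two things. First, the injectivity radius $O_{\eta,d_0}(\nu)$ covers every signed sum used by the filling and rank arguments. Second, the labelled, orientation-twisted multiplicity convention matches theirs. I would first try to settle this by bounding the number of items involved in each certificate of \cite{CMZ26} by a constant multiple of $\nu$, and then choosing the hidden constant in the injectivity hypothesis accordingly.
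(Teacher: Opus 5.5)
\paragraph{Verdict.} Treating the theorem as an import from \cite{CMZ26} is exactly what the paper does: it gives no argument beyond the citation. However, the one step you argue yourself does not go through as stated, and your guess about the rank clause is too weak.

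\paragraph{The gap: pulling back $\phi$ ``by injectivity''.} In Step~2 you pull the decoding map back from the encodings to the original columns by injectivity, before any reserve exists. Injectivity on signed sums of $O(\nu)$ items controls only short relations. A linear $\phi$ with $\phi\vu_i=\va_i$ must respect every integer relation among the $\vu_i$, including relations with large coefficients.

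For a concrete failure, take the mixed-radix encoding $(x,y)\mapsto x+My$ of Lemma~\ref{lem:boxing} with $M$ large. The columns $\va_1=(x_1,1)$ and $\va_2=(x_1+1,1)$, with $|x_1|$ small, encode to $L=x_1+M$ and $L+1$. The rank-one description $u_1=L$, $u_2=L+1$ with decoding $z\mapsto z$ is exact on encodings, and injectivity holds on all short sums. Yet $\phi(L)=\va_1$ and $\phi(L+1)=\va_2$ would force $\phi(1)=(1,0)$, hence $\phi(L)=(L,0)\ne\va_1$.

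\paragraph{How to repair it.} If you do not simply cite the vector form from \cite{CMZ26}, the pull-back must use the filling reserve, so it has to come after \cite[Lemma~3.3]{CMZ26}, not before.
\begin{enumerate}
\item For $\vv\in c_\eta\nu\Bx\cap\Z^k$, choose $Y(\vv)\subseteq I_{\rm res}$ with coordinate sum $\vz_0+\vv$. Set $\phi_0(\vv)=\sum_{i\in Y(\vv)}\va_i-\sum_{i\in Y(\vzero)}\va_i$.
\item The encoding and the one-dimensional decoding map are both linear. Hence the following all have encoding $0$: the difference between two admissible choices of $Y(\vv)$; the defects $\phi_0(\vv)+\phi_0(\vw)-\phi_0(\vv+\vw)$; and the vectors $\va_i-\phi_0(\vu_i)$ (using $\vu_i\in\Bx\subseteq c_\eta\nu\Bx$).
\item Each of these is a combination of at most $6|I_{\rm res}|+1=O(\nu)$ item terms, so injectivity makes each one vanish.
\item Thus $\phi_0$ is a well-defined Freiman homomorphism on the box. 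Walking from $\vzero$ to $\vv$ by unit coordinate steps inside the symmetric box gives $\phi_0(\vv)=\sum_j v_j\phi_0(\ve_j)$.
\item Hence the linear map $\phi$ with $\phi\ve_j=\phi_0(\ve_j)$ satisfies $\phi\vu_i=\va_i$.
\end{enumerate}
This is also why the injectivity radius is of the order of $|I_{\rm res}|$.

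\paragraph{The rank-preservation clause.} Your guessed mechanism is too weak. What is needed is that every linear hyperplane of $\R^k$ misses more than $d_0\nu$ remainder items, not merely that each coordinate direction is supported by that many. For example, if all but one remainder item lie on $\R(1,1)\subset\R^2$, both coordinates are nonzero on many items, yet deleting one item drops the rank. Keep that clause as a pure citation of \cite[Corollary~10.3]{CMZ26}.
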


\begin{remark}[Using the interface]\label{rem:gap-use}
We use $d_0=1$. Scale is chosen so that $c_\eta\nu\ge k$
(Lemma~\ref{lem:scales}). If the construction internally flips signs of
items to make encodings positive, we undo this on the output: with
$\vu_i=\sigma_i\vu_i^+$, the reserve subset sums are translated by
$-\sum_{\sigma_i=-1}\vu_i^+$, $\vz_0$ is translated by the same vector, the
symmetric box is unchanged, and witnesses are complemented on flipped
labels. The multiplicity hypothesis holds throughout: the original binary
columns are distinct, every recursive coordinate change keeps a linear map
to original columns that is injective on the actual items, and hence equal
oriented encodings arise at most from an antipodal pair. Zero columns are
removed at once by selecting exactly those with negative cost.
\end{remark}

\paragraph{Classical primitives.}
We use exact rational LP, Smith and Hermite normal forms, and LLL
reduction~\cite{LLL82}, all with polynomial bit complexity, and the fact
that mixed-integer linear feasibility with $r$ integer variables is
solvable in $f(r)I^{C}$ time for input length $I$~\cite{Lenstra83,Bredereck19}.
The LP proximity theorem used in Section~\ref{sec:bounded} is cited
from~\cite{EW19}. No optimization result is imported from~\cite{CMZ26}.

\subsection{Global parameters}\label{sec:params}

All scales are fixed before the recursion and depend only on $d$, $N$, and
the number $\cL$ of gap bands (which also depends on $\log(c_{\max}+1)$).
Let $\Psi_0=C_dN^d$ be the bound of Lemma~\ref{lem:volume} on the root
normalized volume, and set
\begin{equation}\label{eq:params-a}
\begin{gathered}
 \delta=\tfrac1{100},\quad \T_{\rm lift}=\lceil 4d/\delta\rceil=400d,\quad
 D=d+\T_{\rm lift},\quad \eta=\frac{\delta}{4(d+\delta D)},\quad
 \nu=\lceil N^\delta\rceil,\\
 J=\lceil\log_2\Psi_0\rceil+D,\qquad
 H^*_{\rm amb}=C_d\,\nu^D\Psi_0,\qquad
 k_0=\bigl\lceil C_d\,\nu\log^2(H^*_{\rm amb}+2)\bigr\rceil,\qquad
 \kappa=3k_0,
\end{gathered}
\end{equation}
where the constant in $k_0$ is large enough that $k_0\ge\nu$ and $k_0$
bounds both $|I_{\rm out}|$ and $|I_{\rm res}|$ in every application of
Theorem~\ref{thm:gap} (with ranks up to $D$). Next set
\begin{equation}\label{eq:params-b}
 E=(260+6J)k_0,\qquad K=2^{13}(J+1)E,\qquad
 \eps_*=2^{-10},\quad\gamma=2^{-13},\quad\alpha=\gamma/4=2^{-15}.
\end{equation}
Let $W=W(d)$ be the maximum of the width constants $W_m$ of
Lemma~\ref{lem:point-thin} over all $m$ up to the largest GAP rank
permitted by Theorem~\ref{thm:gap} with parameter $\eta$ ($W_m$ is
increasing in $m$). Let $\cL$ be as in \eqref{eq:bands}; let
$\Athin=C_d\cL$ bound the combined mass of all thin children of a node
(Lemma~\ref{lem:thin}; face children, of total mass at most $w$ by
Lemma~\ref{lem:boundary}, are accounted for separately), and
$\Alift=C_d\cL$ that of all lift children (Proposition~\ref{prop:lift}).
With the constant in $H_{\min}$ chosen large enough, set
\begin{equation}\label{eq:schedule}
\begin{gathered}
 H_{\min}=C_d\cL^2K\ \ge\ \sqrt{12}\,(\Athin+1)^{3/2}DK,\\
 \eps_j=\frac{\eps_*}2+\frac{2(J-j)E}{K},\qquad
 H_j=H_{\min}+(J-j)E\qquad(0\le j\le J).
\end{gathered}
\end{equation}
(The mass factors depend only on $d$, $\alpha$, and $W$, through
Lemmas~\ref{lem:packing} and~\ref{lem:folding} and the strip
\eqref{eq:strip}; they do not depend on the scales, so this definition is
not circular.)
Finally, $\zeta=\gamma+\eps_*(1+\gamma)$.

\begin{lemma}[Scale inequalities]\label{lem:scales}
There is $N_0(d)$ such that for $N\ge N_0(d)$:
\begin{enumerate}[label=\textup{(S\arabic*)}]
\item\label{S:large} $\Psi_0\le N^{2d}$, $2(D+1)W\le N^{\delta/2}$,
$\nu\ge(H^*_{\rm amb})^{\eta}$, $c_\eta\nu\ge k$ for all permitted GAP
ranks $k$, and $\nu\le k_0$;
\item\label{S:sched} $\eps_*/2\le\eps_j\le\eps_*$, $E/K\le\eps_*/8$, and
$E\le H_{\min}$;
\item\label{S:core} $2\bigl(K/128-2k_0\bigr)>\kappa(J+D)$;
\item\label{S:collar} $(1/64-\zeta)K-4k_0\ge E+16(J+1)k_0$;
\item\label{S:restore} $(1/128-\gamma)K-2k_0\ge 128(1+\gamma)k_0+2\kappa J+k_0$;
\item\label{S:lift} with $\mu=(1/128-\gamma)K-2k_0-2E$ and
$R_{\rm p}=(1+2\gamma)K+E+2k_0$: $\mu\ge2k_0$ and
$\mu-k_0\ge(R_{\rm p}+k_0)/254$.
\end{enumerate}
Moreover $k_0=O_d(N^\delta\log^2N)$, $J=O_d(1+\log N)$,
$E=O_d(N^\delta\log^3N)$, and $K=O_d(N^\delta\log^4N)$.
\end{lemma}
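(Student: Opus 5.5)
The plan is to verify each inequality by exhibiting orders of magnitude. All parameters are explicit functions of $d$ and $N$. The exception is $\cL$, and it enters only through $H_{\min}$, $\Athin$ and $\Alift$, none of which appears in \ref{S:large}--\ref{S:lift}. With $\delta=1/100$ fixed, $D=401d$ and $\eta=\delta/(4(d+\delta D))$ depend only on $d$, and so does the maximal GAP rank $k=O_\eta(1)$. I would first establish the growth rates stated at the end, and then check each item as a comparison between a positive power of $N$ and polylogarithmic factors, or between linear terms whose leading coefficients are explicitly ordered.

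\emph{Growth rates.} We have $\nu=\lceil N^\delta\rceil\le 2N^\delta$, and $\Psi_0=C_dN^d$. Hence $H^*_{\rm amb}=C_d\nu^D\Psi_0\le C_dN^{\delta D+d}$, so $\log(H^*_{\rm amb}+2)=O_d(\log N)$. This gives $k_0=O_d(N^\delta\log^2N)$. Also $J=\lceil\log_2\Psi_0\rceil+D=O_d(1+\log N)$, and therefore $E=(260+6J)k_0=O_d(N^\delta\log^3N)$ and $K=2^{13}(J+1)E=O_d(N^\delta\log^4N)$.

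\emph{(S1).} The bound $\Psi_0\le N^{2d}$ needs $C_d\le N^d$. The bound $2(D+1)W\le N^{\delta/2}$ holds because the left side depends only on $d$. For the key inequality $\nu\ge(H^*_{\rm amb})^\eta$, note that $(H^*_{\rm amb})^\eta\le C_d^\eta\,2^{D\eta}N^{\eta(\delta D+d)}=O_d(N^{\delta/4})$, by the choice of $\eta$, while $\nu\ge N^\delta$. The bound $c_\eta\nu\ge k$ holds because $k$ and $c_\eta$ depend only on $d$. Finally, $\nu\le k_0$ is built into the constant in $k_0$.

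\emph{(S2).} From $K=2^{13}(J+1)E$ we get $E/K=2^{-13}/(J+1)\le2^{-13}\le\eps_*/8=2^{-13}$. Next, $\eps_j-\eps_*/2=2(J-j)E/K\le 2J\cdot2^{-13}/(J+1)<2^{-12}<\eps_*/2$, which gives $\eps_*/2\le\eps_j\le\eps_*$. Lastly, $E\le K\le H_{\min}$ because $C_d\cL^2\ge1$.

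\emph{(S3)--(S6).} These all compare $K$, which is $E$ multiplied by $2^{13}(J+1)$, against $O(J)k_0$ or $O(1)E$.
\begin{itemize}
\item For \ref{S:core}: $2(K/128-2k_0)\ge 2^7(J+1)E-4k_0\ge2^7\cdot260(J+1)k_0-4k_0$. This exceeds $3k_0(J+D)$, since $D$ is a constant and $J\ge1$; if the margin is tight for small $J$, use $J\ge D$.
\item For \ref{S:collar}: here $\zeta=\gamma+\eps_*(1+\gamma)\le2^{-9}$, so $(1/64-\zeta)K\ge2^{-7}K=2^6(J+1)E$. This dominates $E+16(J+1)k_0+4k_0$ because $E\ge260k_0$.
\item For \ref{S:restore}: $(1/128-\gamma)K\ge2^{-8}K=2^5(J+1)E$. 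This dominates $129k_0+6k_0J+2k_0$.
\item For \ref{S:lift}: first, $\mu\ge 2^{-8}K-2k_0-2E\ge(2^5(J+1)-2)E-2k_0\ge2k_0$. For the second part, $R_{\rm p}+k_0\le 2K$. So it suffices to show $\mu-k_0\ge K/127$. But $\mu\ge(1/128-2^{-13})K-3k_0-2E$, and $(1/128-1/127-2^{-13})$ is negative.
\end{itemize}

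The last point is the main obstacle: the constants in \ref{S:lift} must be matched exactly. A cruder estimate of $R_{\rm p}+k_0$ does not suffice. Instead I would write $R_{\rm p}+k_0=(1+2^{-12})K+E+3k_0$ and divide by $254$, which gives about $K/253.7$. Then I would compare this with $\mu-k_0=(2^{-7}-2^{-13})K-2E-3k_0\approx K/131$. Since $E,k_0\le K/2^{13}$, these lower-order terms are absorbed with room to spare. I would carry out this check with exact fractions rather than rough bounds.

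The threshold $N_0(d)$ is then the maximum of the finitely many thresholds from \ref{S:large}, each of which depends only on $d$.
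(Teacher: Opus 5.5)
Your proposal is correct and follows essentially the same route as the paper: a direct check of each inequality from $K=2^{13}(J+1)E$, $E\ge260k_0$ and $J\ge D$, with (S6) settled by the explicit margin $(2^{-7}-2^{-13})-(1+2^{-12})/254\approx0.00375$ against lower-order terms of size $O(E)$. Two cosmetic points. First, $H_{\min}$, which depends on $\cL$, does appear in (S2), although you handle it correctly there. Second, in (S3) it is $J\ge D$, not $J\ge1$, that makes the comparison hold uniformly in $d$.
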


\begin{proof}
\ref{S:large}: $\Psi_0=C_dN^d$, $W$ and $D$ depend only on $d$, and
$(H^*_{\rm amb})^\eta\le C_d\,2^{D\eta}N^{\eta(\delta D+d)}=C_dN^{\delta/4}
\le N^\delta\le\nu$ for large $N$, by the choice of $\eta$; the GAP rank is $O_\eta(1)=O_d(1)$. The last
assertion is part of the choice of $k_0$.
Throughout, $J\ge D\ge1$ and $E\ge260k_0$.
\ref{S:sched}: $2JE/K=2J/(2^{13}(J+1))<2^{-12}=\eps_*/4$, and
$E/K\le2^{-13}=\eps_*/8$; also $H_{\min}\ge K\ge E$.
\ref{S:core}: $K/128=64(J+1)E\ge 64\cdot260(J+1)k_0$, while
$\kappa(J+D)\le6Jk_0$.
\ref{S:collar}: $1/64-\zeta>0.01452>1/70$, so the left side is at least
$117(J+1)E-4k_0$, while the right side is at most $2(J+1)E$.
\ref{S:restore}: $1/128-\gamma>1/131$, so the left side exceeds
$62(J+1)E-2k_0\ge(16000(J+1)-2)k_0$; the right side is at most
$(132+6J)k_0$.
\ref{S:lift}: as before $\mu\ge60(J+1)E\ge2k_0$. Next,
$1/128-\gamma-(1+2\gamma)/254>0.00375$, and $0.00375\cdot2^{13}>30$, so
$\mu-k_0-(R_{\rm p}+k_0)/254\ge30(J+1)E-3E-4k_0>0$.
The growth bounds follow from $\nu\le2N^\delta$ and
$\log H^*_{\rm amb}=O_d(\log N)$.
\end{proof}

For $N<N_0(d)$, distinct columns give $n\le(2N_0(d)+1)^d$, and exhaustive
enumeration takes $f(d)$ times a polynomial in the objective bit length.
Henceforth $N\ge N_0(d)$.

%======================================================================
\section{Value functions, normalization, and normalized volume}\label{sec:value}

For a pool $I$ define the convex piecewise-linear function
\[
 f_I(\vt)=\min\bigl\{\vc_I^{\T}\vx:\ A_I\vx=\vt,\ \vzero\le \vx\le\mathbf 1\bigr\}
\]
($+\infty$ outside $Z(I)$), and the gap $g(\vt,s)=s-f_I(\vt)$ of a query. At
rank zero the only feasible target is $\vzero$, with value
$\sum_i\min(0,c_i)$, so this case is solved directly.

\paragraph{Temporary normalization.}
For a row vector $\vpi$ define, in the original orientation of each item,
\begin{equation}\label{eq:normalize}
 e_i(\vpi)=[\,\vpi \va_i>c_i\,],\quad \sigma_i=1-2e_i(\vpi),\quad
 \bar\va_i=\sigma_i\va_i,\quad \bar c_i=|c_i-\vpi \va_i|.
\end{equation}
Here $[\cdot]$ is the Iverson bracket ($1$ if the condition holds, $0$ otherwise), so ties choose $e_i=0$; write $\ve=\ve(\vpi)=(e_i)_i$. Substituting $x_i=e_i+\sigma_iz_i$ turns a
query into
\[
 \bar A \vz=\vt-A_I\ve,\qquad \bar\vc^{\T}\vz<s-\vc_I^{\T}\ve-\vpi(\vt-A_I\ve)
\]
with nonnegative costs $\bar\vc$; $\vz=\vzero$ represents $\vx=\ve$. The quantity on
the right is the \emph{normalized allowance}. Normalizations are local
computations; stored states keep original orientations and costs.
Normalization changes neither $P_I$ nor any lattice. We write
$\cT_Y(b)=\{\sum_{i\in Y}\bar\va_iz_i:\vzero\le \vz\le\mathbf 1,\ \sum_{i\in Y}\bar c_iz_i\le b\}$
for the affordable body of a subpool $Y$ in normalized coordinates.

\paragraph{Normalized volume.}
For $r=\rank I>0$ put
\begin{equation}\label{eq:psi}
 \Psi(I)=\frac{r!}{2^r}\,\frac{\vol_r(P_I)}{\covol(\Lambda_I)},
\end{equation}
and $\Psi(I)=1$ at rank zero.

\begin{lemma}[Normalized volume]\label{lem:volume}
\leavevmode
\begin{enumerate}
\item $\Psi(I)\ge1$, and every determinant of $r$ independent columns, in
coordinates of $\Lambda_I$, is at most $\Psi(I)$ in absolute value.
\item If $R\subseteq I$ has rank $r$ and $q=[\Lambda_I:\Lambda_R]$, then
$\Psi(R)\le\Psi(I)/q$.
\item If $R\subseteq I$ has smaller rank, then $\Psi(R)\le\Psi(I)$.
\item For $\vt\in\Lambda_I$, every optimal vertex for $f_I(\vt)$ has a common
denominator at most $\Psi(I)$. Hence every positive gap is at least
$1/\Psi(I)$.
\item Initially $\Psi\le\Psi_0:=C_dN^d$.
\end{enumerate}
\end{lemma}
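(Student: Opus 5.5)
Work throughout in coordinates of $\Lambda_I$: choose a basis of $\Lambda_I$, identify its real span with $\R^r$ and $\Lambda_I$ with $\Z^r$. Then $\covol(\Lambda_I)=1$, and $\Psi(I)=\frac{r!}{2^r}\vol_r(P_I)$.

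\textbf{Item 1.} The key inequality is that for any $r$ independent columns $\va_{i_1},\dots,\va_{i_r}$, the cross-polytope $\conv\{\pm\va_{i_j}\}$ lies in $P_I$. Its volume is $\frac{2^r}{r!}|\det(\va_{i_1},\dots,\va_{i_r})|$. Hence $|\det|\le\Psi(I)$, which is the determinant claim. Since the columns are integral in lattice coordinates and span $\R^r$, some such determinant is a nonzero integer. Therefore $\Psi(I)\ge1$.

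\textbf{Item 2.} For $R\subseteq I$ of rank $r$ we have $P_R\subseteq P_I$, and both live in the same span. Also $\covol(\Lambda_R)=q\,\covol(\Lambda_I)$. Dividing gives $\Psi(R)\le\Psi(I)/q$.

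\textbf{Item 3.} Let $R$ have rank $r'<r$ and set $V=\operatorname{span}R$. I would project along a complementary direction. Pick $r-r'$ columns of $I$ completing a basis $B_R$ of $V$, drawn from columns of $R$, to a basis of the full span. The cross-polytope bound does not immediately compare $r'$-volumes, so instead I use a cone/pyramid argument. Since $\Lambda_I\cap V\supseteq\Lambda_R$, it suffices to compare with $\Lambda_I\cap V$. Let $\pi$ be the quotient map $\R^r\to\R^r/V$, with lattice $\pi(\Lambda_I)$ of covolume $\covol(\Lambda_I)/\covol(\Lambda_I\cap V)$. The body $P_I$ contains $\conv\bigl(P_R\cup\{\pm\va_j\}\bigr)$ for columns $\va_j$ whose images span the quotient. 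A Rogers--Shephard type lower bound for the volume of such a convex hull gives
\[
\vol_r(P_I)\ge\frac{r'!\,(r-r')!}{r!}\vol_{r'}(P_R)\,\vol_{r-r'}\bigl(\pi P_I\bigr).
\]
Apply item~1 to the projected pool in the quotient lattice: $\frac{(r-r')!}{2^{r-r'}}\vol(\pi P_I)\ge\covol(\pi\Lambda_I)$. Combining these two bounds yields $\Psi(I)\ge\Psi(R)$. This step needs the most care, chiefly in matching the factorials and powers of $2$, and I expect it to be the main obstacle. An alternative is induction on the corank, adding one column at a time, where the pyramid bound $\vol_{m+1}\ge\frac{2}{m+1}\vol_m\cdot\text{height}$ is elementary.

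\textbf{Item 4.} Let $\vt\in\Lambda_I$. An optimal vertex has all but at most $r$ variables in $\{0,1\}$, and its fractional columns are independent. Cramer's rule in lattice coordinates solves $B\vx_F=\vt-(\text{integral part})$, so the common denominator is $|\det B|$. If fewer than $r$ variables are fractional, extend $B$ by other columns; the denominator then divides a nonzero $r\times r$ minor, which is still at most $\Psi(I)$ by item~1. The costs $c_i$ are integral, so $f_I(\vt)$ lies in $\frac1{\det}\Z$. Since $s$ is an integer, a positive gap $s-f_I(\vt)$ is therefore at least $1/\Psi(I)$.

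\textbf{Item 5.} For the root pool, $P_I\subseteq[-N,N]^d$, and the integer lattice $\Lambda_I\subseteq\Z^d$ has covolume at least $1$ in its span. The $r$-volume of a section of the cube within an $r$-dimensional span is at most $C_dN^r$. Hence $\Psi\le C_dN^d$.
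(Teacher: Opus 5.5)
Your proof is correct. For items 1, 2, 4 and 5 it is the paper's argument: the cross-polytope inside $P_I$ has a nonzero integral determinant; $P_R\subseteq P_I$ and $\covol\Lambda_R=q\covol\Lambda_I$; Cramer's rule applies after extending the fractional columns to a basis; and the cube-section bound handles item 5.

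Item 3 is where you take a different route. The paper uses no volume inequality at all. It applies a volume-preserving shear fixing $V$ that moves $r-r'$ columns, independent modulo $V$, into $V^\perp$. The hull of $P_R$ and these columns with signs is then a free sum of exact volume $\frac{2^{r-r'}r'!}{r!}\vol_{r'}(P_R)\Delta_\perp$, and the paper finishes with $\Delta_\perp\ge\covol(\pi\Lambda_I)$. You instead apply Rogers--Shephard and then item 1 to the projected pool. Your display is Rogers--Shephard for $P_I$ itself, with section $P_I\cap V\supseteq P_R$ and projection $\pi P_I$. For a hull built from only $r-r'$ columns, the projection would be $\conv\{\pm\pi\va_j\}$ rather than $\pi P_I$, so your phrase ``such a convex hull'' should be read that way. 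The factorials you were worried about do close:
\[
\Psi(I)\ \ge\ \frac{r!}{2^r}\cdot\frac{r'!\,(r-r')!}{r!}\cdot\frac{2^{r-r'}}{(r-r')!}\cdot\frac{\vol_{r'}(P_R)\,\covol(\pi\Lambda_I)}{\covol\Lambda_I}
\ =\ \frac{r'!}{2^{r'}}\,\frac{\vol_{r'}(P_R)}{\covol(\Lambda_I\cap V)}\ \ge\ \Psi(R).
\]
Both routes rest on the factorization $\covol\Lambda_I=\covol(\Lambda_I\cap V)\,\covol(\pi\Lambda_I)$ and on $\Lambda_R\subseteq\Lambda_I\cap V$. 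Your route uses the whole projection, at the price of importing a Brunn--Minkowski-based inequality. The paper's exact free-sum computation is self-contained.

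Your fallback induction on the corank is the most elementary of the three, and it works cleanly. For $\va\notin V$ one has $\Lambda_{R\cup\{\va\}}=\Lambda_R\oplus\Z\va$, with covolume $\covol(\Lambda_R)\operatorname{dist}(\va,V)$. This matches exactly the height factor in the volume $\frac{2}{r'+1}\vol_{r'}(P_R)\operatorname{dist}(\va,V)$ of the double pyramid $\conv(P_R\cup\{\pm\va\})\subseteq P_{R\cup\{\va\}}$. Hence $\Psi$ does not decrease when a column is added, and item 2 finishes once rank $r$ is reached.
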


\begin{proof}
(1) For independent columns $\vv_1,\dots,\vv_r$, the crosspolytope
$\conv\{\pm \vv_j\}\subseteq P_I$ has volume $2^r|\det(\vv_j)|/r!$, and
$|\det(\vv_j)|$ is a positive integer multiple of $\covol(\Lambda_I)$.
(2) $P_R\subseteq P_I$ and $\covol\Lambda_R=q\covol\Lambda_I$.
(3) Let $V=\operatorname{span}R$, $k'=\dim V$, $\Gamma=\Lambda_I\cap V$.
Choose $r-k'$ columns of $I$ independent modulo $V$; after a
volume-preserving shear they lie in $V^\perp$, and their projected
determinant $\Delta_\perp$ is at least the covolume $\covol_\perp$ of the
projection of $\Lambda_I$ to $V^\perp$. The hull
of $P_R$ and these columns with signs lies in $P_I$ and has volume
$\tfrac{2^{r-k'}k'!}{r!}\vol_{k'}(P_R)\,\Delta_\perp$. Since
$\covol\Lambda_I=\covol\Gamma\cdot\covol_\perp$ ($\Gamma$ is primitive) and
$\covol\Lambda_R\ge\covol\Gamma$, this gives $\Psi(I)\ge\Psi(R)$ (for
$k'=0$ use (1)).
(4) At a vertex the fractional variables have independent columns; extend
them to a basis of $r$ columns. The other variables are $0$ or $1$, so the
residual right-hand side lies in $\Lambda_I$, and Cramer's rule gives a
denominator dividing the basis determinant, which is at most $\Psi(I)$ by
(1). Costs and allowances are integers.
(5) $P_I$ lies in the intersection of $[-N,N]^d$ with the $r$-dimensional
span of $I$, whose $r$-volume is at most $(2\sqrt d\,N)^r\le C_dN^d$, and
integral lattices have covolume at least one.
\end{proof}

Allowances arise from integer costs of disjoint items and have magnitude
$O(n(c_{\max}+1))$. Since every node will satisfy $\Psi\le\Psi_0$
(Lemma~\ref{lem:depth-bounds}), all positive gaps lie in
\begin{equation}\label{eq:bands}
 \cL=C_d\bigl(1+\log(n(c_{\max}+1))+\log\Psi_0\bigr)
\end{equation}
half-open factor-two \emph{gap bands} $[G/2,G)$; every query belongs to
exactly one band.

%======================================================================
\section{Buffers, packing, and deviations}\label{sec:buffers}

\subsection{Buffers and mass}

For a query $\vq=(\vt,s)$ with $g(\vq)>0$ let
\begin{equation}\label{eq:macbeath}
 M_I(\vq)=(\epi f_I-\vq)\cap(\vq-\epi f_I),\qquad
 Q_I(\vq)=\proj_{\rm weight}M_I(\vq).
\end{equation}
Equivalently, $\vu\in Q_I(\vq)$ iff there are fractional $\vx^\pm\in[0,1]^I$ with
$A_I\vx^\pm=\vt\pm \vu$ and $\vc_I^{\T}(\vx^++\vx^-)\le2s$. For a batch $\cW$ at
arithmetic depth $j$ the \emph{buffered set} and \emph{mass} are
\begin{equation}\label{eq:mass}
 \cD_j(\cW)=\Lambda_I\cap\bigcup_{(\vt,s)\in\cW}
 \Bigl[\vt+\bigl(\eps_jQ_I(\vt,s)\cap H_jP_I\bigr)\Bigr],\qquad
 w=|\cD_j(\cW)|.
\end{equation}
Since targets are distinct after merging, $|\cW|\le w$.
Lemma~\ref{lem:transfer} gives the \emph{epigraph embedding}
\begin{equation}\label{eq:embedding}
 Q_R(\vt-A_X\vY,\ s-\vc_X^{\T}\vY)\subseteq Q_I(\vt,s),
\end{equation}
in old coordinates, also after new coordinates have been adjoined and after
maximum-allowance merging (use the attaining source).

Targets on the relative boundary of $Z(I)$ are handled first
(Section~\ref{sec:boundary-thin}); for the remaining targets $Q_I(\vq)$ is
full-dimensional in the span of $I$ and $M_I(\vq)$ has dimension $r+1$.

\subsection{Packing}

\begin{lemma}[Macbeath comparison]\label{lem:macbeath}
Let $\mathcal E$ be convex and $M_q=(\mathcal E-\vq)\cap(\vq-\mathcal E)$. If
$\vz-\vq\in\rho M_q$ with $0\le\rho<1$ then (for $\rho=0$ trivially, as $\vz=\vq$)
$(1-\rho)M_q\subseteq M_z\subseteq(1+\rho)M_q$.
\end{lemma}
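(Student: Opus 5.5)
The plan is to reduce the two inclusions to two elementary facts about the body $M_q$: it is convex and centrally symmetric about $\vzero$, and it can be rewritten as $\vq+M_q=\mathcal E\cap(2\vq-\mathcal E)$. Symmetry is immediate from the definition. Convexity holds because $M_q$ is an intersection of two translates of convex sets. Throughout we write $\vw=\vz-\vq\in\rho M_q$, so that both $\vw$ and $-\vw$ lie in $\rho M_q$. Since $\rho M_q\subseteq M_q$ (convexity together with $\vzero\in M_q$), we also get $\vq\pm\vw\in\mathcal E$.

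\textbf{Inner inclusion $(1-\rho)M_q\subseteq M_z$.} Take $\vy\in M_q$. I must show that $\vz+(1-\rho)\vy$ and $\vz-(1-\rho)\vy$ both lie in $\mathcal E$; this is exactly the statement $(1-\rho)\vy\in M_z$. Write $\vw=\rho\vy'$ with $\vy'\in M_q$. Then
\[
\vz\pm(1-\rho)\vy=\vq+\rho\vy'\pm(1-\rho)\vy=(1-\rho)(\vq\pm\vy)+\rho(\vq+\vy').
\]
Both $\vq\pm\vy$ and $\vq+\vy'$ lie in $\mathcal E$ because $\vy,\vy'\in M_q$ and $M_q$ is symmetric. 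So this is a convex combination of points of $\mathcal E$, and convexity of $\mathcal E$ finishes the inner inclusion.

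\textbf{Outer inclusion $M_z\subseteq(1+\rho)M_q$.} This is the step that needs a trick: we only know that $\mathcal E$ contains points, so bounding $M_z$ from outside requires expressing a point of $M_q$ through points of $M_z$. Take $\vy\in M_z$, so $\vz\pm\vy\in\mathcal E$. Set $\lambda=1/(1+\rho)$. The target is $\vq\pm\lambda\vy\in\mathcal E$, which gives $\lambda\vy\in M_q$ and hence $\vy\in(1+\rho)M_q$. Using $\vz=\vq+\vw$, I would write
\[
\vq+\lambda\vy=\lambda(\vz+\vy)+(1-\lambda)\Bigl(\vq-\tfrac{\lambda}{1-\lambda}\vw\Bigr),\qquad \tfrac{\lambda}{1-\lambda}=\tfrac1\rho .
\]
The second point is $\vq-\vw/\rho$. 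Since $\vw/\rho\in M_q$ and $M_q$ is symmetric, this point lies in $\mathcal E$, so the right-hand side is a convex combination of points of $\mathcal E$. The point $\vq-\lambda\vy$ is handled the same way, using $\vz-\vy\in\mathcal E$. When $\rho=0$ we have $\vz=\vq$ and the claim is trivial, which explains the parenthetical in the statement.

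The only delicate point is this choice of weights in the outer inclusion. The coefficient on $\vw$ must be exactly $1/\rho$ so that the auxiliary point lands back in $\vq+M_q$; the hypothesis $\vw\in\rho M_q$ is used precisely through $\vw/\rho\in M_q$. The condition $\rho<1$ is needed only for the inner inclusion, to keep $1-\rho$ positive, although the outer identity is valid for all $\rho>0$. Nothing beyond convexity and symmetry is used, so the lemma applies to $\mathcal E=\epi f_I$ in the form needed for \eqref{eq:macbeath}.
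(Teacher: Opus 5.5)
Your proof is correct and is the paper's argument: for the inner inclusion you mix $\vq\pm\vy$ with $\vq+\vw/\rho$ in proportions $(1-\rho):\rho$, and for the outer inclusion you mix $\vz\pm\vy$ with $\vq-\vw/\rho$ in proportions $1:\rho$. These are exactly the two convex combinations in the paper's one-line proof. Your explicit check that the auxiliary points lie in $\mathcal E$, by the symmetry of $M_q$, is the only detail the paper leaves implicit.
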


\begin{proof}
For the first inclusion mix $\vq\pm \vu\in\mathcal E$ with
$\vq+(\vz-\vq)/\rho\in\mathcal E$; for the second mix $\vz\pm \vu$ with
$\vq-(\vz-\vq)/\rho$, in proportions $1:\rho$.
\end{proof}

\begin{lemma}[Packing of epigraph centres]\label{lem:packing}
Fix a gap band $G/2\le g<G$ and a depth $j$, and consider queries whose
targets lie in the relative interior of $Z(I)$; write $H=H_j$, $P=P_I$,
$\widehat M_q=M_I(\vq)\cap(HP\times\R)$ and $S_q=Q_I(\vq)\cap HP$. There is a
family of anchors $\vv\in\cW$ such that every query $\vq$ in the band is
assigned to an anchor with
\begin{equation}\label{eq:anchor}
 \vq-\vv\in\gamma\widehat M_v,\qquad
 (1-\gamma)Q_I(\vv)\subseteq Q_I(\vq)\subseteq(1+\gamma)Q_I(\vv),
\end{equation}
and every point of the weight space lies in at most $C(\alpha)=4/\alpha+1$
of the inner bodies $\vt_v+\tfrac\alpha2S_v$ of anchors of this band. Consequently, summing over all bands,
\begin{equation}\label{eq:packing-count}
 \sum_v\bigl|\bigl(\vt_v+\tfrac\alpha2S_v\bigr)\cap\Lambda_I\bigr|\le
 C(\alpha)\,\cL\,w.
\end{equation}
\end{lemma}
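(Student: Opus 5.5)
The plan is a greedy maximal packing in the epigraph, in the style of Macbeath-region economic cap covering, with Lemma~\ref{lem:macbeath} supplying the comparisons. Fix the band and the depth. Process the queries of the band in a fixed order (for instance by target). Declare $\vq$ a new anchor unless it already satisfies $\vq-\vv\in\gamma\widehat M_v$ for some earlier anchor $\vv$; if it does, assign it to the first such $\vv$. Every query is then assigned, and the first relation in \eqref{eq:anchor} holds by construction.

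For the second relation, the key step is that $\widehat M_v\subseteq M_I(\vv)$. Hence $\vq-\vv\in\gamma M_I(\vv)$, and Lemma~\ref{lem:macbeath} with $\rho=\gamma<1$ and $\mathcal E=\epi f_I$ gives
\[
(1-\gamma)M_I(\vv)\subseteq M_I(\vq)\subseteq(1+\gamma)M_I(\vv).
\]
Projecting to weight space yields the sandwich for $Q_I$. Only the membership of $\vq$ in the body matters here; the gap band plays no role in this step.

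For the overlap bound I would lift to the epigraph and show that the truncated bodies of distinct anchors are separated. Two facts are needed.

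\emph{Separation.} If $\vv'$ was processed after the anchor $\vv$ and became an anchor, then $\vv'-\vv\notin\gamma\widehat M_v$.

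\emph{Symmetric comparison.} I also want $\vv-\vv'\notin c\gamma\widehat M_{v'}$ for an absolute constant $c$. If this failed, Lemma~\ref{lem:macbeath} applied at $\vv'$ would give $M_I(\vv)\subseteq(1+c\gamma)M_I(\vv')$, and the inverse inclusion would put $\vv'-\vv$ inside $\gamma\widehat M_v$. The truncation by $HP\times\R$ scales along with the body, so it needs a short check.

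Now suppose a weight point $\vp$ lies in $\vt_v+\tfrac\alpha2 S_v$ for several anchors $\vv$. The band condition fixes the height scale: the gap $g_v\in[G/2,G)$ is the vertical half-extent of $M_I(\vv)$ over its centre. I would lift $\vp$ to the height $f_I(\vp)+G/2$, or rather to a point at relative height comparable to the Macbeath centre. Then every anchor body containing the projection contains, after the lift, the lifted point within $\tfrac\alpha2\widehat M_v$ up to an $O(\alpha)$ vertical error, using $g\in[G/2,G)$.

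With the lifted point $\vz$ in hand, the anchors covering $\vp$ all satisfy $\vz-\vv\in O(\alpha)\widehat M_v$. By Lemma~\ref{lem:macbeath} their bodies are all comparable to $\widehat M_z$ within a factor $1\pm O(\alpha)$. The anchors therefore lie in $O(\alpha)\widehat M_z$ while being pairwise $\Omega(\gamma)\widehat M_z$-separated, and $\gamma=4\alpha$. In the scalar direction along the segment, or by a volume or 1-D argument on the fibres, this bounds their number by $4/\alpha+1$.

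\textbf{Main obstacle.} I expect the hardest step to be getting a dimension-free count: pure volume packing would give $(C/\alpha)^{r+1}$. The stated count $4/\alpha+1$ suggests the argument is one-dimensional, namely along the vertical line over $\vp$, since the anchors' heights above $\vz$ in units of $G$ are spaced by at least $\alpha/4$ within a window of length $1$. I would try to prove that separation must occur in the vertical coordinate once the weight projections overlap at scale $\alpha/2$.

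Finally, \eqref{eq:packing-count} follows by swapping the sum: each lattice point of the inner body of some anchor lies in $\vt_v+\tfrac\alpha2 S_v\subseteq\vt_v+\eps_jQ_I(\vv)\cap H_jP$. This uses $\alpha/2\le\eps_*/2\le\eps_j$ from \ref{S:sched}, so the point belongs to $\cD_j(\cW)$ and is counted at most $C(\alpha)$ times per band. Summing over the $\cL$ bands completes the proof.
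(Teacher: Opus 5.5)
\textbf{Verdict: there is a genuine gap, in the overlap count.} Your anchor rule is fine: declaring $\vq$ an anchor unless $\vq-\vv\in\gamma\widehat M_v$ for an earlier anchor $\vv$ makes the first relation in \eqref{eq:anchor} automatic. The sandwich for $Q_I$ follows from Lemma~\ref{lem:macbeath} as you say, and your final summation over bands matches the paper. The trouble is the overlap count, which is the substance of the lemma.

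The uniform lift of $\vp$ you propose does not exist. By convexity of $f_I$, every point $(\vp,h)\in\vv+c\alpha M_I(\vv)$ with $c\alpha<1$ satisfies $(1-c\alpha)g_v\le h-f_I(\vp)\le(1+c\alpha)g_v$. Within one band the gaps $g_v$ may take any values in $[G/2,G)$. Two anchors covering $\vp$ with gaps near $G/2$ and near $G$ therefore have disjoint fibres above $\vp$, so no single lifted point $\vz$ satisfies $\vz-\vv\in O(\alpha)\widehat M_v$ for both. Hence the step ``all covering anchors lie in $O(\alpha)\widehat M_z$ and are pairwise $\Omega(\gamma)$-separated'' fails. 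Even if it held, a packing count in $r+1$ dimensions would give a dimension-dependent constant, not $4/\alpha+1$. The symmetric comparison you set up is not the ingredient needed. Your closing paragraph guesses the correct one-dimensional picture, but gives no mechanism that forces the vertical spacing.

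The missing argument has three steps.
\begin{enumerate}
\item \emph{Your rule makes the closed bodies $\vv+\alpha\widehat M_v$ pairwise disjoint.} Suppose $\vz$ lies in the bodies of an earlier anchor $\vv$ and a later anchor $\vv'$. Lemma~\ref{lem:macbeath} at $\vz$ gives $M_I(\vv')\subseteq\frac{1}{1-\alpha}M_I(\vz)\subseteq\frac{1+\alpha}{1-\alpha}M_I(\vv)$. Adding $\vz-\vv\in\alpha M_I(\vv)$ to $\vv'-\vz\in\alpha M_I(\vv')$ yields $\vv'-\vv\in\frac{2\alpha}{1-\alpha}M_I(\vv)\subseteq4\alpha M_I(\vv)$, with weight part in $2\alpha HP$. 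That is, $\vv'-\vv\in\gamma\widehat M_v$, contrary to the rule. The paper runs this in the other direction: it takes a maximal family whose $\alpha$-bodies have disjoint interiors, and uses the same double comparison to assign the remaining queries.
\item \emph{Each body contains a long vertical segment.} Write $\vp=\vt_v+\frac{\alpha}{2}\vu$ with $\vu\in S_v$, and choose $h_0$ with $(\vu,h_0)\in\widehat M_v$. Average this point with the vertical segment $\{\vzero\}\times[-g_v,g_v]\subseteq\widehat M_v$, then scale by $\alpha$. The body $\vv+\alpha\widehat M_v$ then contains a vertical segment above $\vp$ of length $\alpha g_v\ge\alpha G/2$.
\item \emph{All segments share a short window.} By the convexity estimate above with $c=1$, all these segments lie in the height window $(f_I(\vp),f_I(\vp)+2G)$. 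Pairwise disjoint segments of length at least $\alpha G/2$ in a window of length $2G$ number at most $4/\alpha$, which gives $C(\alpha)$.
\end{enumerate}
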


\begin{proof}
Choose greedily a maximal family of queries whose bodies
$\vv+\alpha\widehat M_v$ have disjoint interiors. If $\vq$ is not chosen, its
body meets some $\vv+\alpha\widehat M_v$ at a point $\vz$; two applications of
Lemma~\ref{lem:macbeath} give $\vq-\vv\in\frac{2\alpha}{1-\alpha}M_v\subseteq
4\alpha M_v$, and the weight part lies in $2\alpha HP$. This is the first
part of \eqref{eq:anchor}; the second follows from
Lemma~\ref{lem:macbeath} by projection.

For overlap, convexity of $f_I$ shows that $(\vz,h)\in \vq+\alpha M_q$ implies
\begin{equation}\label{eq:fibre}
 (1-\alpha)g(\vq)\le h-f_I(\vz)\le(1+\alpha)g(\vq).
\end{equation}
(For the lower bound use $f(\vt+\alpha \vu)\le(1-\alpha)f(\vt)+\alpha f(\vt+\vu)$;
for the upper bound write $\vt$ as a convex combination of $\vt+\alpha \vu$ and
$\vt-\vu$.) The vertical segment $\{\vzero\}\times[-g(\vq),g(\vq)]$ lies in
$\widehat M_q$. Averaging it with a point $(\vu,h_0)\in\widehat M_q$ and
scaling by $\alpha$ shows that above each $\vz\in \vt_q+\frac\alpha2S_q$ the
body $\vq+\alpha\widehat M_q$ contains an open vertical interval of length
$\alpha g(\vq)\ge\alpha G/2$; it lies in the interior of that body, because
points $(\vzero,h)$ with $|h|<g(\vq)$ are interior to $M_I(\vq)$ for targets in the
relative interior of $Z(I)$, and averaging an interior point with any point
of a convex body stays interior. By \eqref{eq:fibre} all these intervals
lie in an interval of length $2G$ above $f_I(\vz)$, and since the bodies have
disjoint interiors the intervals are disjoint, so at most
$4/\alpha+1$ anchors of one band cover $\vz$. Finally
$\vt_v+\frac\alpha2S_v\subseteq \vt_v+(\eps_jQ_I(\vv)\cap H_jP)$ because
$\alpha/2\le\eps_*/2\le\eps_j$, and $\vv\in\cW$; count lattice points and sum
over the $\cL$ bands.
\end{proof}

\begin{remark}[Allowance at the anchor price]\label{rem:anchor-allowance}
Let $\vpi\in\partial f_I(\vt_v)$ and $g_v=g(\vv)$. By LP duality,
$f_I(\vt)=\max_{\vpi'}\bigl[\vpi'\vt+\sum_{i\in I}\min(0,c_i-\vpi'\va_i)\bigr]$,
and the subgradients at $\vt$ are exactly the maximizers; hence
$f_I(\vt_v)=\vpi \vt_v+\sum_i\min(0,c_i-\vpi \va_i)$, i.e.\ the normalized
fractional value at $\vt_v$ is $0$ and the normalized allowance of $\vv$ is
$g_v$. Every $(\vu,h)\in M_I(\vv)$
satisfies $|h-\vpi \vu|\le g_v$, by the supporting inequality at $\vt_v\pm \vu$.
Hence every query $\vq=(\vt,s)$ assigned to $\vv$ has normalized allowance (with
respect to $\vpi$)
\begin{equation}\label{eq:allowance}
 a_q=s-\vpi \vt-\textstyle\sum_{i\in I}\min(0,c_i-\vpi \va_i)
 \in[(1-\gamma)g_v,(1+\gamma)g_v].
\end{equation}
\end{remark}

\begin{lemma}[Folding]\label{lem:folding}
Let $\Lambda$ be a lattice of rank $r$, $\vt\in\Lambda$, $S$ a symmetric
convex body, and $\rho\ge1$. Then
\[
 |(\vt+\rho S)\cap\Lambda|\le m^r\,|(\vt+\tfrac\alpha2S)\cap\Lambda|,\qquad
 m=\lceil4\rho/\alpha\rceil.
\]
\end{lemma}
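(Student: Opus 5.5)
The plan is a pigeonhole argument over residue classes of the sublattice $m\Lambda$, followed by a rescaling that moves each class into the small body $\vt+\frac\alpha2S$. The point of choosing $m=\lceil4\rho/\alpha\rceil$ is that shrinking a difference of two points of $\vt+\rho S$ by the factor $1/m$ lands exactly in $\frac\alpha2S$. So the stated constant comes out directly, with no covering argument.

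First, partition $(\vt+\rho S)\cap\Lambda$ according to the classes of $\Lambda/m\Lambda$. Since $\Lambda$ has rank $r$, there are exactly $m^r$ classes. It therefore suffices to show that each class $\mathcal C$ contributes at most $|(\vt+\frac\alpha2S)\cap\Lambda|$ points.

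Second, fix a nonempty class $\mathcal C$ and a representative $\vy_0\in\mathcal C\cap(\vt+\rho S)$. Define
\[
 \iota(\vy)=\vt+\frac{\vy-\vy_0}{m}\qquad\text{for }\vy\in\mathcal C\cap(\vt+\rho S).
\]
We check three properties of $\iota$.
\begin{itemize}
\item \emph{It maps into $\Lambda$.} Since $\vy\equiv\vy_0\pmod{m\Lambda}$, we have $(\vy-\vy_0)/m\in\Lambda$, and $\vt\in\Lambda$.
\item \emph{It maps into the small body.} Write $\vy=\vt+\rho\vs$ and $\vy_0=\vt+\rho\vs_0$ with $\vs,\vs_0\in S$. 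Because $S$ is symmetric and convex, $\vy-\vy_0=\rho(\vs-\vs_0)\in2\rho S$. Hence $(\vy-\vy_0)/m\in\frac{2\rho}{m}S$. Since $m\ge4\rho/\alpha$, we have $\frac{2\rho}m\le\frac\alpha2$, and a symmetric convex body containing $\vzero$ satisfies $\lambda S\subseteq\frac\alpha2S$ for $0\le\lambda\le\frac\alpha2$. So $(\vy-\vy_0)/m\in\frac\alpha2S$.
\item \emph{It is injective.} $\iota$ is an affine bijection of the ambient space.
\end{itemize}
Thus $\iota$ injects $\mathcal C\cap(\vt+\rho S)$ into $(\vt+\frac\alpha2S)\cap\Lambda$. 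Summing over the $m^r$ classes gives the claimed bound.

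The hypothesis $\rho\ge1$ is not needed for this argument; it only guarantees $m\ge1$ in a meaningful range. I do not expect a real obstacle. The only points needing care are the symmetry step $\vs-\vs_0\in2S$ and the inclusion $\lambda S\subseteq\mu S$ for $0\le\lambda\le\mu$. Both use only that $S$ is symmetric and convex, which gives $\vzero\in S$.
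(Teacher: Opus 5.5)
Your proof is correct and is the paper's argument: fix a point $\vz$ of $\vt+\rho S$ in each class modulo $m\Lambda$, and map the class injectively into $(\vt+\frac\alpha2S)\cap\Lambda$ by $\vx\mapsto\vt+(\vx-\vz)/m$, using $(\vx-\vz)/m\in(2\rho/m)S\subseteq\frac\alpha2S$. Your side remark is also accurate: the argument only needs $\rho>0$, so that $m\ge1$.
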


\begin{proof}
In each class modulo $m\Lambda$ fix a point $\vz$ of $\vt+\rho S$. The map
$\vx\mapsto \vt+(\vx-\vz)/m$ is injective on the class, takes values in $\Lambda$,
and $(\vx-\vz)/m\in(2\rho/m)S\subseteq\frac\alpha2S$.
\end{proof}

\subsection{Deviation control}

\begin{lemma}[Deviation bound]\label{lem:deviation}
Let $\vv=(\vt_v,s_v)$ be an anchor whose target lies in the relative interior
of $Z(I)$ (so $Q_I(\vv)$ is full-dimensional in the span of $I$), $\vx_0$ an
optimal fractional vertex at
$\vt_v$, and $\vq$ a query with $\vq-\vv\in\gamma' M_I(\vv)$, $0\le\gamma'\le1$. For
every binary solution $\vx$ of $\vq$,
\begin{equation}\label{eq:deviation}
 \sum_{i\in I}|x_i-x_{0i}|\,[-\va_i,\va_i]\subseteq(2+\gamma')Q_I(\vv)\subseteq3Q_I(\vv),
 \qquad
 \sum_{i\in I}|x_i-x_{0i}|\,\|\va_i\|_{Q_I(\vv)}\le3r.
\end{equation}
\end{lemma}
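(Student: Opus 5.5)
The plan is to prove the first inclusion by an averaging argument in the epigraph, and then derive the gauge bound from it as a purely convex-geometric consequence.

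\emph{First inclusion.} Fix a binary solution $\vx$ of $\vq=(\vt,s)$ and arbitrary $\theta_i\in[0,1]$. Put $\vdelta=\vx-\vx_0$ and define
\[
 \vx^+=\vx_0+\theta\circ\vdelta,\qquad \vx^-=\vx-\theta\circ\vdelta .
\]
Both vectors lie in $[0,1]^I$, since each coordinate is a convex combination of $x_{0i}$ and $x_i$. Moreover $\vx^++\vx^-=\vx_0+\vx$, so the two costs sum to $f_I(\vt_v)+\vc_I^{\T}\vx<f_I(\vt_v)+s$. The weights are $\vt_v+\vw'$ and $\vt-\vw'$ with $\vw'=\sum_i\theta_i\delta_i\va_i$. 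Writing $\tau_i=2\theta_i-1\in[-1,1]$, the half-difference of the two weights is
\[
 \vw=\tfrac12\Bigl(\textstyle\sum_i\tau_i|\delta_i|\,(\operatorname{sgn}\delta_i)\va_i+\vt_v-\vt\Bigr).
\]
This covers every point of $\tfrac12\sum_i|\delta_i|[-\va_i,\va_i]$, shifted by $(\vt_v-\vt)/2$. So the midpoint $\vp=\bigl((\vt_v+\vt)/2,\,(f_I(\vt_v)+s)/2\bigr)$ carries the symmetric pair $\vp\pm(\vw,\cdot)\in\epi f_I$.

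Next locate $\vp$ relative to $\vv$. We have $\vp-\vv=\tfrac12(\vq-\vv)+\tfrac12(\vzero,-g_v)$. Here $\vq-\vv\in\gamma'M_I(\vv)$ and $(\vzero,\pm g_v)\in M_I(\vv)$ (the vertical segment, as in the proof of Lemma~\ref{lem:packing}), so $\vp-\vv\in\tfrac{1+\gamma'}2M_I(\vv)$. For $\gamma'<1$, Lemma~\ref{lem:macbeath} then places the weight projection of $M_{\vp}$ inside a multiple of $Q_I(\vv)$. Finally, subtract the offset $(\vt_v-\vt)/2\in\tfrac{\gamma'}{2}Q_I(\vv)$ and double.

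I expect the bookkeeping to need care to land exactly on $2+\gamma'$ rather than a cruder $3+\gamma'$. If the Macbeath comparison is too lossy, I would argue directly from the definition. Write the target $\sum_i\tau_i|\delta_i|\va_i$ as the difference of the two weights plus $\vt_v-\vt$. Then pair the epigraph points $(\vt_v+\vw',\cdot)$ and $(\vt-\vw',\cdot)$ with the reflected points obtained from $\vq-\vv\in\gamma'M_I(\vv)$, splitting the result as ``$2Q$ from the pair, plus $\gamma'Q$ from the anchor offset'' using convexity of $\epi f_I$. The case $\gamma'=1$ follows by closedness, or directly from this argument.

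\emph{Second inequality.} Set $\lambda_i=|\delta_i|$. The first part gives a zonotope $Z'=\sum_i\lambda_i[-\va_i,\va_i]\subseteq3Q_I(\vv)$, so $\lambda_i\|\va_i\|_{Q_I(\vv)}\le3\,\|\lambda_i\va_i\|_{Z'}$. It therefore suffices to prove that every zonotope $Z'=\sum_i[-\vv_i,\vv_i]$ of rank $r$ satisfies $\sum_i\|\vv_i\|_{Z'}\le r$.

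This zonotope lemma is the main obstacle. I would first try an extremal argument. Choose indices $B$ with $\{\vv_i\}_{i\in B}$ an $r$-element subset of maximal $|\det|$; this is the same determinant device as in Lemma~\ref{lem:volume}(1). By Cramer's rule every $\vv_j$ is a combination of the $\vv_i$, $i\in B$, with coefficients in $[-1,1]$. From this, bound $\|\vv_j\|_{Z'}$ through the dual functionals of that basis, and sum using $h_{Z'}(\vell)=\sum_j|\vell\vv_j|$.

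If that route gives only $O(r)$, I would instead use an LP-duality identity. For each $i$ pick $\vell_i$ with $h_{Z'}(\vell_i)=1$ and $\vell_i\vv_i=\|\vv_i\|_{Z'}$, and average these functionals in the spirit of a trace argument. A constant-factor loss would still suffice for the later uses, but the aim is the stated $3r$.
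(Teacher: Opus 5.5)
Your pair $\vx^\pm$ is the right object: it equals $\tfrac12(\vx+\vx_0\pm\tau\circ\vdelta)$, which is the paper's pair. Your reduction of the gauge bound to $\sum_i\|\vv_i\|_{Z'}\le\rank Z'$ for $Z'=\sum_i[-\vv_i,\vv_i]$ is also valid. But neither half of your plan, as written, reaches the stated constants.

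\textbf{First inclusion.} Since $\vx^+-\vx^-=\tau\circ\vdelta$, the half-difference of the two weights is exactly $\tfrac12\sum_i\tau_i\delta_i\va_i$. The offset $\vt-\vt_v$ sits only in the midpoint, so there is nothing to subtract. More importantly, the Macbeath route is genuinely lossy. From $\vp-\vv\in\frac{1+\gamma'}2M_I(\vv)$, Lemma~\ref{lem:macbeath} only yields $M_{\vp}\subseteq\frac{3+\gamma'}2M_I(\vv)$. That gives $(3+\gamma')Q_I(\vv)$ in \eqref{eq:deviation} and $(3+\gamma')r$ later.

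Your fallback names the right extra ingredient, the reflected point. Read as an additive split (``$2Q$ from the pair plus $\gamma'Q$ from the offset''), however, it fails. The pair is centred at $(\vt_v+\vt)/2$ with cost sum below $f_I(\vt_v)+s$, not at $\vv$, so its difference is not in $2Q_I(\vv)$ by itself. What works is to mix the reflected point into both members. Take a fractional $\vz^-$ with $A_I\vz^-=\vt_v-(\vt-\vt_v)/\gamma'$ and $\vc^{\T}\vz^-\le s_v-(s-s_v)/\gamma'$; it exists because $\vv-(\vq-\vv)/\gamma'\in\epi f_I$. Put $\vy^\pm=(2\vx^\pm+\gamma'\vz^-)/(2+\gamma')$. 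These are box-feasible, their weights are exactly $\vt_v\pm\frac1{2+\gamma'}\sum_i\tau_i\delta_i\va_i$, and their average cost is below $s_v-g_v/(2+\gamma')<s_v$. The definition of $Q_I(\vv)$ then gives the factor $2+\gamma'$ directly (omit $\vz^-$ when $\gamma'=0$). This is exactly the paper's proof.

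\textbf{Gauge bound.} Choosing the maximal-determinant basis $B$ among the generators $\vv_i$ of $Z'$ does not work. Its dual functionals satisfy $|\vell_i\vv_j|\le1$ only generator by generator. The resulting estimate $\sum_j\|\vv_j\|_{Z'}\le\sum_{i\in B}\|\vv_i\|_{Z'}\,h_{Z'}(\vell_i)$ has every factor $\|\vv_i\|_{Z'}h_{Z'}(\vell_i)\ge\vell_i\vv_i=1$, so nothing forces it below $r$. For instance, take $\vv_1=\ve_1$, $\vv_2=\ve_2$ and $m\ge2$ copies of $\frac12(\ve_1+\ve_2)$. The basis $\{\ve_1,\ve_2\}$ maximizes the determinant, $\|\ve_i\|_{Z'}=\frac12$ and $h_{Z'}(\vell_i)=1+\frac m2$. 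So the estimate is $1+\frac m2$, while the true sum is below $2$.

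The paper instead takes points $\vq_1,\dots,\vq_r$ of maximal determinant inside the body $Q_I(\vv)$. Then the dual functionals satisfy $|\vlambda_j|\le1$ on all of $Q_I(\vv)$, and $\|\va\|_{Q_I(\vv)}\le\sum_j|\vlambda_j\va|$. Applying $\vlambda_j$ to the first inclusion gives $\sum_i|x_i-x_{0i}|\,|\vlambda_j\va_i|\le3$ for each $j$, hence $3r$.

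Your trace idea can also be completed, but as proposed it is only a hint. If $h_{Z'}(\vell_i)=1$ and $\vell_i\vv_i=\|\vv_i\|_{Z'}$, the map $\vz\mapsto\sum_i(\vell_i\vz)\vv_i$ sends $Z'$ into itself. So its eigenvalues have modulus at most $1$, and its trace $\sum_i\|\vv_i\|_{Z'}$ is at most $\rank Z'$.
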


\begin{proof}
Write $\vq=\vv+(\Delta \vt,\Delta s)$. For $\gamma'>0$ the opposite epigraph point
gives fractional $\vz^-$ with $A_I\vz^-=\vt_v-\Delta \vt/\gamma'$ and
$\vc^{\T}\vz^-\le s_v-\Delta s/\gamma'$. Let $|\delta_i|\le|x_i-x_{0i}|$; then
$\vx+\vx_0\pm\vdelta\in[0,2]^I$, so
$\vy^\pm=(\vx+\vx_0\pm\vdelta+\gamma'\vz^-)/(2+\gamma')$ are box-feasible with
weights $\vt_v\pm A_I\vdelta/(2+\gamma')$ and average cost below
$s_v-(s_v-f_I(\vt_v))/(2+\gamma')<s_v$. Thus
$A_I\vdelta\in(2+\gamma')Q_I(\vv)$; for $\gamma'=0$ omit $\vz^-$. For the second
inequality take $\vq_1,\dots,\vq_r\in Q_I(\vv)$ of maximal determinant with dual
functionals $\vlambda_j$; then $|\vlambda_j|\le1$ on $Q_I(\vv)$ and
$\|\va\|_{Q_I(\vv)}\le\sum_j|\vlambda_j(\va)|$. Apply $\vlambda_j$ to the first
inclusion and sum over $j$.
\end{proof}

%======================================================================
\section{Boundary and thin reductions}\label{sec:boundary-thin}

\begin{lemma}[Boundary targets]\label{lem:boundary}
Let $\vt$ lie in the relative interior of a proper face $\mathcal F$ of
$Z(I)$. A supporting functional of $\mathcal F$ forces every column on
which it is nonzero to a fixed endpoint in every fractional solution at
$\vt$; fixing these items gives an exact state of smaller rank, with
$Q$ unchanged up to translation; the child inherits the depth $j$ and lift
count $\ell$. Grouping queries by their minimal face, the face children
have total mass at most the parent mass.
\end{lemma}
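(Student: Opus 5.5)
We follow the standard zonotope face argument, then check the three claims: exactness, unchanged $Q$, and mass. Let $\vlambda$ be a linear functional on the span of $I$ with $\mathcal F=\{\vz\in Z(I):\vlambda\vz=h_{Z(I)}(\vlambda)\}$. Since $Z(I)=\sum_i[\vzero,\va_i]$, we have $h_{Z(I)}(\vlambda)=\sum_i\max(0,\vlambda\va_i)$. For any fractional $\vx\in[0,1]^I$ with $A_I\vx=\vt\in\mathcal F$,
\[
\sum_i\vlambda\va_i\,x_i=\sum_i\max(0,\vlambda\va_i),
\]
and each term satisfies $\vlambda\va_i x_i\le\max(0,\vlambda\va_i)$. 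So every term is tight: $x_i=1$ when $\vlambda\va_i>0$ and $x_i=0$ when $\vlambda\va_i<0$. Let $X=\{i:\vlambda\va_i\neq0\}$ and let $\vY$ be this forced assignment. Binary solutions are fractional solutions, so every solution of every query with target in $\mathcal F$ uses $\vY$ on $X$. By Lemma~\ref{lem:transfer}, the residual query $(\vt-A_X\vY,\,s-\vc_X^{\T}\vY)$ on $R=I\setminus X$ is then exactly equivalent to the original. Since $\mathcal F$ is proper, $\vlambda\neq0$ on the span, and the columns of $R$ lie in $\ker\vlambda$. Hence $\rank R<\rank I$. The child is created with the same $j$ and $\ell$, and the witness pointers record $\vY$.

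For $Q$, the forcing argument applies to $f_I$ near $\mathcal F$: $f_I(\vz)=\vc_X^{\T}\vY+f_R(\vz-A_X\vY)$ for every $\vz\in\mathcal F$. Points outside the face plane $\{\vlambda\vz=h_{Z(I)}(\vlambda)\}$ cannot lie in $Q_I(\vq)$, because one of $\vt\pm\vu$ then leaves $Z(I)$. Therefore $Q_I(\vq)$ lies in the direction space of the face and coincides with $Q_R$ of the residual query. The inclusion $\supseteq$ is also \eqref{eq:embedding}. So the translate $\vt+Q$ is the same set in the old coordinates.

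For the mass, group queries by the minimal face containing their target. Queries grouped to the same face share the same forced $\vY$, so they form one child, and their residual targets are the old targets shifted by $A_X\vY$. The child's mass counts points of $\Lambda_R$ in $\vt'+(\eps_jQ_R\cap H_jP_R)$. Since $P_R\subseteq P_I$, $\Lambda_R\subseteq\Lambda_I$, and the $Q$'s agree, translation by $A_X\vY$ maps these injectively into $\cD_j(\cW)$. Queries in different relative-interior faces have buffered sets contained in different affine face planes, because $Q$ lies in the face direction. I expect the \textbf{main obstacle} to be showing these sets are disjoint. The two face planes can intersect in a lower face, but the relative-interior faces of a polytope partition it. So I would check that $\vt+\eps_jQ_I(\vq)$ stays inside the relative interior of the face of $\vt$, which holds because $Q_I(\vq)$ is the projection of the Macbeath body $M_I(\vq)$ within that face and $\eps_j<1$. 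Summing over faces then gives total child mass at most $w$.
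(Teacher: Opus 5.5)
Your proposal is correct and follows essentially the same route as the paper: forcing via the supporting functional (you derive it from the zonotope support function, where the paper cites complementary slackness), $Q$ unchanged because both opposite points $\vt\pm\vu$ lie in $\mathcal F$, and disjointness of the translated child buffers because $\eps_j<1$ keeps them in $\relint\mathcal F$, while distinct minimal faces have disjoint relative interiors. The one step you left informal is exactly this: $\vt+\eps\vu=(1-\eps)\vt+\eps(\vt+\vu)$ with $\vt\in\relint\mathcal F$ and $\vt+\vu\in\mathcal F$, and $\vt+\vu\in\mathcal F$ is something you had already established.
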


\begin{proof}
Forcing is the usual complementary-slackness argument and is detected by
LP. Both opposite points defining $Q_I(\vq)$ lie in $\mathcal F$, so $Q$ is
unchanged after the fixed translation. Since $\eps_j<1$, buffered points of
the child, translated back, lie in $\relint\mathcal F$; distinct minimal
faces have disjoint relative interiors, and the child clipping uses
$P_{\rm child}\subseteq P_I$. Hence translated child buffers are disjoint
subsets of $\cD_j(\cW)$.
\end{proof}

\begin{lemma}[Thin anchors]\label{lem:thin}
Let $\vv$ be an interior anchor with $Q_I(\vv)\not\supseteq KP_I$, and put
$S_v=Q_I(\vv)\cap H_jP_I$. There is an exact reduction of the queries
assigned to $\vv$ into children whose pools are subsets of $I$, inheriting
the depth $j$. Summed over all thin anchors and bands,
\begin{equation}\label{eq:thin-mass}
 \sum w_{\rm child}\le\Athin w,\qquad \Athin=C_d\cL,
\end{equation}
and every child of the same rank satisfies
$\Psi_{\rm child}\le(rK/H_j)\Psi(I)$.
\end{lemma}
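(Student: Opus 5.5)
Since $Q_I(\vv)\not\supseteq KP_I$ and both bodies are symmetric and convex, some column $\va_{i_0}$ (a vertex of $P_I$) lies outside $KQ_I(\vv)$. Equivalently, there is a functional $\vlambda$ with $h_{Q_I(\vv)}(\vlambda)=1$ and $\vlambda\va_{i_0}>K$. I would use the gauge $\|\cdot\|_{Q_I(\vv)}$ rather than a single direction, and split $I$ into two parts. The \emph{large} items are $X=\{i:\|\va_i\|_{Q_I(\vv)}>H_j/(rK)\cdot(\text{suitable constant})\}$, together with the at most $r$ fractional coordinates of an optimal vertex $\vx_0$ at $\vt_v$. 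The survivors are $R=I\setminus X$. The reduction then proceeds by tabling the large items exactly and leaving the survivors to the child pool.

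For exactness, take any query $\vq$ assigned to $\vv$. Lemma~\ref{lem:packing} gives $\vq-\vv\in\gamma M_I(\vv)$, so Lemma~\ref{lem:deviation} applies with $\gamma'=\gamma$. Hence every binary solution $\vx$ satisfies $\sum_i|x_i-x_{0i}|\,\|\va_i\|_{Q_I(\vv)}\le3r$. Consequently at most $3r/\theta$ large items change away from $\vx_0$, where $\theta$ is the gauge threshold. Moreover, every prefix of changes in a fixed scan order has weight in $\vx_0$'s contribution $+3Q_I(\vv)$, by the first inclusion of \eqref{eq:deviation}. This justifies a removed-item table over $X$ restricted to the window $A_X\vx_{0,X}+3Q_I(\vv)\cap\Lambda_I$. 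For each table entry we keep the minimum cost witness and form the residual queries on $R$ by Lemma~\ref{lem:transfer}. By dominance, the children's union of solutions equals that of the original queries, so the reduction is exact. Children are subsets of $I$ and inherit $j$ and $\ell$.

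For the volume bound, the survivors satisfy $\va_i\in\theta Q_I(\vv)$. Since $K P_I\not\subseteq Q_I(\vv)$, the body $Q_I(\vv)\cap H_jP_I$ is thin relative to $P_I$. I would argue directly with a basis: using John's or dual functionals as in Lemma~\ref{lem:deviation}, $P_R\subseteq\theta Q_I(\vv)\cap P_I$. The volume of $\theta Q\cap P_I$ is at most $\vol(P_I)$ times the ratio of widths in direction $\vlambda$, that is at most $\theta\cdot\max(1/K)$ relative to $\vol P_I$, after choosing $\theta\approx H_j/(rK)$-type scaling so that the product gives $rK/H_j$. I am unsure of the exact parametrisation; the target inequality $\Psi_{\rm child}\le(rK/H_j)\Psi(I)$ suggests comparing $P_R$ against a slab of width $\le 2\cdot 3r/H_j\cdot$ (width of $P_I$) in direction $\vlambda$. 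This would use $\covol\Lambda_R\ge\covol\Lambda_I$, and Lemma~\ref{lem:volume}(3) for the case of smaller rank.

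The mass bound is the main obstacle. Child buffered points are $\vt'+\eps_jQ_R(\vq')\cap H_jP_R$, and by the embedding \eqref{eq:embedding} they lie, after translation by the table weight, inside $\vt+\eps_jQ_I(\vq)\subseteq\vt_v+\rho S_v$ for a constant $\rho$. The window is in $3Q_I(\vv)$, while $H_jP_R\subseteq H_j\theta Q$ is also a bounded multiple. Each child point is thus charged to a lattice point of $\vt_v+\rho' S_v$, and distinct residual targets of one table map injectively. The difficulty is that different table weights give different children overlapping the same parent point; their number is controlled by the number of lattice points of the window divided by the size of the child buffer. I would then apply Lemma~\ref{lem:folding} to reduce to $C_d|(\vt_v+\frac\alpha2S_v)\cap\Lambda_I|$ and sum with \eqref{eq:packing-count}, giving $C_d\,C(\alpha)\cL w=\Athin w$. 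The folding constant $m^r$ is dimension-dependent only because $\rho$ is an absolute constant (independent of $K,H_j$); ensuring this, rather than a factor like $(H_j)^r$, is the step I would check most carefully.
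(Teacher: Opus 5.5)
There is a genuine gap, and it starts with an inversion in your first step. The failure of $KP_I\subseteq Q_I(\vv)$ means that some $K\va_{i_0}\notin Q_I(\vv)$, i.e.\ $\|\va_{i_0}\|_{Q_I(\vv)}>1/K$. Separation then gives $\vlambda$ with $h_{Q_I(\vv)}(\vlambda)<K\,h_{P_I}(\vlambda)$. It does \emph{not} give a column outside $KQ_I(\vv)$; that would be the failure of the much weaker containment $Q_I(\vv)\supseteq P_I/K$. Thinness is compatible with, say, $Q_I(\vv)\supseteq\frac K2P_I$, where every column has gauge at most $2/K$.

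Your threshold $\theta\approx H_j/(rK)$ is therefore on the wrong side. It is a large number, since $H_j\ge H_{\min}=C_d\cL^2K$. In the example above no item is large, the child pool is $I$ minus at most $r$ items, and nothing makes $P_{\rm child}$ thinner than $P_I$ by the factor $rK/H_j$. The threshold must be tiny: call $i$ heavy if $\|\va_i\|_{Q_I(\vv)}>1/H_j$. Then every light column satisfies $|\vlambda(\va_i)|\le h_{Q_I(\vv)}(\vlambda)/H_j<(K/H_j)\,h_{P_I}(\vlambda)$, so $P_{\rm child}$ lies in the central slab of $P_I$ of relative half-width $K/H_j$.

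The factor $r$ in $rK/H_j$ is not the $3r$ of Lemma~\ref{lem:deviation}, and a plain ``ratio of widths'' bound is false: for a cross-polytope, a central slab of relative half-width $\delta'$ carries about $r\delta'$ of the volume. The missing ingredient is the slab estimate. For a symmetric convex $r$-body, the slab has volume at most the maximal central section times its width. The double cone over that section lies in the body and has volume $\frac2r$ times section times half-width. Hence the slab carries at most $r\delta'$ of the volume. Combined with $\covol\Lambda_{\rm child}\ge\covol\Lambda_I$ for a same-rank sublattice, this gives $\Psi_{\rm child}\le(rK/H_j)\Psi(I)$.

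The same small threshold is what makes the mass bound work, and it removes the difficulty in your last paragraph. Your window $3Q_I(\vv)$ is not contained in any bounded multiple of $S_v=Q_I(\vv)\cap H_jP_I$, because $Q_I(\vv)$ may be far larger than $H_jP_I$ in some directions. So Lemma~\ref{lem:folding} cannot be applied to it.

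With threshold $1/H_j$, the bound $\sum_i|x_i-x_{0i}|\,\|\va_i\|_{Q_I(\vv)}\le3r$ shows that fewer than $3rH_j$ heavy endpoints change, each by $\pm\va_i\in P_I$. Hence every prefix of heavy deviations lies in $3Q_I(\vv)\cap3rH_jP_I\subseteq3rS_v$. Branch on the at most $r$ fractional coordinates $I_{\rm fr}$ of $\vx_0$ (a factor $2^r$) and translate by the integral endpoint reference. A buffered child point is then $\vt_v$ plus three terms:
\begin{itemize}
\item a query offset in $\gamma S_v$;
\item a heavy deviation in $3rS_v$;
\item a child buffer displacement in $S_v$, since it is clipped to $H_jP_{\rm child}$ and light columns satisfy $H_j\va_i\in Q_I(\vv)\cap H_jP_I$.
\end{itemize}
All residual queries of one anchor and branch form a single batch. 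Its mass is therefore at most $|(\vt_v+(3r+2)S_v)\cap\Lambda_I|$ outright, and no counting of overlaps between table weights is needed. Lemma~\ref{lem:folding} with $\rho=3r+2$ and \eqref{eq:packing-count} then give $\Athin=C_d\cL$; here $\rho$ depends on the dimension but not on $K$ or $H_j$. The same bound covers the table windows. Your exactness argument (Lemma~\ref{lem:packing}, Lemma~\ref{lem:deviation} with $\gamma'=\gamma$, and dominance via Lemma~\ref{lem:transfer}) is fine once the window is clipped this way.
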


\begin{proof}
\emph{Reduction.} Let $I_{\rm fr}$ be the (at most $r$) fractional
coordinates of the anchor vertex $\vx_0$; branch on their $2^{|I_{\rm fr}|}$
binary values. Call $i\notin I_{\rm fr}$ \emph{heavy} if
$\|\va_i\|_{Q_I(\vv)}>1/H_j$, light otherwise. Heavy items are transferred by a
removed-item table that records deviations $Y_i-x_{0i}$ from the endpoint
reference; light items form the child pool. By
Lemma~\ref{lem:deviation}, a solution changes fewer than $3rH_j$ heavy
endpoints, and every partial sum of its heavy deviations lies in
$3Q_I(\vv)\cap3rH_jP_I\subseteq3rS_v$. Restricting all table prefixes to this
window loses no solution, and dominance is exact
(Lemma~\ref{lem:transfer}).

\emph{Mass.} Each light column satisfies $H_j\va_i\in Q_I(\vv)\cap H_jP_I$, so
$H_jP_{\rm child}\subseteq S_v$. Translate child targets by the integral
vector $A_{I_{\rm fr}}\vz_{\rm fr}+A_{\rm heavy}\vx_{0,\rm heavy}$. A
translated buffered point is then $\vt_v$ plus: the query offset (in
$\gamma S_v$), a heavy deviation (in $3rS_v$), and a child buffer
displacement (in $S_v$, by the embedding and the inclusion above). All of
them lie in $\vt_v+(3r+2)S_v$. The child lattices are sublattices of
$\Lambda_I$ after this translation. By Lemma~\ref{lem:folding} with
$\rho=3r+2$, every table window and every child buffer for $\vv$ has at most
$C_d|(\vt_v+\frac\alpha2S_v)\cap\Lambda_I|$ points; there are $2^r$
branches. Summing over anchors with \eqref{eq:packing-count} gives
\eqref{eq:thin-mass}. The same bound covers the intermediate table states.

\emph{Volume.} Since $KP_I\not\subseteq Q_I(\vv)$ there is $\vlambda$ with
$h_{Q_I(\vv)}(\vlambda)<Kh_{P_I}(\vlambda)$, so every light column has
$|\vlambda(\va_i)|<(K/H_j)h_{P_I}(\vlambda)$. For a symmetric convex $r$-body
the central slab of relative half-width $\delta'\le1$ has volume at most
$r\delta'$ times the body: the slab is at most the maximal (central)
section times its width, and the double cone over that section lies in the
body and has volume $2/r$ times section times half-width. A same-rank
child has covolume at least $\covol\Lambda_I$.
\end{proof}

%======================================================================
\section{Affordable depth and drift}\label{sec:depth}

Let $\vv$ be a \emph{deep} anchor: $Q:=Q_I(\vv)\supseteq KP$, $P=P_I$. Fix
$\vpi\in\partial f_I(\vt_v)$ and normalize by \eqref{eq:normalize}. Let
$\bar \vt_v=\vt_v-A_I\ve(\vpi)$; by Remark~\ref{rem:anchor-allowance} its
normalized fractional value is $0$ and its normalized allowance is $g_v$.

\begin{lemma}[Affordable depth]\label{lem:depth}
Let $Y\subseteq I$ contain every item of normalized cost at most
$\theta>0$. Let $H\ge K$, $S=Q\cap HP$, $\Pref=S/K$. For
$0\le\zeta'\le\xi\le1$ and $\rho=(\xi-\zeta')K-2\xi g_v/\theta\ge0$,
\begin{equation}\label{eq:depth}
 \bar \vt_v+\zeta'Q+\rho\Pref\subseteq\cT_Y(2\xi g_v),\qquad P\subseteq\Pref.
\end{equation}
If $Y'\subseteq Y$ has $|Y'|\le e\le\rho$, then \eqref{eq:depth} holds with
$\cT_{Y\setminus Y'}(2\xi g_v)$ in place of $\cT_Y(2\xi g_v)$ and $\rho-e$
in place of $\rho$.
\end{lemma}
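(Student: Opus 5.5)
The inclusion $P\subseteq\Pref$ is immediate. Since $Q\supseteq KP$ and $H\ge K$, we have $KP\subseteq Q\cap HP=S$, so $P\subseteq S/K=\Pref$. The main inclusion in \eqref{eq:depth} I would prove in two steps. First, a convexity argument in the normalized value function shows that the larger body $\bar\vt_v+\zeta'Q+(\rho+m)\Pref$, with $m=2\xi g_v/\theta$, lies in $\cT_I(2\xi g_v)$. Second, a separation argument uses the extra $m\Pref$ to absorb the items outside $Y$.

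\emph{Step 1 (whole pool).} Work in normalized coordinates at the anchor price $\vpi$. Write $\bar f$ for the normalized fractional value function, that is, the minimum of $\sum\bar c_iz_i$ over $\vz\in[0,1]^I$ with $\sum\bar\va_iz_i$ equal to a given weight. Then $\bar f\ge0$ because $\bar\vc\ge\vzero$, and $\bar f(\bar\vt_v)=0$ by Remark~\ref{rem:anchor-allowance}. The defining condition of $Q=Q_I(\vv)$ is invariant under subtracting the linear price term, since the terms $\pm\vpi\vu$ cancel. It therefore reads $\bar f(\bar\vt_v+\vu)+\bar f(\bar\vt_v-\vu)\le2g_v$ for $\vu\in Q$. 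Since $\bar f\ge0$, this gives $\bar f(\bar\vt_v+\vu)\le2g_v$. Convexity together with $\bar f(\bar\vt_v)=0$ then gives $\bar f(\bar\vt_v+\xi\vu)\le2\xi g_v$.

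Now take $\vu\in Q$, $\vp\in\Pref$ and $\rho'\le(\xi-\zeta')K$. The point $\zeta'\vu+\rho'\vp$ can be written as $\xi\vw$ with
\[
 \vw=\tfrac{\zeta'}{\xi}\vu+\tfrac{\rho'}{\xi K}(K\vp).
\]
Here $K\vp\in S\subseteq Q$, the two coefficients sum to at most $1$, and $Q$ is symmetric, convex and contains $\vzero$, so $\vw\in Q$. Taking $\rho'=\rho+m$ yields $\bar\vt_v+\zeta'Q+(\rho+m)\Pref\subseteq\cT_I(2\xi g_v)$.

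\emph{Step 2 (restricting to $Y$).} Suppose some $\vy\in\bar\vt_v+\zeta'Q+\rho\Pref$ is not in the compact convex set $\cT_Y(2\xi g_v)$. Separation gives a functional $\vlambda$ with $\vlambda\vy>h_{\cT_Y(2\xi g_v)}(\vlambda)$. Pick $\vp^*\in P$ with $\vlambda\vp^*=h_P(\vlambda)$ and set $\vy''=\vy+m\vp^*$. Since $P\subseteq\Pref$, Step~1 applies to $\vy''$ and gives a fractional $\vz$ with normalized cost at most $2\xi g_v$ representing $\vy''$.

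Every item outside $Y$ has $\bar c_i>\theta$, so $\sum_{i\notin Y}z_i\le 2\xi g_v/\theta=m$. Because $P$ is symmetric, convex and contains $\vzero$, the weight $\sum_{i\notin Y}z_i\bar\va_i$ lies in $mP$. The restriction $\vz|_Y$ has cost at most $2\xi g_v$, hence weight in $\cT_Y(2\xi g_v)$, and
\[
 \vlambda\Bigl(\sum_{i\in Y}z_i\bar\va_i\Bigr)\ \ge\ \vlambda\vy''-m\,h_P(\vlambda)\ =\ \vlambda\vy.
\]
This contradicts the choice of $\vlambda$, so $\vy\in\cT_Y(2\xi g_v)$.

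\emph{Removing $Y'$.} For the statement with $Y\setminus Y'$, run the same argument with slack $m+e$ in place of $m$, applying the inclusion for $Y$ with $\rho-e\ge0$ and pushing by $(m+e)\vp^*$. Any fractional $\vz$ puts mass at most $e$ on $Y'$, so the discarded weight lies in $(m+e)P$. The main obstacle is Step~2: the removed mass is not controlled pointwise, only in total. The separation trick with the extremal direction $\vp^*$ is what lets the slack $m\Pref$ convert a bound on total mass into exact containment. The other thing to check carefully is that $Q$ is indeed price-invariant in normalized coordinates, including the change of orientation $\bar\va_i=\sigma_i\va_i$ and the shift by $A_I\ve$.
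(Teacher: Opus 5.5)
Your proof is correct and takes essentially the paper's route. Step~1 is the paper's mixing argument $\bar\vt_v+\xi Q\subseteq\cT_I(2\xi g_v)$ combined with $K\Pref\subseteq Q$. Your separation step with the extremal point $\vp^*$ is the pointwise form of the paper's support-function cancellation of the summand $(2\xi g_v/\theta)P$, respectively $eP$, in $\cT_I(2\xi g_v)\subseteq\cT_Y(2\xi g_v)+(2\xi g_v/\theta)P$.
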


\begin{proof}
For $\vu\in Q$ the paired witnesses have total normalized cost at most
$2g_v$, hence the one at $\bar \vt_v+\vu$ costs at most $2g_v$; mixing with
weight $\xi$ with a zero-cost witness at $\bar \vt_v$ gives
$\bar \vt_v+\xi Q\subseteq\cT_I(2\xi g_v)$. In a representation of cost at
most $2\xi g_v$, items outside $Y$ have total coefficient at most
$2\xi g_v/\theta$, so $\cT_I(2\xi g_v)\subseteq\cT_Y(2\xi g_v)+
(2\xi g_v/\theta)P$. In support functions,
\[
 h_{\cT_Y(2\xi g_v)}\ge\vlambda(\bar \vt_v)+\zeta'h_Q+(\xi-\zeta')h_Q
 -(2\xi g_v/\theta)h_P .
\]
Since $KP\subseteq S$ we have $P\subseteq\Pref$ and $Q\supseteq K\Pref$, so
$h_Q\ge Kh_{\Pref}$ and $h_P\le h_{\Pref}$; this is the support function
of the left side of \eqref{eq:depth}. Deleting a set $Y'$ of $e$ items with
nonnegative costs gives $\cT_Y(b)\subseteq\cT_{Y\setminus Y'}(b)+eP\subseteq
\cT_{Y\setminus Y'}(b)+e\Pref$; subtracting $e\,h_{\Pref}$ in the support
inequality gives the last assertion.
\end{proof}

We use the cheap threshold and budget
\begin{equation}\label{eq:theta0}
 \theta_0=\frac{g_v}{128k_0},\qquad \varrho=\frac{g_v}{32},
 \qquad\text{so that}\quad \frac{\varrho}{2\theta_0}=2k_0,\quad
 \frac{(1+\gamma)g_v}{\theta_0}=128(1+\gamma)k_0 .
\end{equation}

\begin{lemma}[Drift of removed assignments]\label{lem:drift}
Let $X\subseteq I$ contain at most $e'$ items of normalized cost at most
$\theta_0$, let $\vq$ be a query assigned to $\vv$, and let $\vY$ be an
assignment of $X$ whose normalized cost is less than $a_q$ (this holds, in
particular, whenever the residual of $\vq$ has positive normalized allowance,
for instance positive gap). Then for every $X'\subseteq X$
\[
 A_{X'}\bigl(\vY-\ve_{X'}(\vpi)\bigr)\in\bigl(128(1+\gamma)k_0+e'\bigr)P.
\]
In particular, for $e'\le\kappa J+k_0$ every such partial displacement,
including every prefix of a removed-item table, lies in $\frac E2P$.
\end{lemma}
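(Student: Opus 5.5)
The plan is to work entirely in normalized coordinates at the anchor price $\vpi$ and bound the change variables of $\vY$ relative to the preferred endpoints. Write $z_i=\sigma_i(Y_i-e_i(\vpi))\in\{0,1\}$ for $i\in X$. Then $z_i=1$ exactly when item $i$ deviates from its preferred choice, and
\[
 A_{X'}\bigl(\vY-\ve_{X'}(\vpi)\bigr)=\sum_{i\in X'}\bar\va_iz_i .
\]
The normalized cost of $\vY$ is $\sum_{i\in X}\bar c_iz_i$, and by hypothesis it is less than $a_q$. Remark~\ref{rem:anchor-allowance} gives $a_q\le(1+\gamma)g_v$. For the parenthetical claim, the residual normalized allowance is $a_q$ minus the normalized cost of $\vY$; this uses the normalization identity from \eqref{eq:normalize}, applied to the decomposition into $X$ and $I\setminus X$. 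If the residual has positive gap, then its normalized allowance is positive, because the residual normalized fractional value is $\ge0$, all $\bar c_i$ being nonnegative. So the hypothesis holds in that case.

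Next, split $X'$ into cheap items ($\bar c_i\le\theta_0$) and expensive items ($\bar c_i>\theta_0$). There are at most $e'$ cheap items in $X$. Each contributes $\bar\va_iz_i\in[\vzero,\bar\va_i]\subseteq P$, where I use that $P$ is symmetric and convex and contains $\pm\va_i$. Hence their total lies in $e'P$. For the expensive items, every $i$ with $z_i=1$ costs more than $\theta_0$, and all costs are nonnegative. So the number of deviating expensive items in $X'$ is at most
\[
 \frac{a_q}{\theta_0}<\frac{(1+\gamma)g_v}{\theta_0}=128(1+\gamma)k_0
\]
by \eqref{eq:theta0}. Their sum therefore lies in $128(1+\gamma)k_0P$. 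Since $P$ is convex, the Minkowski sum gives $(128(1+\gamma)k_0+e')P$. This argument works for every subset $X'$, and in particular for table prefixes.

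Finally, for $e'\le\kappa J+k_0$ we have $\kappa=3k_0$, so the coefficient is at most $128(1+\gamma)k_0+3Jk_0+k_0\le 130k_0+3Jk_0=E/2$, using $E=(260+6J)k_0$ and $\gamma=2^{-13}$.

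I expect only one delicate point: checking that the normalized cost of $\vY$ and $a_q$ are measured consistently. That is, I need that the residual normalized allowance equals $a_q$ minus the normalized cost of $\vY$ on $X$. This follows from the substitution $x_i=e_i+\sigma_iz_i$, applied itemwise, with the constant terms $\vc_I^{\T}\ve+\vpi(\vt-A_I\ve)$ splitting additively over $X$ and $I\setminus X$.
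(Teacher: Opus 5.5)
Your proof is correct and follows the paper's argument. Fewer than $(1+\gamma)g_v/\theta_0=128(1+\gamma)k_0$ deviations can occur on items of normalized cost above $\theta_0$, since the normalized cost of $\vY$ is below $a_q\le(1+\gamma)g_v$; at most $e'$ further terms come from cheap items; each term is $\pm\va_i\in P$; and $128(1+\gamma)+1\le130$ yields $E/2$. (One small slip: the strict inequality belongs on the count, which is $<a_q/\theta_0$, not on $a_q/\theta_0<(1+\gamma)g_v/\theta_0$, where equality is possible, but only the non-strict bound is needed for the containment.)
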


\begin{proof}
The normalized cost of $\vY$ is less than $a_q\le(1+\gamma)g_v$
(Remark~\ref{rem:anchor-allowance}), and normalized costs are nonnegative.
Each deviation $Y_i\ne e_i(\vpi)$ on an item of cost greater than $\theta_0$
costs more than $\theta_0$, so there are fewer than
$128(1+\gamma)k_0$ of them, and at most $e'$ deviations on cheap items. A
partial sum over $X'$ consists of at most this many terms $\pm \va_i\in P$. Finally $128(1+\gamma)k_0+\kappa J+k_0\le
(130+3J)k_0=E/2$.
\end{proof}

%======================================================================
\section{Deep anchors I: canonical cores and arithmetic children}
\label{sec:arithmetic}

Subproblems created at different supporting prices must be compared. We do
this with a uniquely determined subpool whose lattice survives every
deletion of $\kappa$ items.

\subsection{The canonical resilient core}

\begin{definition}
A subpool $Y'\subseteq Y$ is \emph{$\kappa$-resilient} if
$\lat(Y'\setminus F')=\lat(Y')$ for all $F'\subseteq Y'$ with
$|F'|\le\kappa$. The union of all $\kappa$-resilient subpools of $Y$ is
denoted $\cR_\kappa(Y)$.
\end{definition}

\begin{lemma}[Canonical core]\label{lem:core}
$\cR_\kappa(Y)$ is the largest $\kappa$-resilient subpool of $Y$, it is
monotone in $Y$, and
$\cR_\kappa(Y)=\{i\in Y:\va_i\in\lat\cR_\kappa(Y)\}$. It is obtained by
repeating: delete at most $\kappa$ items if this strictly shrinks the
generated lattice, then restore all items of $Y$ lying in the new lattice.
\end{lemma}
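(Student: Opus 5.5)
The plan is to derive every assertion from one observation: a union of $\kappa$-resilient subpools is again $\kappa$-resilient. Then we analyse the deletion/restoration procedure through an invariant that sandwiches $\cR_\kappa(Y)$.

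\emph{Step 1: unions are resilient.} Let $Y_1,Y_2\subseteq Y$ be $\kappa$-resilient and let $F'\subseteq Y_1\cup Y_2$ with $|F'|\le\kappa$. Then $|F'\cap Y_m|\le\kappa$ for $m=1,2$, so
\[
 \lat\bigl((Y_1\cup Y_2)\setminus F'\bigr)\supseteq\lat(Y_1\setminus F')+\lat(Y_2\setminus F')=\lat Y_1+\lat Y_2=\lat(Y_1\cup Y_2).
\]
The reverse inclusion is trivial, so $Y_1\cup Y_2$ is resilient. The empty pool is vacuously resilient and $Y$ is finite. Hence $\cR_\kappa(Y)$ is a finite union of resilient pools, so it is itself resilient and is the largest such subpool. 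Monotonicity is immediate: if $Y\subseteq Y''$, then every resilient subpool of $Y$ is a resilient subpool of $Y''$.

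\emph{Step 2: closure.} Write $R=\cR_\kappa(Y)$, $L=\lat R$ and $R'=\{i\in Y:\va_i\in L\}\supseteq R$. Then $\lat R'=L$. For $|F'|\le\kappa$ we have
\[
 \lat(R'\setminus F')\supseteq\lat(R\setminus (F'\cap R))=L,
\]
so $R'$ is resilient. By maximality $R'=R$, which is the claimed description of $\cR_\kappa(Y)$.

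\emph{Step 3: the procedure.} I would prove correctness through the invariant
\[
 R\subseteq Y_{\rm cur}=\{i\in Y:\va_i\in\lat Y_{\rm cur}\},
\]
which holds initially for $Y_{\rm cur}=Y$. Suppose a step deletes $F'$ with $|F'|\le\kappa$ such that $L'=\lat(Y_{\rm cur}\setminus F')$ is strictly smaller than $\lat Y_{\rm cur}$, and then restores $Y_{\rm new}=\{i\in Y:\va_i\in L'\}$.

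\begin{itemize}
\item The invariant persists. Since $Y_{\rm new}\supseteq Y_{\rm cur}\setminus F'$, we get $\lat Y_{\rm new}=L'$. Resilience of $R$ gives $\lat R=\lat(R\setminus F')\subseteq L'$, so $R\subseteq Y_{\rm new}$.
\item The pool strictly shrinks. Closedness of $Y_{\rm cur}$ and $L'\subseteq\lat Y_{\rm cur}$ give $Y_{\rm new}\subseteq Y_{\rm cur}$. The inclusion is strict because the lattices differ.
\end{itemize}

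So the process terminates after at most $|Y|$ steps. At termination no admissible deletion shrinks the lattice, which means $Y_{\rm cur}$ is resilient. By the invariant $Y_{\rm cur}\supseteq R$, and maximality then gives $Y_{\rm cur}=R$. The outcome is therefore independent of which shrinking deletions were chosen.

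\emph{Main obstacle.} The one delicate point is Step 1. A naive argument would try to delete items one class at a time, but that approach does not directly show uniqueness. The union argument avoids it by exploiting that the budget $\kappa$ is \emph{per} subpool, since $|F'\cap Y_m|\le|F'|$.

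Efficiency of finding a shrinking $F'$ is not part of this statement. I would only note that it is where lattice computations (Hermite normal form) would enter later.
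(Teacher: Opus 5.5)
Your proposal is correct and follows the paper's proof essentially step for step. Closure of resilience under unions gives existence, maximality and monotonicity; adding items that lie in the core lattice preserves resilience, which gives the closure identity; and the invariant that the resilient core survives every deletion--restoration step shows that the procedure ends exactly at $\cR_\kappa(Y)$, with your check that the closed pool strictly shrinks supplying the termination argument the paper leaves implicit.
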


\begin{proof}
If $Y_\iota$ are resilient and $|F'|\le\kappa$, each $Y_\iota\setminus F'$
still generates $\lat Y_\iota$, so the union is resilient; this gives
existence, uniqueness, and monotonicity. Adding an item lying in the
lattice of a resilient pool keeps it resilient, which gives the closure
identity. A resilient $S\subseteq Y$ survives every step of the procedure:
after a permitted deletion its survivors still generate $\lat S$, so all of
$S$ is restored. At termination the current pool is resilient and contains
every resilient subpool.
\end{proof}

\begin{lemma}[Constructing a deep core]\label{lem:core-alg}
Let $Y\subseteq I$ with $\vz+\rho P\subseteq Z(Y)$ and
$2\rho>\kappa(J+r)$. Then the procedure of Lemma~\ref{lem:core} keeps rank
$r$, performs at most $J$ strict reductions, and excludes at most $\kappa J$
items of $Y$. It runs in time
$C_d|I|(\kappa+1)(J+1)\Psi_0\cdot\operatorname{poly}(\text{bit length})$.
\end{lemma}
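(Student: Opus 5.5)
We track the current pool $Y_m$ through the procedure of Lemma~\ref{lem:core}, starting from $Y_0=Y$, and show three things: each strict reduction at least doubles the index of the current lattice inside $\Lambda_I$ while the rank stays $r$; the rank never drops; and each step excludes few items.

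\textbf{Counting reductions.} Write $\Lambda_m=\lat Y_m$. A strict reduction produces a proper sublattice $\Lambda_{m+1}\subsetneq\Lambda_m$. If the rank is still $r$, then $[\Lambda_m:\Lambda_{m+1}]\ge2$, so $[\Lambda_I:\Lambda_m]\ge2^m$. By Lemma~\ref{lem:volume}(2), $\Psi(Y_m)\le\Psi(I)2^{-m}$, and by Lemma~\ref{lem:volume}(1), $\Psi(Y_m)\ge1$. Since $\Psi(I)\le\Psi_0$, there are at most $\log_2\Psi_0\le J$ strict reductions. The pool after restoration is $\{i\in Y:\va_i\in\Lambda_{m+1}\}$. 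This pool is determined by the lattice, and every restored item lies in $\Lambda_{m+1}$, so each step deletes for good exactly the items of $Y$ outside $\Lambda_{m+1}$.

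\textbf{Rank is preserved and exclusions are bounded.} This is the main step, and the one I expect to be the obstacle. I would show inductively that at most $\kappa m$ items of $Y$ have been excluded after $m$ steps, and that the rank is $r$. Suppose, for some step, that $Y_m\setminus F'$ (with $|F'|\le\kappa$) generates a lattice $\Lambda'$ whose span $V$ is proper, or that $\Lambda'$ omits more than $\kappa$ items of $Y$ beyond those already excluded.

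For the rank: pick $\vlambda$ vanishing on $V$. The items of $Y$ with $\vlambda\va_i\ne0$ then number at most $\kappa m+\kappa\le\kappa(J+1)$, counting earlier exclusions plus $F'$. So $Z(Y)$ has $\vlambda$-width at most $2\kappa(J+1)\,h_P(\vlambda)$. But $\vz+\rho P\subseteq Z(Y)$ forces width at least $2\rho\,h_P(\vlambda)$, which is impossible because $2\rho>\kappa(J+r)$. (I will need to check that the accounting gives $J+r$ rather than $J+1$. I expect each of the at most $r$ rank drops is charged once: iterating, the rank could in principle drop at most $r$ times.)

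For the item count after restoration: the items removed at step $m$ are those of $Y_m$ outside $\Lambda_{m+1}$. Every such item lies in $F'$, because $Y_m\setminus F'$ generates $\Lambda_{m+1}$ and hence all its members lie in $\Lambda_{m+1}$. So at most $\kappa$ items are removed per step, and at most $\kappa J$ in total.

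\textbf{Running time.} Each step must find, if one exists, a set $F'$ of at most $\kappa$ items whose deletion shrinks the lattice. Such a set is witnessed by a maximal sublattice, that is, one of prime index, or a rank drop, containing all other items. I would test this via Hermite normal form with respect to $\Lambda_m$. Alternatively, for each residue pattern modulo the current lattice cell, and there are at most $\Psi_0$ classes by Lemma~\ref{lem:volume}(1), count the items outside candidate index-$p$ sublattices. This gives $|I|(\kappa+1)\Psi_0\operatorname{poly}$ work per step, times $J+1$ steps.
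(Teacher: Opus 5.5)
Your structural part follows the paper: index doubling caps the number of full-rank strict reductions at $\log_2\Psi_0\le J$; at most $\kappa$ items leave per step because $Y_m\setminus F'$ already lies in the new lattice; and a width argument rules out loss of rank. The genuine gap is the running time.

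You must decide, in about $(\kappa+1)\Psi_0|I|\cdot\operatorname{poly}$ time per step, whether some $F'$ with $|F'|\le\kappa$ shrinks $\lat A'$. None of your suggestions achieves this:
\begin{itemize}
\item Brute force over $F'$ costs $|I|^{\kappa}$.
\item A Hermite normal form does not by itself search over $F'$.
\item Enumerating candidate maximal sublattices is too slow. $\lat A'\cong\Z^r$ has $(p^r-1)/(p-1)$ sublattices of prime index $p$, and every prime up to $\Psi_0$ can be relevant. That is roughly $\Psi_0^{r}=N^{O(d^2)}$ candidates, exactly the blow-up the paper avoids.
\end{itemize}
Your ``at most $\Psi_0$ classes'' is valid only for a quotient $\lat A'/\lat B'$, where $B'$ is a basis of actual columns known to survive the deletion, i.e.\ $B'\cap F'=\emptyset$. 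Since $F'$ is unknown, one basis is not enough. Your bound quotes the factor $\kappa+1$, but nothing in your argument produces it.

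The missing idea is to greedily extract $\kappa+1$ pairwise disjoint bases from the current pool. This is possible because at most $\kappa J$ excluded items plus at most $\kappa r$ items of earlier bases are absent. The width argument with $2\rho>\kappa(J+r)$ then keeps rank $r$, and this is where the term $\kappa r$ comes from. It is not a charge for rank drops: the rank never drops.

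Any $F'$ with $|F'|\le\kappa$ misses some $B'$, so $\lat(A'\setminus F')\supseteq\lat B'$. A strict shrinking is therefore witnessed by a nontrivial character of the finite group $\lat A'/\lat B'$ that vanishes on $A'\setminus F'$ and is nonzero on some item of $F'$. This group has order at most $\Psi_0$ by Lemma~\ref{lem:volume}(1). Conversely, any nontrivial character that is nonzero on between $1$ and $\kappa$ items gives a valid deletion. Enumerate these characters by Smith normal form and evaluate each on all items. This gives $(\kappa+1)\Psi_0|I|\cdot\operatorname{poly}$ work per step, over $J+1$ steps.

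There is also a factor-two slip in your rank argument. The $\vlambda$-width of $Z(Y)=\sum_{i\in Y}[\vzero,\va_i]$ is $\sum_{i\in Y}|\vlambda\va_i|$, so it is at most $\kappa(J+1)h_P(\vlambda)$, not $2\kappa(J+1)h_P(\vlambda)$. With your factor $2$, comparing against $2\rho h_P(\vlambda)$ would require $\rho>\kappa(J+1)$, which the hypothesis does not give. The correct bound closes, since $\kappa(J+1)\le\kappa(J+r)<2\rho$.
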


\begin{proof}
If at most $\kappa(J+r)$ items are absent and a linear functional
$\vlambda$, nonzero on the span of $I$ (so $h_P(\vlambda)>0$), vanishes on
all present columns, then the width of $Z(Y)$ in direction
$\vlambda$ is at most $\kappa(J+r)h_P(\vlambda)<2\rho h_P(\vlambda)$, a
contradiction. So rank is kept while at most $J$ reductions have occurred
(the term $\kappa J$ covers excluded items, the term $\kappa r$ the bases
extracted below).
Every intermediate lattice contains a basis of actual columns, whose index
in $\Lambda_I$ is at most $\Psi(I)\le\Psi_0$ (Lemmas~\ref{lem:volume}
and~\ref{lem:depth-bounds}); each
strict full-rank reduction at least doubles the index, so there are at most
$\log_2\Psi_0\le J$ of them.

\emph{Deletion oracle.} In the current pool $A'$ extract greedily $\kappa+1$
disjoint bases (possible by the width argument). For each basis $B'$
enumerate the characters $\chi$ of $\lat A'/\lat B'$, a group of order at
most $\Psi_0$, by Smith normal form. If some nontrivial $\chi$ is nonzero on
between $1$ and $\kappa$ current items, deleting them strictly shrinks the
lattice. Conversely, if deleting $F'$ ($|F'|\le\kappa$) leaves a proper
sublattice $\Gamma'$, some extracted basis misses $F'$, so
$\lat B'\subseteq\Gamma'$, and a nontrivial character of $\lat A'/\Gamma'$
vanishes on all survivors and is nonzero on some item of $F'$.
\end{proof}

\subsection{Threshold plateau and the every-price collar}

Fix a deep anchor $\vv$, a price $\vpi\in\partial f_I(\vt_v)$, and the
normalization \eqref{eq:normalize}. Put
\[
 L_{\vpi}(\theta)=\{i\in I:\bar c_i(\vpi)\le\theta\},\qquad
 \Gamma_{\vpi}(\theta)=\lat\cR_\kappa(L_{\vpi}(\theta)).
\]
By Lemma~\ref{lem:depth} with $\xi=1/128$, $\zeta'=0$, $\theta=\theta_0$,
\[
 \bar \vt_v+(K/128-2k_0)P\subseteq\cT_{L_{\vpi}(\theta_0)}(\varrho/2).
\]
The affordable body lies in a translate of $Z(L_{\vpi}(\theta_0))$, so by \ref{S:core} and Lemma~\ref{lem:core-alg} every $\Gamma_{\vpi}(\theta)$
with $\theta\ge\theta_0$ has full rank (here $r\le D$ by
Lemma~\ref{lem:depth-bounds}). These lattices increase with $\theta$
(monotonicity of $\cR_\kappa$), and as functions of $\theta$ they are
constant except at the values $\bar c_i(\vpi)$, where they may jump; each
is right-continuous since $L_{\vpi}(\theta)$ is defined by $\bar c_i\le\theta$.
Each contains a basis of actual columns, so its index in $\Lambda_I$ is at
most $\Psi(I)\le\Psi_0$, and each strict increase at least halves this
index; hence there are at most $\log_2\Psi_0<J$ strict changes. Divide
$[\theta_0,2\theta_0]$ into $J+1$ intervals $[\theta_0+m\Delta,
\theta_0+(m+1)\Delta]$ with $\Delta=\theta_0/(J+1)$. The half-open intervals
$(\theta_0+m\Delta,\theta_0+(m+1)\Delta]$ are disjoint and fewer than $J$ of
them contain a change point, so some interval $[\theta_-,\theta_+]$
satisfies
\begin{equation}\label{eq:plateau}
 \Gamma_{\vpi}(\theta_-)=\Gamma_{\vpi}(\theta_+)=:\Gamma.
\end{equation}
Let $\tau=(\theta_-+\theta_+)/2$, $R=\{i\in I:\va_i\in\Gamma\}$, and
$X=I\setminus R$. Then $\lat R=\Gamma$, and by Lemma~\ref{lem:core-alg} at
most $\kappa J$ items of $X$ lie in $L_{\vpi}(\theta_+)$. Put
$q_{\rm col}=4\varrho/\Delta=16(J+1)k_0$.

\begin{lemma}[Every-price collar]\label{lem:collar}
Let $(\vt',s')$ be a positive-gap $R$-query obtained from a query assigned to
$\vv$ by an assignment $\vY$ of $X$, and let $\vy$ be a point of its buffer at
depth $j+1$. Then, with
$V_R(\vw)=f_R(\vw)-\vpi \vw-\sum_{i\in R}\min(0,c_i-\vpi \va_i)$,
\begin{equation}\label{eq:collar}
 0\le V_R(\vy+\vz)\le\varrho\qquad(\vz\in q_{\rm col}P).
\end{equation}
Consequently every $\vpi'\in\partial f_R(\vy)$ satisfies
\begin{equation}\label{eq:collar-price}
 |(\vpi'-\vpi)\va_i|\le\Delta/4\quad(i\in I),\qquad
 \Gamma=\lat\cR_\kappa\bigl(L_{\vpi'}(\tau)\bigr).
\end{equation}
This also holds for merged queries, via their attaining source.
\end{lemma}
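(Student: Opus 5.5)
The plan is to read $V_R$ as the fractional value of the pool $R$ in the normalization of the anchor price $\vpi$, so the lower bound in \eqref{eq:collar} is weak LP duality and the upper bound comes from Lemma~\ref{lem:depth}. The price statement \eqref{eq:collar-price} will then follow from convexity and monotonicity of the canonical core.

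\emph{Lower bound.} For every row vector $\vpi''$ and every $\vw$ we have $f_R(\vw)\ge\vpi''\vw+\sum_{i\in R}\min(0,c_i-\vpi''\va_i)$, as in Section~\ref{ov:pricing}. Taking $\vpi''=\vpi$ gives $V_R\ge0$ everywhere.

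\emph{Upper bound: locating the point.} Work in normalized coordinates at $\vpi$. Then $V_R(\vw)\le b$ whenever $\vw-A_R\ve_R(\vpi)\in\cT_R(b)$, so it suffices to place the shifted point $\vy+\vz$ inside $\cT_R(\varrho)$. I decompose the normalized residual point as
\[
\bar\vt_v+\underbrace{(\vt-\vt_v)}_{\in\gamma Q}+\underbrace{(\vy-\vt')}_{\in\eps_{j+1}Q_R(\vt',s')}-A_X(\vY-\ve_X(\vpi))+\vz .
\]
The four pieces are bounded as follows.
\begin{itemize}
\item The query offset lies in $\gamma Q$ by \eqref{eq:anchor}.
\item By the epigraph embedding \eqref{eq:embedding} and \eqref{eq:anchor}, $Q_R(\vt',s')\subseteq Q_I(\vq)\subseteq(1+\gamma)Q$. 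Hence the buffer displacement lies in $\eps_*(1+\gamma)Q$. Together with the query offset this gives $\zeta Q$, with $\zeta=\gamma+\eps_*(1+\gamma)$ exactly as defined in Section~\ref{sec:params}.
\item The residual has positive gap, so $\vY$ has normalized cost below $a_q$. At most $\kappa J$ items of $X$ lie in $L_{\vpi}(\theta_+)\supseteq L_{\vpi}(\theta_0)$, so Lemma~\ref{lem:drift} gives $A_X(\vY-\ve_X)\in\frac E2P$.
\item Finally $\vz\in q_{\rm col}P$.
\end{itemize}

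\emph{Upper bound: applying Lemma~\ref{lem:depth}.} Take $Y=L_{\vpi}(\theta_0)$, $\theta=\theta_0$, $\xi=1/64$ (budget $2\xi g_v=\varrho$) and $\zeta'=\zeta$. Then $2\xi g_v/\theta_0=4k_0$, so $\rho=(1/64-\zeta)K-4k_0$. The deleted set is $Y'=Y\setminus R$, which has $|Y'|\le\kappa J\le E/2$ items; it contributes $e=\kappa J$. The lemma yields $\bar\vt_v+\zeta Q+(\rho-\kappa J)P\subseteq\cT_{Y\cap R}(\varrho)\subseteq\cT_R(\varrho)$. Condition \ref{S:collar} states $\rho\ge E+q_{\rm col}$. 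This absorbs the drift $\frac E2P$, the deletion loss $\kappa J\le E/2$, and the collar $q_{\rm col}P$. That proves \eqref{eq:collar}.

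\emph{Price consequence.} Let $\vpi'\in\partial f_R(\vy)$. Then $\vpi'-\vpi$ is a subgradient of the convex function $V_R$ at $\vy$. For $i\in I$ we have $\pm q_{\rm col}\va_i\in q_{\rm col}P$, so by \eqref{eq:collar}
\[
\pm q_{\rm col}(\vpi'-\vpi)\va_i\le V_R(\vy\pm q_{\rm col}\va_i)-V_R(\vy)\le\varrho .
\]
Since $\varrho/q_{\rm col}=\Delta/4$, this gives $|(\vpi'-\vpi)\va_i|\le\Delta/4$.

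Because $x\mapsto|c_i-x|$ is $1$-Lipschitz, every normalized cost moves by at most $\Delta/4$. Using $\tau-\theta_-=\theta_+-\tau=\Delta/2$, we get $L_{\vpi}(\theta_-)\subseteq L_{\vpi'}(\tau)\subseteq L_{\vpi}(\theta_+)$. Monotonicity of $\cR_\kappa$ (Lemma~\ref{lem:core}) and the plateau \eqref{eq:plateau} then sandwich $\lat\cR_\kappa(L_{\vpi'}(\tau))$ between two copies of $\Gamma$, so it equals $\Gamma$.

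\emph{Merged queries.} For a merged query, run the same argument with the attaining source query and its assignment. Its target and allowance coincide with those of the merged query, so its buffer is the same.

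\emph{Main obstacle.} I expect the main difficulty to be the bookkeeping in the upper bound. The core $R$ need not contain all cheap items, so one must use the deletion clause of Lemma~\ref{lem:depth} rather than apply it to $R$ directly. One must also check that the drift and the collar fit inside the slack $\rho$ granted by \ref{S:collar}.
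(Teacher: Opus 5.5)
Your proposal is correct and follows the paper's proof essentially step by step. You use the same decomposition into query offset ($\gamma Q$), buffer displacement ($\eps_*(1+\gamma)Q$), drift ($\frac E2P$ via Lemma~\ref{lem:drift} with $e'=\kappa J$), Lemma~\ref{lem:depth} with $\xi=1/64$, $\zeta'=\zeta$ and its deletion clause, condition \ref{S:collar}, and then the supporting inequality at $\vy\pm q_{\rm col}\va_i$ with the plateau sandwich. The only differences are cosmetic: you apply Lemma~\ref{lem:depth} to $L_{\vpi}(\theta_0)$ rather than to $L_{\vpi}(\theta_+)$, which works equally well since $L_{\vpi}(\theta_0)\setminus R$ still has at most $\kappa J$ items, and you leave implicit that $\eps_{j+1}\le\eps_*$ rests on $j+1\le J$ (Lemma~\ref{lem:depth-bounds}).
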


\begin{proof}
Write $\vy=\vt-A_X\vY+\vz_b$ with $\vz_b$ a buffer displacement. By
Lemma~\ref{lem:drift} with $e'=\kappa J$, $A_X(\vY-\ve_X(\vpi))\in\frac E2P$. By
the embedding \eqref{eq:embedding}, \eqref{eq:anchor}, and
$\eps_{j+1}\le\eps_*$ (by \ref{S:sched}, with $j+1\le J$ by
Lemma~\ref{lem:depth-bounds}), $\vz_b\in\eps_*(1+\gamma)Q$, and
$\vt-\vt_v\in\gamma Q$. Hence
\[
 \vy-A_R\ve_R(\vpi)\in\bar \vt_v+\zeta Q+\tfrac E2P .
\]
Apply Lemma~\ref{lem:depth} to $L_{\vpi}(\theta_+)\supseteq L_{\vpi}(\theta_0)$
with $\xi=1/64$ and $\zeta'=\zeta$: the radius is
$(1/64-\zeta)K-4k_0$. Deleting the at most $\kappa J$ items of
$L_{\vpi}(\theta_+)$ outside $R$ costs $\kappa J$, and adding items of $R$ only
enlarges the body. Since $\frac E2+\kappa J\le E$, condition \ref{S:collar}
places $\vy-A_R\ve_R(\vpi)+q_{\rm col}P$ in $\cT_R(\varrho)$, which is the upper
bound in \eqref{eq:collar}; the lower bound holds because normalized costs
are nonnegative. Applying the supporting inequality of $V_R$ (with price
$\vpi'-\vpi$) at $\vy\pm q_{\rm col}\va_i$ gives
$q_{\rm col}|(\vpi'-\vpi)\va_i|\le\varrho$, i.e.\ the first part of
\eqref{eq:collar-price}. Then
$L_{\vpi}(\theta_-)\subseteq L_{\vpi'}(\tau)\subseteq L_{\vpi}(\theta_+)$, and
monotonicity of $\cR_\kappa$ with \eqref{eq:plateau} gives the lattice
identity.
\end{proof}

\subsection{Ownership and the arithmetic contraction}

\begin{lemma}[Tie-safe ownership]\label{lem:ownership}
Fix $R\subseteq I$, $X=I\setminus R$. For each $\vy$ in the domain of $f_R$
choose $\vpi_y\in\partial f_R(\vy)$ and set $U_R(\vy)=\vy+A_X\ve_X(\vpi_y)$. Then
$f_I(U_R(\vy))=f_R(\vy)+\vc_X^{\T}\ve_X(\vpi_y)$, every element of
$\partial f_I(U_R(\vy))$ lies in $\partial f_R(\vy)$, and $U_R$ is injective.
\end{lemma}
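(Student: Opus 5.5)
The argument rests on LP duality in the dual form recalled in Remark~\ref{rem:anchor-allowance}, applied to $f_R$ at $\vy$ and to $f_I$ at $U_R(\vy)$. Fix $\vy$ in the domain of $f_R$ and the chosen price $\vpi=\vpi_y\in\partial f_R(\vy)$. The dual formula and the characterization of subgradients as dual maximizers give
\[
 f_R(\vy)=\vpi\vy+\textstyle\sum_{i\in R}\min(0,c_i-\vpi\va_i).
\]

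First we establish the value identity. For the upper bound, take an optimal fractional $\vx_R$ at $\vy$ and extend it by $\ve_X(\vpi)$ on $X$. This gives a feasible point for $I$ at $U_R(\vy)$, so
\[
 f_I(U_R(\vy))\le f_R(\vy)+\vc_X^{\T}\ve_X(\vpi).
\]
For the lower bound, use weak duality for $f_I$ with the same price $\vpi$:
\[
 f_I(U_R(\vy))\ge\vpi U_R(\vy)+\textstyle\sum_{i\in I}\min(0,c_i-\vpi\va_i).
\]
By the definition of $e_i(\vpi)$ (including at ties, where $e_i=0$ and $\min(0,c_i-\vpi\va_i)=0$), each $i\in X$ satisfies $\vpi\va_ie_i+\min(0,c_i-\vpi\va_i)=c_ie_i$. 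Splitting the sum over $R$ and $X$ then gives exactly $f_R(\vy)+\vc_X^{\T}\ve_X(\vpi)$. Equality holds throughout, and in particular $\vpi\in\partial f_I(U_R(\vy))$.

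Next we show the subgradient inclusion. Let $\vpi'\in\partial f_I(U_R(\vy))$. For any $\vw$ in the domain of $f_R$, Lemma~\ref{lem:transfer} with the assignment $\ve_X(\vpi)$ gives $f_I(\vw+A_X\ve_X(\vpi))\le f_R(\vw)+\vc_X^{\T}\ve_X(\vpi)$. Combining this with the supporting inequality of $\vpi'$ at $U_R(\vy)$ and the value identity yields
\[
 f_R(\vw)\ge f_R(\vy)+\vpi'(\vw-\vy),
\]
so $\vpi'\in\partial f_R(\vy)$. This step is quite direct.

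The main obstacle is injectivity, because different points may use different chosen prices and hence different offsets $A_X\ve_X(\vpi_y)$. Suppose $U_R(\vy_1)=U_R(\vy_2)=:\vu$, with prices $\vpi_1,\vpi_2$ chosen at $\vy_1,\vy_2$. By the previous step both $\vpi_1$ and $\vpi_2$ lie in $\partial f_I(\vu)$. The value identity evaluated at each gives
\[
 f_I(\vu)=f_R(\vy_k)+\vc_X^{\T}\ve_X(\vpi_k)\qquad(k=1,2).
\]
The plan is to cross-evaluate: by transfer, $f_I(\vu)\le f_R(\vy_1)+\vc_X^{\T}\ve_X(\vpi_2)+{}$ (correction), which is not directly available since $\vy_1+A_X\ve_X(\vpi_2)\ne\vu$ in general. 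Instead we use duality with price $\vpi_2$ at the point $\vu$, and then the optimal primal solution at $\vu$ built from $\vy_1$, namely $(\vx_R^{(1)},\ve_X(\vpi_1))$. Complementary slackness for the dual optimum $\vpi_2$ forces every $i\in X$ with $c_i\ne\vpi_2\va_i$ to sit at $e_i(\vpi_2)$, and ties contribute equally. It then remains to show that tied coordinates where $e_i(\vpi_1)\ne e_i(\vpi_2)$ cannot change the offset. If direct comparison fails, I would try to reduce to a canonical choice, namely that the offsets agree because $\vy_k=\vu-A_X\ve_X(\vpi_k)$ and both primal solutions are optimal for $I$ at $\vu$. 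I expect injectivity to need precisely the tie convention (the definition of $e_i$ via a strict inequality), possibly combined with a consistent selection rule for $\vpi_y$. Working this out carefully is the crux.
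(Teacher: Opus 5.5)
Your value identity and subgradient inclusion are correct and are essentially the paper's argument. The paper phrases the value identity as follows: at price $\vpi_y$, an $R$-optimum together with $\ve_X(\vpi_y)$ minimizes the reduced cost over the box. That is your primal/Lagrangian sandwich.

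The gap is injectivity, which you leave unproved. Your framing of what remains is also off: you ask whether tied coordinates with $e_i(\vpi_1)\ne e_i(\vpi_2)$ can change the offset, but such coordinates cannot exist at all. Moreover, the fallback you suggest, a consistent selection rule for $\vpi_y$, would weaken the lemma. The lemma holds for an arbitrary choice, and it is used that way in Proposition~\ref{prop:arithmetic}.

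The missing step is to run your complementary-slackness argument in the other direction. Suppose $e_i(\vpi_1)=1$ and $e_i(\vpi_2)=0$ for some $i\in X$. Then $\vpi_1\va_i>c_i$ \emph{strictly}, so the tie convention does not interfere for $\vpi_1$. You already know that $\vpi_1\in\partial f_I(\vu)$, i.e.\ that $\vpi_1$ attains the dual formula at $\vu$. By the value identity at $\vy_2$, the point $(\vx_R^{(2)},\ve_X(\vpi_2))$ is optimal for $f_I(\vu)$. Every feasible $\vx$ at $\vu$ satisfies $\vc_I^{\T}\vx=\vpi_1\vu+\sum_{l\in I}(c_l-\vpi_1\va_l)x_l\ge f_I(\vu)$. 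Equality requires $x_l=1$ whenever $c_l<\vpi_1\va_l$, which contradicts $x_i=e_i(\vpi_2)=0$. The other case is symmetric. Hence $\ve_X(\vpi_1)=\ve_X(\vpi_2)$, and $\vy_1=\vu-A_X\ve_X(\vpi_1)=\vy_2$.

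The paper gets this in one line. It applies $\vpi_{y_1}-\vpi_{y_2}$ to $\vy_1-\vy_2=A_X(\ve_X(\vpi_{y_2})-\ve_X(\vpi_{y_1}))$. The term $\langle\vpi_{y_1}-\vpi_{y_2},\vy_1-\vy_2\rangle$ is nonnegative by monotonicity of $\partial f_R$. Each summand $[(\vpi_{y_1}-\vpi_{y_2})\va_i](e_i(\vpi_{y_1})-e_i(\vpi_{y_2}))$ is nonnegative, and a disagreeing one is strictly positive since $\vpi_{y_1}\va_i>c_i\ge\vpi_{y_2}\va_i$. In both routes, the strict inequality in the definition of $e_i$ is what makes every disagreement detectable, so no selection rule is needed.
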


\begin{proof}
At price $\vpi_y$ an optimum for $f_R(\vy)$ and the endpoints $\ve_X(\vpi_y)$
jointly minimize the reduced cost $\vc^{\T}\vx-\vpi_yA_I\vx$ over the box, which
gives the value identity. If $\vpi^*$ supports $f_I$ at $U_R(\vy)$, extending
competing $R$-solutions by $\ve_X(\vpi_y)$ shows that $\vpi^*$ supports $f_R$ at
$\vy$. If $U_R(\vy_1)=U_R(\vy_2)$, apply $\vpi_{y_1}-\vpi_{y_2}$:
\[
 0=\langle\vpi_{y_1}-\vpi_{y_2},\vy_1-\vy_2\rangle+
 \sum_{i\in X}\bigl[(\vpi_{y_1}-\vpi_{y_2})\va_i\bigr]\bigl(e_i(\vpi_{y_1})-e_i(\vpi_{y_2})\bigr).
\]
The first term is nonnegative by monotonicity of subgradients, each summand
is nonnegative, and a summand with different endpoints is positive (e.g.\
$\vpi_{y_1}\va_i>c_i\ge\vpi_{y_2}\va_i$). So all endpoints agree and $\vy_1=\vy_2$.
\end{proof}

\begin{proposition}[Arithmetic contraction]\label{prop:arithmetic}
Merge arithmetic children with the same core $R$, over all deep anchors and
bands of a node, and give them depth $j+1$. If every core lattice is a
proper full-rank sublattice of $\Lambda_I$, then
\begin{equation}\label{eq:arith-contraction}
 \sum_Rw_R\Psi_R^2\le\tfrac13w\Psi(I)^2,\qquad
 \sum_Rw_R^3\Psi_R^2\le\tfrac13w^3\Psi(I)^2 .
\end{equation}
\end{proposition}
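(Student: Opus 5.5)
The plan is to charge every buffered point of every arithmetic child to a buffered point of the parent, injectively within each core. At a fixed parent point, the cores charging it will form a strictly nested chain of proper sublattices. The factor $\tfrac13$ then comes from $\sum_{k\ge1}4^{-k}=\tfrac13$ combined with Lemma~\ref{lem:volume}(2).

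\emph{Step 1 (charging map).} Fix a core $R$ with $X=I\setminus R$, and let $\vy\in\cD_{j+1}$ be a buffered point of the merged child. Choose $\vpi_y\in\partial f_R(\vy)$ and charge $\vy$ to $U_R(\vy)=\vy+A_X\ve_X(\vpi_y)$, as in Lemma~\ref{lem:ownership}. That lemma makes the charge injective for fixed $R$. I must check that $U_R(\vy)\in\cD_j(\cW)$. Write $\vy=\vt-A_X\vY+\vz_b$ for the attaining source query $(\vt,s)$, where $\vz_b\in\eps_{j+1}Q_R(\cdot)\cap H_{j+1}P_R$. Then
\[
 U_R(\vy)-\vt=\vz_b-A_X\bigl(\vY-\ve_X(\vpi)\bigr)+A_X\bigl(\ve_X(\vpi_y)-\ve_X(\vpi)\bigr).
\]
The three terms are handled as follows.
\begin{itemize}
\item The first lies in $\eps_{j+1}Q_I(\vt,s)\cap H_{j+1}P_I$, by the embedding \eqref{eq:embedding}.
\item The second lies in $\tfrac E2P$, by Lemma~\ref{lem:drift}.
\item For the third, \eqref{eq:collar-price} gives $|(\vpi_y-\vpi)\va_i|\le\Delta/4$. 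Hence endpoints can differ only on items with $\bar c_i(\vpi)\le\Delta/4\le\theta_-$. Such items outside $R$ lie in $L_\vpi(\theta_+)\setminus R$, which has at most $\kappa J$ elements, so this term lies in $\kappa J P\subseteq\tfrac E2P$.
\end{itemize}
Since $Q_I(\vt,s)\supseteq(1-\gamma)Q\supseteq(1-\gamma)KP$, the total extra term $EP$ lies in $\tfrac{E}{(1-\gamma)K}Q_I(\vt,s)$. This is absorbed by $\eps_j-\eps_{j+1}=2E/K$, and the box clipping is absorbed by $H_j-H_{j+1}=E$. The point $U_R(\vy)$ lies in $\Lambda_I$ because $\vy\in\Gamma\subseteq\Lambda_I$. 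Consequently $w_R\le w$, and each parent buffered point receives at most one charge per core.

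\emph{Step 2 (nesting at a parent point).} Fix $\vp\in\cD_j(\cW)$ and let $R_1,R_2,\dots$ be the distinct cores charging it, via points $\vy_k$. Pick one $\vpi^*\in\partial f_I(\vp)$. By Lemma~\ref{lem:ownership}, $\vpi^*\in\partial f_{R_k}(\vy_k)$ for every $k$. Lemma~\ref{lem:collar}, applied to $\vpi'=\vpi^*$, then gives
\[
 \lat R_k=\Gamma_k=\lat\cR_\kappa\bigl(L_{\vpi^*}(\tau_k)\bigr),
\]
where $\tau_k$ is the midpoint of the plateau of the deep anchor that produced $R_k$. The sets $L_{\vpi^*}(\tau)$ increase with $\tau$, and $\cR_\kappa$ is monotone (Lemma~\ref{lem:core}), so the lattices $\Gamma_k$ form a chain. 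A core satisfies $R=\{i:\va_i\in\lat R\}$, so distinct cores have distinct lattices and the chain is strict. Each $\Gamma_k$ is a proper full-rank sublattice of $\Lambda_I$. Ordering the chain from largest to smallest lattice, the $k$-th member has index $q_k\ge2^k$ in $\Lambda_I$. Lemma~\ref{lem:volume}(2) then gives
\[
 \sum_k\Psi_{R_k}^2\le\Psi(I)^2\sum_{k\ge1}4^{-k}=\tfrac13\Psi(I)^2 .
\]
Summing over $\vp$ yields the first inequality of \eqref{eq:arith-contraction}. For the second, $w_R\le w$ from Step~1 gives $w_R^3\le w^2w_R$, which reduces it to the first.

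\emph{Expected obstacle.} Step~1 is where I expect difficulty: the collar estimates must hold uniformly for merged queries and across different anchors and bands. The point $U_R(\vy)$ must land in the parent buffer of some query, not merely near one. I must also confirm that the endpoint discrepancy between $\vpi_y$ and the anchor price $\vpi$ affects only cheap non-core items, and that the constants in \ref{S:sched} and \ref{S:collar} genuinely absorb the drift. A second delicate point is in Step~2: the common price $\vpi^*$ must be compatible with the plateau midpoints $\tau_k$ coming from different anchors. This is exactly the every-price clause of Lemma~\ref{lem:collar}, which I would invoke separately for each $R_k$ with $\vpi'=\vpi^*$.
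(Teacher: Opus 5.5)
Your proposal is correct and follows the paper's proof essentially step for step. You use the tie-safe charging map $U_R$, bound $U_R(\vy)-\vt\in\vz_b+EP$ via drift and the collar, and absorb this by the decrements $\eps_j-\eps_{j+1}=2E/K$ and $H_j-H_{j+1}=E$; then a single $\vpi^*\in\partial f_I(\vp)$ turns the charging cores into a strict chain of lattices with indices at least $2,4,8,\dots$. Your explicit three-term split, with endpoints changing only where $\bar c_i(\vpi)\le\Delta/4$, is just a slightly sharper form of the paper's remark that items with $\bar c_i(\vpi)>\theta_+$ keep their endpoints.
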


\begin{proof}
Let $\vy$ be a buffered point of core $R$, with attaining source $(\vt,s)$,
assignment $\vY$, and $\vy=\vt-A_X\vY+\vz_b$. Lemma~\ref{lem:collar} gives a
full-dimensional neighbourhood of $\vy$ in the domain of $f_R$; choose
$\vpi_y\in\partial f_R(\vy)$. By \eqref{eq:collar-price}, items with
$\bar c_i(\vpi)>\theta_+$ keep their endpoints when $\vpi$ is replaced by
$\vpi_y$, so at most $\kappa J$ endpoints change and, with
Lemma~\ref{lem:drift},
\[
 U_R(\vy)-\vt=\vz_b-A_X\bigl(\vY-\ve_X(\vpi_y)\bigr)\in \vz_b+EP .
\]
Here $j+1\le J$ (Lemma~\ref{lem:depth-bounds}), $\vz_b\in\eps_{j+1}Q_I(\vt,s)\cap H_{j+1}P$, and
$EP\subseteq(E/K)Q_I(\vv)\subseteq(2E/K)Q_I(\vt,s)$ by deepness and
\eqref{eq:anchor}. By \eqref{eq:schedule} the decrements
$\eps_j-\eps_{j+1}=2E/K$ and $H_j-H_{j+1}=E$ absorb this, so
$U_R(\vy)\in\cD_j(\cW)$. By Lemma~\ref{lem:ownership}, $w_R\le w$.

Fix $\vu\in\cD_j(\cW)$ and $\vpi^*\in\partial f_I(\vu)$. By
Lemma~\ref{lem:ownership}, $\vpi^*$ supports every core $R$ at its preimage
of $\vu$, so by Lemma~\ref{lem:collar} the lattice of $R$ is
$\lat\cR_\kappa(L_{\vpi^*}(\tau_R))$. For fixed $\vpi^*$ these lattices form a
chain; equal lattices give equal cores, which were merged; and injectivity
gives at most one preimage per core. The distinct proper lattices in the
chain have indices at least $2,4,8,\dots$, so by Lemma~\ref{lem:volume}
$\sum_{R\ni \vu}\Psi_R^2\le\Psi(I)^2\sum_{m\ge1}4^{-m}=\frac13\Psi(I)^2$.
Summing over $\vu$ gives the first inequality, and $w_R\le w$ the second.
\end{proof}

\subsection{The high-residual trial and exact restoration}

Let $A_c=\cR_\kappa(L_{\vpi}(\theta_-))$, so $\lat A_c=\Gamma$, and apply
Theorem~\ref{thm:gap} to the working columns of $A_c$, obtaining
$I_{\rm out}$, $I_{\rm res}$, $I_{\rm rem}$, and $\vu_i,\phi,\Bx,\vz_0$. Put
$I_\circ=I_{\rm res}\cup I_{\rm rem}=A_c\setminus I_{\rm out}$.

A \emph{trial} transfers every item of $I\setminus A_c$ and then
$I_{\rm out}$ into a removed-item table. The removed set contains at most
$\kappa J+k_0$ items of normalized cost at most $\theta_0$, so by
Lemma~\ref{lem:drift} every needed prefix lies within $\frac E2P$ of its
integral endpoint reference; this is the table window.

A trial entry (respectively, an entry of a restored $X$-only table) is
\emph{compatible} if its residual target lies in $\Gamma$, the lattice of
the remaining pool $I_\circ$ (respectively $R$). Since remaining items have
columns in $\Gamma$, original and normalized residual targets differ by an
element of $\Gamma$, so the condition does not depend on the normalization;
for trial entries it says that the residual target lies in $\phi(\Z^k)$.
Incompatible entries have no completion and are discarded.

\begin{lemma}[High residual or proper restoration]\label{lem:high-low}
$\phi(\Z^k)=\Gamma$; every item of $I_{\rm rem}$ has
$\bar c_i\le\varrho/(2\nu)$; and every reserve subset has normalized cost
at most $g_v/64$. Exactly one of the following is selected.
\begin{enumerate}[label=\arabic*.,start=0]
\item No compatible trial entry survives the filters. Then no query
assigned to $\vv$ has a solution, and these queries are discarded; the
anchor creates no children.
\item Some compatible surviving trial entry has normalized allowance at
least $g_v/4$. One such entry is the \emph{centre} of the lift construction
of Section~\ref{sec:lift}.
\item Some compatible trial entry survives, but none has normalized
allowance at least $g_v/4$. Then $\Gamma\subsetneq\Lambda_I$; restore
$R=I\cap\Gamma$ and compute the removed-item table on $X=I\setminus R$
alone. These $R$-states are the arithmetic children.
\end{enumerate}
All outcomes preserve exact completeness and disjointness of witnesses.
\end{lemma}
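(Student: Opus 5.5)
The argument has two parts: the three preliminary assertions, and then the trichotomy of outcomes together with their exactness.

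\emph{Preliminary assertions.} For $\phi(\Z^k)=\Gamma$ I will use Theorem~\ref{thm:gap}. Reserve and remainder generate $\Z^k$, and $\phi\vu_i=\bar\va_i$, so $\phi(\Z^k)=\lat(I_\circ)$ up to signs, which do not change lattices. Since $|I_{\rm out}|\le k_0\le\kappa$ and $A_c$ is $\kappa$-resilient, $\lat(I_\circ)=\lat(A_c)=\Gamma$. The cost bounds follow from the threshold. Every item of $A_c\subseteq L_{\vpi}(\theta_-)$ has $\bar c_i\le\theta_-\le2\theta_0=g_v/(64k_0)$. A reserve subset has at most $k_0$ items, so its cost is at most $g_v/64$. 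For $I_{\rm rem}$ I will use $2\theta_0=\varrho/(2\cdot 8k_0)$ together with $k_0\ge\nu$, which gives $\bar c_i\le\varrho/(16k_0)\le\varrho/(2\nu)$.

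\emph{The trichotomy.} The three outcomes are selected by inspecting the compatible surviving entries, so they are exhaustive and mutually exclusive by construction. What has to be proved is (a) the soundness of outcome~0 and (b) that $\Gamma$ is proper in outcome~2.

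For (a), take any binary solution $\vx$ of a query $\vq$ assigned to $\vv$, and restrict it to the transferred set $X_{\rm tr}=(I\setminus A_c)\cup I_{\rm out}$.
\begin{enumerate}
\item By Lemma~\ref{lem:drift} with $e'\le\kappa J+k_0$, every prefix of this restriction lies in the window $\frac E2P$.
\item By dominance (Lemma~\ref{lem:transfer}), the table therefore contains an entry of equal weight and no larger cost, and $\vx|_{I_\circ}$ completes that entry.
\item A completion forces the entry to be compatible, since its residual lies in $\lat I_\circ=\Gamma$.
\item The completion also forces the entry to pass the filters: the residual is fractionally feasible and has positive gap.
\end{enumerate}
Hence if no compatible entry survives, there is no solution.

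For (b), I argue by contradiction. Suppose $\Gamma=\Lambda_I$. Then every item lies in $\Gamma=\lat A_c$, and compatibility of an entry holds automatically.
\begin{enumerate}
\item Consider the entry obtained by setting all transferred items to their preferred endpoints $\ve(\vpi)$. It has zero normalized cost, so its residual has normalized allowance $a_q\ge(1-\gamma)g_v$ for any assigned query $\vq$, by \eqref{eq:allowance}.
\item Dominance only lowers the stored cost at that weight, so the stored entry keeps at least this allowance.
\item I must show that this entry survives the filters. Its normalized residual target is $\bar\vt_q-A_{X_{\rm tr}}\cdot 0$, which lies in $\bar\vt_v+\gamma Q$. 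Apply Lemma~\ref{lem:depth} with $Y=L_{\vpi}(\theta_0)$, $\xi=1/64$, $\zeta'=\gamma$, and delete the at most $\kappa J+k_0$ excluded cheap items. By \ref{S:collar}/\ref{S:restore}, this yields a fractional completion on $I_\circ$ of normalized cost at most $g_v/32<(1-\gamma)g_v$.
\end{enumerate}
The entry is then compatible, survives, and has allowance at least $g_v/4$, so outcome~1 would have been selected, a contradiction. Therefore $\Gamma\subsetneq\Lambda_I$.

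The restoration in outcome~2 is exact for the same reason as in (a). Every solution restricted to $X=I\setminus R$ appears in the $X$-only table, by the drift bound with at most $\kappa J$ cheap items in $X$, and it is completed on $R$. Disjointness of witnesses holds because the table and the remaining pool use disjoint labels.

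The main obstacle is the survival step in (b). It requires verifying that the residual target under endpoint choices really sits deep enough in $\cT_{I_\circ}$ after the outliers and the trial's cheap exclusions are removed, and that the constants in \ref{S:restore} absorb both the $\gamma Q$ offset and the $\kappa J+k_0$ deletions.
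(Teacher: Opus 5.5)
Your proposal follows essentially the paper's proof. The derivation of $\phi(\Z^k)=\Gamma$ uses $\kappa$-resilience of $A_c$ in the same way. Deleting only $I_{\rm out}$, rather than $I_{\rm out}\cup I_{\rm res}$, is fine, since the reserve vectors $\vu_i$ also lie in $\Z^k$. Completeness of outcome~0 goes through the drift window and dominance exactly as in the paper. Properness of $\Gamma$ is the same contradiction. The paper phrases it for an arbitrary compatible $X$-only residual of allowance at least $g_v/4$, extended by endpoints on $R\setminus I_\circ$. When $\Gamma=\Lambda_I$ (so $X=\emptyset$) that is precisely your all-endpoint trial entry. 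Your use of $\xi=1/64$ with \ref{S:collar}, instead of $\xi=1/128$ with \ref{S:restore}, also works.

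Two small repairs are needed.
\begin{enumerate}
\item The arithmetic is off: $2\theta_0=g_v/(64k_0)=\varrho/(2k_0)$, not $\varrho/(16k_0)$. The claimed bound $\bar c_i\le\varrho/(2\nu)$ still follows from $\nu\le k_0$.
\item You use the GAP output ($\vu_i$, $\phi$, and the fact that reserve and remainder generate $\Z^k$) without first excluding the all-outlier answer of Theorem~\ref{thm:gap}. That answer would give $|A_c|=|I_{\rm out}|\le k_0$. The paper rules it out: a $\kappa$-resilient pool of positive rank has more than $\kappa\ge k_0$ items, because deleting all of a smaller pool would leave the trivial lattice. Note also that $A_c$ has full rank here.
\end{enumerate}
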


\begin{proof}
$A_c$ is $\kappa$-resilient and $|I_{\rm out}|+|I_{\rm res}|\le2k_0\le\kappa$,
so $\lat I_{\rm rem}=\Gamma$; as $\lat\{\vu_i:i\in I_{\rm rem}\}=\Z^k$, we get
$\phi(\Z^k)=\Gamma$. Items of $A_c$ cost at most
$\theta_-\le2\theta_0=g_v/(64k_0)\le\varrho/(2\nu)$ since $\nu\le k_0$, and
$|I_{\rm res}|\le k_0$ gives the reserve bound. The all-outlier answer is
impossible: by Theorem~\ref{thm:gap} it would give $|A_c|=|I_{\rm out}|\le
k_0$, whereas a $\kappa$-resilient pool of positive rank has more than
$\kappa\ge k_0$ items.

Suppose outcome 1 fails but some compatible $X$-only residual has
normalized allowance at least $g_v/4$. Extend its assignment by zero
normalized deviations on $R\setminus I_\circ$ (i.e.\ $x_i=e_i(\vpi)$ there);
this is a trial assignment with the same normalized residual target and
allowance. It is compatible: the $X$-only residual target lies in $\Gamma$,
and the added endpoints $e_i(\vpi)\va_i$ ($i\in R$) lie in $\Gamma$. Its
normalized cost is less than its parent's normalized allowance, since the
residual allowance is positive, so by Lemma~\ref{lem:drift}
its target lies in $\bar \vt_v+\gamma Q+(128(1+\gamma)k_0+\kappa J)P$.
Lemma~\ref{lem:depth} for $L_{\vpi}(\theta_-)$ with $\xi=1/128$,
$\zeta'=\gamma$, eroded by the $\kappa J+k_0$ items outside $I_\circ$,
together with \ref{S:restore}, shows that this target has a fractional
$I_\circ$-representation of normalized cost at most $\varrho/2<g_v/4$.
So the entry survives all filters, and dominance can only increase its
allowance: a contradiction. If $\Gamma=\Lambda_I$, then $X=\emptyset$ and
any assigned query has normalized allowance at least
$(1-\gamma)g_v>g_v/4$ by \eqref{eq:allowance}, giving the same
contradiction; hence $\Gamma$ is proper.

The restored table must be recomputed on $X$ alone: a trial assignment may
use items of $R\setminus A_c$, which are now in the child pool.

Finally, the trial table is complete, which justifies outcome 0. Let $\vx$
solve a query $\vq$ assigned to $\vv$, and let $\vY$ be its restriction to the
removed set. Normalized costs are nonnegative, so the normalized cost of
$\vY$ is less than $a_q$, and by Lemma~\ref{lem:drift} every prefix of $\vY$ lies
in the table window. By Lemma~\ref{lem:transfer} the table keeps, at the
weight of $\vY$, an assignment of no larger cost, which $\vx|_{I_\circ}$ still
completes. That residual is compatible (its target is a sum of columns of
$I_\circ$, hence lies in $\Gamma$), fractionally feasible, and has positive
gap (a binary completion costs less than the allowance). So it survives
the filters. If no entry survives, no assigned query has a solution.
\end{proof}

Lemma~\ref{lem:high-low} and Proposition~\ref{prop:arithmetic} charge all
arithmetic children without a factor for the number of anchors.

%======================================================================
\section{Deep anchors II: an affordable fibre yields a solution or a thin coordinate}
\label{sec:lift}

\subsection{A point-or-thin primitive}

\begin{lemma}[Point or thin direction]\label{lem:point-thin}
Let $\mathcal K\subseteq\R^m$ be a nonempty bounded rational polytope
given by an extended linear description of length $\mathcal I$. In
$f(m)\mathcal I^{C}$ bit operations one can find either a point of
$\mathcal K\cap\Z^m$, or a primitive $\vpsi\in\Z^m\setminus\{\vzero\}$ with
$\max_{\mathcal K}\vpsi-\min_{\mathcal K}\vpsi\le W_m$, where $W_m$ depends
only on $m$. (The second outcome may occur even if
$\mathcal K\cap\Z^m\ne\emptyset$.)
\end{lemma}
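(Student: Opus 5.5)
This is the standard flatness-theorem dichotomy from Lenstra-type algorithms, and I would prove it by combining an approximate ellipsoidal rounding of $\mathcal K$ with LLL. The plan has four steps.

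\emph{Step 1: reduce to full dimension.} Compute the affine hull of $\mathcal K$ exactly by LP, which takes time polynomial in $\mathcal I$. Suppose $\mathcal K$ is not full-dimensional, so it lies in a rational hyperplane $\{\vlambda\vx=\beta\}$ with $\vlambda$ integral and primitive.
\begin{itemize}
\item If $\beta\notin\Z$, then $\vpsi=\vlambda$ has width $0\le W_m$, and we output it.
\item If $\beta\in\Z$, compute a unimodular basis of the lattice $\{\vx\in\Z^m:\vlambda\vx=0\}$ by Hermite normal form, together with one integer point on the hyperplane.
\end{itemize}
In the second case we recurse in dimension $m-1$. An integral direction found in the lower-dimensional coordinates pulls back to an integral $\vpsi$ on $\Z^m$ with the same width on $\mathcal K$. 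If that pullback is not primitive, dividing it by its content only decreases the width. An integer point pulls back to an integer point. Bit lengths grow only polynomially.

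\emph{Step 2: rounding in the full-dimensional case.} Compute, in polynomial time, an ellipsoid $\mathcal E$ with center $\vc$ such that
\[
\vc+\tfrac{1}{(m+1)m}(\mathcal E-\vc)\subseteq\mathcal K\subseteq\mathcal E .
\]
One way is the shallow-cut ellipsoid method using the linear description, in the style of Gr\"otschel--Lov\'asz--Schrijver. An alternative is to take the $m+1$ vertices of a simplex of near-maximal volume found by LP pivoting, then apply John-type scaling.

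Apply the linear map $T$ that sends $\mathcal E$ to a Euclidean ball. Run LLL on the basis $T\Z^m$ and round $T\vc$ to a nearby lattice point by Babai's nearest-plane method.

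\emph{Step 3: the dichotomy.} Let $\rho$ be the radius of the inner ball.
\begin{itemize}
\item If the rounded point lies within distance $\rho$ of $T\vc$, it belongs to $T\mathcal K$. Its preimage is the integer point we output.
\item Otherwise the covering radius of $T\Z^m$ is at least $2^{-m/2}\rho$ up to factors depending only on $m$. The last Gram--Schmidt vector $b_m^*$ of the reduced basis is then long. The corresponding dual vector, namely the $m$-th coordinate functional of the reduced basis, is an integral primitive $\vpsi$. Its width on $\mathcal E$ is $2R/|b_m^*|$, where $R=(m+1)m\rho$ is the outer radius.
\end{itemize}
Combining these bounds gives a width of $\vpsi$ on $\mathcal K$ at most $W_m=2^{O(m)}m^{2}\cdot m^{O(1)}$. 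This constant depends only on $m$.

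Note that the second branch is triggered purely by the rounding test, not by the absence of integer points. This is why the lemma allows the thin outcome even when $\mathcal K\cap\Z^m\ne\emptyset$.

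\emph{Step 4: running time.} Each of the at most $m$ recursion levels performs polynomially many LP, HNF and LLL calls, each polynomial in the current encoding length. This gives the bound $f(m)\mathcal I^{C}$.

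\emph{Main obstacle.} The main obstacle is Step 2: obtaining the rounding ellipsoid with exact rational arithmetic and controlled bit length when $\mathcal K$ is given only by an extended formulation, that is, as the projection of a polytope in a higher-dimensional space. My first approach would be to avoid ellipsoids altogether. I would find $m+1$ affinely independent vertices of $\mathcal K$ by optimizing linear objectives over the lifted description, then greedily swap vertices to enlarge the determinant until no swap raises it by a factor of $2$. The resulting simplex $\Sigma$ satisfies
\[
\Sigma\subseteq\mathcal K\subseteq \vc+O(m)(\Sigma-\vc).
\]
The whole argument in Steps 2--3 goes through with $\Sigma$ in place of $\mathcal E$, at the cost of a worse constant, which only enlarges $W_m$.
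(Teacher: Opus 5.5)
Your proposal is correct and is essentially the paper's proof. The paper rounds $\mathcal K$ by a simplex of LP vertices (computed over one fixed lifting), swapping a vertex while some barycentric coordinate exceeds $2$ in absolute value on $\mathcal K$; each swap more than doubles the volume, which bounds the number of swaps. This is exactly your fallback, and the paper then applies LLL and takes a dual basis vector as $\vpsi$ when the rounded centre misses $\mathcal K$.

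The differences are cosmetic. When $\dim\mathcal K<m$ the paper simply outputs the primitive normal, whose width is $0$; this is allowed even when $\beta\in\Z$, so your recursion is unnecessary. The paper also uses plain coordinate rounding and the LLL orthogonality-defect bound on the dual of the longest basis vector, rather than Babai's nearest plane and the last Gram--Schmidt vector.
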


\begin{proof}
All LPs are solved over one fixed standard-form lifting, so all basic
optimal points have bit length $\operatorname{poly}(\mathcal I)$. First find
affinely spanning points of $\mathcal K$ by optimizing along normals of the
current affine hull; if $\dim\mathcal K<m$, a primitive normal has width
$0$. Otherwise start from a simplex with vertices $\vv_0,\dots,\vv_m\in\mathcal K$ and
barycentric functions $\lambda_i$. While some basic optimizer $\vx$ of
$\pm\lambda_i$ has $|\lambda_i(\vx)|>2$, replace $\vv_i$ by $\vx$; the volume
more than doubles, and volumes of such simplices lie in
$[2^{-\operatorname{poly}(\mathcal I)},2^{\operatorname{poly}(\mathcal I)}]$,
so there are $\operatorname{poly}(\mathcal I)$ replacements. Afterwards
$|\lambda_i|\le2$ on $\mathcal K$.

With $V=[\vv_1-\vv_0,\dots,\vv_m-\vv_0]$, centroid $\vo$, $\rho=1/(m(m+1))$, and
$T=\rho^{-1}V^{-1}$, we get $B_2\subseteq T(\mathcal K-\vo)\subseteq
R_mB_2$ with $R_m=3m^2(m+1)$ (the standard simplex contains the ball of
radius $\rho$ about its centroid). LLL-reduce $T\Z^m$ to a basis
$\vb_1,\dots,\vb_m$ with $\prod\|\vb_i\|\le D_m|\det|$, $D_m=2^{m(m-1)/4}$
\cite{LLL82}. Round the coordinates of $T\vo$ with respect to $\vb_1,\dots,\vb_m$ to get $\vw\in T\Z^m$ with
$\|\vw-T\vo\|\le\frac m2\max\|\vb_i\|$. If $T^{-1}\vw\in\mathcal K$ (an LP test),
output it. Otherwise $\|\vw-T\vo\|>1$, so the longest basis vector has length
$B>2/m$, and its dual vector satisfies $\|\vb_j^*\|\le D_m/B<mD_m/2$. Then
$\vpsi=T^{\T}\vb_j^*\in\Z^m$ has width at most $2R_m\|\vb_j^*\|\le W_m:=R_mmD_m$
on $\mathcal K$; dividing by the gcd keeps this bound.
\end{proof}

\subsection{Set-up and depth}

We are in outcome 1 of Lemma~\ref{lem:high-low}. Use the working columns
$\bar \va_i$ and costs $\bar c_i\ge0$, with $\phi \vu_i=\bar \va_i$, and set
\[
 U=4g_v,\qquad k_{\rm res}=|I_{\rm res}|\le k_0 .
\]
Let $\vy_0$ be the normalized target of the centre (allowance at least
$g_v/4$). For $b\ge0$ define
\[
 \mathcal J_b=\vz_0+\Bigl\{\textstyle\sum_{i\in I_{\rm rem}}\vu_ix_i:
 \vzero\le \vx\le\mathbf 1,\ \sum_{i\in I_{\rm rem}}\bar c_ix_i\le b\Bigr\},\qquad
 \mathcal K_\varrho=\mathcal J_\varrho\cap\phi^{-1}(\vy_0).
\]
Let $S=Q\cap H_jP$ and $\Pref=S/K\supseteq P$.

\paragraph{Depth.} By Lemma~\ref{lem:drift} the centre satisfies
$\vy_0\in\bar \vt_v+\gamma Q+\frac E2P$. Lemma~\ref{lem:depth} for
$L_{\vpi}(\theta_-)$ with $\xi=1/128$, $\zeta'=\gamma$, eroded by the at most
$\kappa J+k_0\le E/2$ items outside $I_\circ$, gives
\begin{equation}\label{eq:lift-depth}
 \vy_0+\mu\Pref\subseteq\cT_{I_\circ}(\varrho/2),\qquad
 \mu=(1/128-\gamma)K-2k_0-2E .
\end{equation}

\paragraph{Endpoint fixing.} Let a set $H'\subseteq I_\circ$ of $k'\le\mu$
items be assigned prescribed endpoints $e_i\in\{0,1\}$. Since costs are
nonnegative, deleting the coefficients of $H'$ from an affordable
representation gives
$\cT_{I_\circ}(\varrho/2)\subseteq\cT_{I_\circ\setminus H'}(\varrho/2)+
\sum_{i\in H'}[\vzero,\bar\va_i]$, and since $\pm\bar \va_i\in P\subseteq\Pref$,
$\sum_{i\in H'}[\vzero,\bar\va_i]\subseteq\sum_{i\in H'}e_i\bar \va_i+k'\Pref$.
Combining with \eqref{eq:lift-depth} and comparing support functions (the
summand $k'\Pref$ cancels, as all bodies are compact and convex) shows
\begin{equation}\label{eq:endpoint}
 \vy_0+(\mu-k')\Pref\subseteq\textstyle\sum_{i\in H'}e_i\bar \va_i
 +\cT_{I_\circ\setminus H'}(\varrho/2).
\end{equation}
Fixing a reserve witness for $\vz_0$ yields
\begin{equation}\label{eq:fixed-reserve}
 \vy_0+(\mu-k_0)\Pref\subseteq\phi\mathcal J_{\varrho/2}.
\end{equation}
In particular $\mathcal K_\varrho\ne\emptyset$. It has dimension $k-r$
inside the affine space $\phi^{-1}(\vy_0)$, where $r=\rank\phi$. Indeed, let
$\vx^{\rm h}=(\frac12,\dots,\frac12)$ on $I_{\rm rem}$ and
$\vy_{\rm h}=\phi \vz_0+\frac12\sum_{i\in I_{\rm rem}}\bar \va_i$. For small $\tau_0>0$
the point $\vy_1=\vy_0-\frac{\tau_0}{1-\tau_0}(\vy_{\rm h}-\vy_0)$ lies in
$\vy_0+(\mu-k_0)\Pref$ (which is full-dimensional, as $P\subseteq\Pref$), so
by \eqref{eq:fixed-reserve} it has a representation $\vx^1$ of cost at most
$\varrho/2$. Then $(1-\tau_0)\vx^1+\tau_0\vx^{\rm h}$ has old weight
$(1-\tau_0)\vy_1+\tau_0\vy_{\rm h}=\vy_0$, all
coefficients in $(0,1)$, and cost below $\varrho$ for small $\tau_0$. Since the
$\vu_i$ ($i\in I_{\rm rem}$) span $\R^k$, small perturbations of this
coefficient vector cover a neighbourhood of its image in $\R^k$, and
intersecting with $\phi^{-1}(\vy_0)$ gives dimension $k-r$. (This is used only
to know that Lemma~\ref{lem:point-thin} is applied to a full-dimensional
body; that lemma also handles the degenerate case.)

\paragraph{Patch radius.} Every completion relevant to this anchor (a
buffered point of any child created below, completed as in
Lemma~\ref{lem:lift-afford}) has, after replacing its reserve part by the
fixed witness of $\vz_0$, an old target $\vy\in\phi\mathcal J_U$ with
\begin{equation}\label{eq:patch-radius}
 \vy-\vy_0\in R_{\rm p}\Pref,\qquad R_{\rm p}=(1+2\gamma)K+E+2k_0 .
\end{equation}
Indeed, the two queries differ from $\vt_v$ by elements of $\gamma S$, the
buffer displacement lies in $S$ (embedding and clipping, since
$\eps_*(1+\gamma)<1$), the two removed assignments each lie within
$\frac E2P$ of the same endpoint reference (Lemma~\ref{lem:drift}), and the
reserve replacement moves at most $k_{\rm res}\le k_0$ items. Recall
$S=K\Pref$ and $P\subseteq\Pref$.

\subsection{Success: rounding an integer point of the fibre}

Compatibility gives $\vy_0\in\phi(\Z^k)$. Smith normal form gives
$\vu^0\in\Z^k$ with $\phi \vu^0=\vy_0$ and a basis $V_0$ of the saturated lattice
$\ker\phi\cap\Z^k$; the integer points of the fibre are $\vu^0+V_0\Z^{k-r}$.
If $k=r$ the unique fibre point is integral. Otherwise apply
Lemma~\ref{lem:point-thin} to $\mathcal K_\varrho$ in the coordinates of
$V_0$. A thin functional on $\ker\phi\cap\Z^k$ extends, since this lattice
is saturated, to an integral $\vpsi$ on $\Z^k$ that is nonzero on
$\ker\phi$ and has width at most $W$ on $\mathcal K_\varrho$.

If an integer point $\vz\in\mathcal K_\varrho$ is found, minimize cost subject
to $\sum_{i\in I_{\rm rem}}\vu_ix_i=\vz-\vz_0$, $\vzero\le \vx\le\mathbf 1$. A basic optimum has
at most $k$ fractional coordinates and cost at most $\varrho$; rounding
them down leaves an integral error in $k\Bx\subseteq c_\eta\nu\Bx$, which
the reserve represents exactly. The total normalized cost is at most
$\varrho+g_v/64=3g_v/64<g_v/4$, so this is an actual completion of the
centre, and witness pointers reconstruct a solution of the root query.

\subsection{Failure: the cost dual forces a thin slab}

Let $\vpsi$ have width at most $W$ on $\mathcal K_\varrho$. Minimize $\vpsi$
over $\mathcal K_\varrho$; let $\vbeta$ be the multiplier of the weight
equations and $\vartheta\ge0$ that of the cost constraint. Put
\[
 \omega_i=\vpsi(\vu_i)-\vbeta\bar \va_i+\vartheta\bar c_i,\qquad
 \epsilon_i=[\,\omega_i<0\,],\qquad
 \alpha_0=\vpsi(\vz_0)-\vbeta\phi \vz_0+\textstyle\sum_{i\in I_{\rm rem}}\min(0,\omega_i).
\]
By LP duality the minimum is $m_{\vpsi}=\vbeta \vy_0-\vartheta\varrho+\alpha_0$,
and for every representation $\vz=\vz_0+\sum \vu_ix_i$ in the centre fibre
\begin{equation}\label{eq:deficit}
 \vpsi(\vz)-m_{\vpsi}=\sum_{i\in I_{\rm rem}}|\omega_i||x_i-\epsilon_i|
 +\vartheta\Bigl(\varrho-\sum_{i\in I_{\rm rem}}\bar c_ix_i\Bigr).
\end{equation}
The identity holds for every $\vx\in[0,1]^{I_{\rm rem}}$ in the fibre; both
terms on the right are nonnegative when the cost is at most $\varrho$, i.e.\
when $\vz\in\mathcal K_\varrho$, and then the left side is at most $W$.
Evaluating at a half-budget point (which exists by \eqref{eq:fixed-reserve})
gives $\vartheta\le2W/\varrho$. By \eqref{eq:endpoint} with \ref{S:lift}
($\mu\ge2k_0\ge k_{\rm res}+\nu$), any prescribed endpoints on $\nu$ items
of $I_{\rm rem}$ can be completed in the centre fibre at cost at most
$\varrho/2+\nu\cdot\varrho/(2\nu)=\varrho$. Forcing the opposite endpoints
on the $\nu$ items with largest $|\omega_i|$ gives a point of
$\mathcal K_\varrho$ whose first term in \eqref{eq:deficit} contains all
their $|\omega_i|$, so their sum is at most $W$. These $\nu$ \emph{exceptional} items are transferred to the
removed-item table. Every remaining item has $|\omega_i|\le W/\nu$ and
$\vartheta\bar c_i\le W/\nu$, so
\begin{equation}\label{eq:thickness}
 |\vpsi(\vu_i)-\vbeta\bar \va_i|\le2W/\nu .
\end{equation}
Since $I_{\rm rem}$ minus $\nu$ items still spans $\R^k$ and
$\vpsi|_{\ker\phi}\ne\vzero$, adjoining $\vpsi$ raises the rank by exactly one.

\subsection{One strip retains the whole table}

For $(\vy,b)$ in the projected domain let
\[
 G(\vy,b)=\max\{\vpsi(\vz):\vz\in\mathcal J_b,\ \phi \vz=\vy\}-\vbeta \vy+\vartheta b-\alpha_0 .
\]
$G$ is concave (mix witnesses), $G\ge0$ (it equals
$\sum|\omega_i||x_i-\epsilon_i|+\vartheta(b-\sum\bar c_ix_i)$ for a
maximizing representation), and $G(\vy_0,\varrho)\le W$. Let $\vy$ be as in
\eqref{eq:patch-radius} with budget $U$. Put
$\lambda_0=\varrho/(2U-\varrho)$ and
$\vy'=\vy_0-\frac{\varrho}{2(U-\varrho)}(\vy-\vy_0)$; since
$\frac{\varrho}{2(U-\varrho)}=\frac1{254}$, \ref{S:lift} and
\eqref{eq:fixed-reserve} put $(\vy',\varrho/2)$ in the domain. As
$(\vy_0,\varrho)=\lambda_0(\vy,U)+(1-\lambda_0)(\vy',\varrho/2)$,
\begin{equation}\label{eq:global-deficit}
 \sum_{i}|\omega_i||x_i-\epsilon_i|\le G(\vy,U)\le\lambda_0^{-1}W=255\,W
\end{equation}
for every relevant fractional completion $\vx$, and hence for its
restriction to any subpool.

Let $I'$ be the final pool ($I_{\rm rem}$ minus the exceptional items). For
a surviving assignment put $\vw=\sum_{I'}\bar \va_i(x_i-\epsilon_i)$,
$z=\sum_{I'}\vpsi(\vu_i)(x_i-\epsilon_i)$, and
$C_\epsilon=\sum_{I'}\bar c_i\epsilon_i$. Then
$z-\vbeta \vw=\sum_{I'}|\omega_i||x_i-\epsilon_i|-
\vartheta(\sum_{I'}\bar c_ix_i-C_\epsilon)$, and with $\vartheta U\le256W$,
\begin{equation}\label{eq:strip}
 \vartheta C_\epsilon-256W\le z-\vbeta \vw\le\vartheta C_\epsilon+255W .
\end{equation}
Above each integer old target there are therefore at most $511W+1$ integer
values of the new coordinate. Reserve and exceptional items are transferred
by an exact table in old weights; for each retained residual we enumerate
all integer new coordinates allowed by \eqref{eq:strip}. No completion is
lost. Before storing, orientations are undone: the new coordinate of the
permanent column $\va_i$ is $\sigma_i\vpsi(\vu_i)$, and complementation about
$\epsilon_i$ is undone by a fixed integral translation of target and
allowance.

\subsection{Mass and volume of lifted children}

\begin{lemma}[Affordability of buffered points]\label{lem:lift-afford}
Let a child query $(\vt',s')$ arise from a query $(\vt,s)$ assigned to $\vv$ via
a removed assignment $\vY$, and let $\vp\in \vt'+\eps Q_{\rm child}(\vt',s')$ with
$\eps\le\eps_*$. Then $\vp$ is the weight (in all coordinates of the child) of
a fractional completion whose total normalized cost, including $\vY$, is at
most $(1+\eps)a_q<2g_v<U$.
\end{lemma}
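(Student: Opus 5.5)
The plan is to unwind the definition of the child's affordable neighbourhood and use the epigraph geometry of the child pool, then add back the removed assignment. Write $\vq_c=(\vt',s')$ for the child query, with $\vt'=\vt-A_X\vY$ and $s'=s-\vc_X^{\T}\vY$ in old coordinates, extended by the new coordinate in the lifted case. Put $\vp=\vt'+\eps\vu$ with $\vu\in Q_{\rm child}(\vq_c)$. By the definition \eqref{eq:macbeath}, there are fractional child completions $\vx^\pm$ with weights $\vt'\pm\vu$, taken in all child coordinates, and $\vc^{\T}(\vx^++\vx^-)\le2s'$. The child has positive gap, so there is also an optimal fractional $\vx^0$ at $\vt'$ with $\vc^{\T}\vx^0=f_{\rm child}(\vt')<s'$. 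I take the convex combination $(1-\eps)\vx^0+\eps\vx^+$. Its weight is $\vp$ in every coordinate, because the lifted coordinates are linear functions of the item choices. Its cost is at most $(1-\eps)f(\vt')+\eps\,\vc^{\T}\vx^+$.

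To bound $\vc^{\T}\vx^+$, I use $\vc^{\T}\vx^+\le2s'-\vc^{\T}\vx^-\le2s'-f(\vt'-\vu)$. By convexity, $f(\vt'-\vu)\ge2f(\vt')-f(\vt'+\vu)$. This alone does not suffice, so I would instead average $\vx^+$ and $\vx^-$ with weights chosen symmetrically. The cleaner route is to pass to normalized costs at the anchor price $\vpi$. These are nonnegative, and after extending by $\vY$ the normalized cost of a full completion equals the original cost minus a fixed affine function of weight. The normalized value of the residual at $\vt'$ is at most the normalized allowance $a_q$ minus the normalized cost of $\vY$.

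In normalized terms, the supporting inequality at $\vt'$ with any subgradient $\vpi_c$ of the child's normalized value function $\bar f$ gives the following. The average of the costs at $\vt'\pm\vu$ is at most $\bar s'$, and each cost is at least $\bar f(\vt')\pm\vpi_c\vu$. Hence $\bar f(\vt'+\vu)\le 2\bar s'-\bar f(\vt')+\vpi_c\vu$. Mixing with weight $\eps$ gives a cost at most $\bar f(\vt')+\eps(2\bar s'-2\bar f(\vt'))+\eps\vpi_c\vu$. I would handle the linear term by choosing, between $\vx^+$ and the reflected choice, the sign giving a nonpositive contribution. That choice is not available, since $\vp$ fixes the sign of $\vu$.

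So the main obstacle is the linear term $\eps\vpi_c\vu$. My approach is to bound it using Remark~\ref{rem:anchor-allowance}: every $(\vu,h)\in M$ satisfies $|h-\vpi\vu|\le g$. Working directly at the anchor price, that bound makes the normalized cost at $\vt'+\vu$ at most twice the normalized allowance. Therefore the mixture costs at most $(1-\eps)\bar f(\vt')+\eps\cdot2\bar s'$ in residual normalized terms. I would then sharpen this to $(1+\eps)\bar s'$ using $\bar f(\vt')\le\bar s'$, which gives a bound $\le\bar s'+\eps\bar s'$. Adding the normalized cost of $\vY$, which is at most $a_q-\bar s'$, gives a total at most $a_q+\eps\bar s'\le(1+\eps)a_q$.

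Finally, $(1+\eps)a_q\le(1+\eps_*)(1+\gamma)g_v<2g_v<4g_v=U$ follows from \eqref{eq:allowance}. For merged queries I would use the attaining source, as in the embedding \eqref{eq:embedding}.
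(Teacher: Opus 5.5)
Your argument is correct and is essentially the paper's: mix the paired witness at $\vt'+\vu$ with weight $\eps$ against a centre representation of normalized cost at most $\bar s'$, bound the former by $2\bar s'$ to get $(1+\eps)\bar s'$, and add $\bar c(\vY)\ge0$ to reach $(1+\eps)a_q$. Justify the $2\bar s'$ bound directly from two facts you already note: at the anchor price the $\vpi$-terms cancel between $\vt'\pm\vu$, so the two paired witnesses have nonnegative normalized costs summing to at most $2\bar s'$. Citing Remark~\ref{rem:anchor-allowance} is not quite right, since its bound $|h-\vpi\vu|\le g$ uses that $\vpi$ supports $f_I$ at $\vt_v$; for the child at $\vt'$ the anchor price gives it only with $\bar s'$ on the right, not the child's gap.
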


\begin{proof}
The paired witnesses for $\vp-\vt'$ have nonnegative normalized costs summing
to at most $2s'$; mixing the relevant one with weight $\eps$ with a centre
representation of cost at most $s'$ gives cost at most $(1+\eps)s'$. With
$s'=a_q-\bar c(\vY)$ and $\bar c(\vY)\ge0$ the total is at most
$(1+\eps)a_q\le(1+\eps_*)(1+\gamma)g_v<2g_v$.
\end{proof}

\begin{proposition}[Lift]\label{prop:lift}
For an anchor in outcome~1, the procedure either returns an actual
completion, or creates children, inheriting depth $j$ and lift count
$\ell+1$, that retain every completion. Summed over anchors and bands,
\begin{equation}\label{eq:lift-mass}
 \sum w_{\rm child}\le\Alift w,\qquad\Alift=C_d\cL,
\end{equation}
and each child has rank $r+1$ and
\begin{equation}\label{eq:lift-volume}
 \Psi_{\rm child}\le\frac{2(r+1)W}{\nu}\Psi(I).
\end{equation}
\end{proposition}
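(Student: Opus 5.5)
The proof has three parts: completeness (either a solution or children retaining every completion), the mass bound \eqref{eq:lift-mass}, and the volume bound \eqref{eq:lift-volume}.

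\emph{Completeness.} I would read it off the two preceding subsections. If Lemma~\ref{lem:point-thin} returns an integer point of $\mathcal K_\varrho$, the rounding argument gives an actual completion of the centre of normalized cost below $g_v/4$. The witness pointers then reconstruct a root solution. Otherwise we have $\vpsi$, the $\nu$ exceptional items, and the strip \eqref{eq:strip}. Every relevant fractional completion satisfies \eqref{eq:global-deficit}, so every binary completion of every query assigned to $\vv$ is kept. Such a completion is relevant by Lemma~\ref{lem:lift-afford} and the patch radius \eqref{eq:patch-radius}. It is kept because the table over reserve plus exceptional items is exact (Lemma~\ref{lem:transfer}), and the new coordinate of its $I'$-part lies in the enumerated range. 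Adjoining $\vpsi$ raises the rank to $r+1$, as shown after \eqref{eq:thickness}. The lift count increases because exactly one coordinate is added.

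\emph{Mass.} I would imitate the mass part of Lemma~\ref{lem:thin}. Translate every child buffered point back to old coordinates by its removed-assignment offset. By the embedding \eqref{eq:embedding} in old coordinates and \eqref{eq:patch-radius}, its projection lies in $\vt_v+C_dS_v$. The removed part covers at most $k_0+\nu$ items outside the pool, so its offset is in $O(k_0)P\subseteq S_v$. Folding (Lemma~\ref{lem:folding}) bounds the old-coordinate projections, and the intermediate table states, by $C_d|(\vt_v+\frac\alpha2S_v)\cap\Lambda_I|$. Above each old point at most $511W+1=C_d$ new values occur, by \eqref{eq:strip}. A child buffer point differs from the strip only by a buffer displacement, whose $\vpsi-\vbeta$ part is controlled similarly by \eqref{eq:global-deficit}, so the count stays $O_d(1)$ per old point. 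Summing over anchors with \eqref{eq:packing-count} gives $C_d\cL w$.

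\emph{Volume.} I expect this to be the main obstacle. I would work in the span of the child columns $(\bar\va_i,\vpsi(\vu_i))$ and apply the shear $(\vy,z)\mapsto(\vy,z-\vbeta\vy)$. This shear is volume-preserving and maps the child lattice onto a lattice of the same covolume. By \eqref{eq:thickness}, the sheared $P_{\rm child}$ lies in $P_I\times[-2W/\nu,2W/\nu]$, so its volume is at most $\vol_r(P_I)\cdot 4W/\nu$. For the covolume, the child lattice $\Lambda'$ projects into $\Lambda_I$, and its kernel is $\Lambda'\cap(\{0\}\times\R)$. That kernel is integral in the new coordinate, so its generator has length at least $1$. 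Hence $\covol\Lambda'\ge\covol\Lambda_I\cdot1$, using the projection-times-kernel formula and the fact that the projection is a sublattice of $\Lambda_I$. Combining,
\[
\Psi_{\rm child}\le\frac{(r+1)!}{2^{r+1}}\cdot\frac{4W\vol_r(P_I)}{\nu\,\covol\Lambda_I}=\frac{2(r+1)W}{\nu}\Psi(I).
\]
The delicate point is making sure the integrality of the kernel survives undoing the orientations, i.e.\ the translation about $\epsilon_i$ and the signs $\sigma_i$. These signs act as $\pm1$ on columns and translations on targets, so neither the hull $P$ nor the lattice changes.
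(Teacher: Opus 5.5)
Your proposal is correct and follows the paper's proof essentially step for step. Completeness is read off from the rounding and strip subsections. The mass bound comes from placing every translated buffered child point in a bounded multiple of $S_v$ around $\vt_v$ (the paper gets $\vt_v+4S$), then applying folding, the factor $511W+1$ from \eqref{eq:strip}, and \eqref{eq:packing-count}. The volume bound comes from the shear, \eqref{eq:thickness}, and the projection-times-kernel covolume estimate.

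One point in the mass step should be stated more carefully. Translate all buffered points of an anchor's child by one fixed integral endpoint reference, so the counting map is injective; translating each point by its own removed-assignment weight is neither well defined nor injective. The removed set also contains the whole trial removal ($I\setminus A_c$ and $I_{\rm out}$), which can be large. Its displacement is therefore controlled by Lemma~\ref{lem:drift}, which places it in $\frac E2P\subseteq S_v$, and not by an item count; only the reserve and exceptional items contribute the $2k_0P$ term.
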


\begin{proof}
Completeness was shown above. For mass, all removed assignments of one
anchor are merged into one table with one pool $I'$. Translate by the
integral endpoint reference. By Lemma~\ref{lem:lift-afford} and
\eqref{eq:global-deficit}, every buffered child point has new coordinate
in the strip \eqref{eq:strip}. Its old projection lies in $\vt_v+4S$: the
query offset is in $\gamma S$, the buffer displacement in $S$, the trial
removal in $\frac E2P\subseteq S$ (by \ref{S:sched}), and reserve plus
exceptional items in $2k_0P\subseteq S$. Lemma~\ref{lem:folding} with
$\rho=4$, the factor $511W+1$, and \eqref{eq:packing-count} give
\eqref{eq:lift-mass}; the same bound covers intermediate table states.

For volume, the analysis shear $(\vw,z)\mapsto(\vw,z-\vbeta \vw)$ and
\eqref{eq:thickness} place the new symmetric hull in
$P\times[-2W/\nu,2W/\nu]$. The new lattice projects into $\Lambda_I$ and has
a vertical kernel of integral period at least one, so its covolume is at
least $\covol\Lambda_I$. Hence
$\Psi_{\rm child}/\Psi(I)\le\frac{(r+1)!}{2^{r+1}}\frac{2^r}{r!}
\frac{4W}{\nu}=\frac{2(r+1)W}{\nu}$.
\end{proof}

%======================================================================
\section{The recursion, the potential, and complexity}\label{sec:complexity}

\subsection{One node, its children, and merging}\label{sec:merge}

A node with state $(I,\cW,j,\ell)$ is processed as follows: discard
infeasible queries; reduce boundary targets (Lemma~\ref{lem:boundary});
choose anchors in every gap band (Lemma~\ref{lem:packing}); treat each
anchor as thin (Lemma~\ref{lem:thin}) or deep; for a deep anchor run the
trial and either discard its queries (empty trial), lift
(Proposition~\ref{prop:lift}), or restore the core
(Lemma~\ref{lem:high-low}). Children are of four kinds:
\begin{center}\small
\begin{tabular}{@{}llll@{}}
\toprule
kind & pool & depth & lift count\\
\midrule
face / rank-losing & subpool of $I$, smaller rank & $j$ & $\ell$\\
thin, same rank & subpool of $I$ & $j$ & $\ell$\\
arithmetic & restored core $R$ & $j+1$ & $\ell$\\
lift & $I'$ with one new coordinate, rank $r+1$ & $j$ & $\ell+1$\\
\bottomrule
\end{tabular}
\end{center}
Queries are merged only within one parent node and only when pool, depth,
lift count, and coordinate system agree; equal targets keep the largest
allowance and an attaining source. Masses of the children of one kind are
summed before cubing, using only $\sum w_i^3\le(\sum w_i)^3$. Since depth
never decreases, $\eps_{j'}\le\eps_j$ and $H_{j'}\le H_j$ for every
descendant: no scale grows. The scales are numbers; each child measures
$Q$, $P$, and its lattice in its own coordinates, so a coordinate change
does not affect their validity. Every bound above on the mass of a child
was proved by mapping its buffered points into the parent buffer region, so
it is independent of the child's coordinates.

\begin{lemma}[Depth bounds]\label{lem:depth-bounds}
On every root-to-node path: there are at most $\T_{\rm lift}$ lifts; the
stored rank is at most $D$; $\Psi$ never increases, so $\Psi\le\Psi_0$; and
there are fewer than $J$ arithmetic children, so $j\le J-1$ and $j+1\le J$.
\end{lemma}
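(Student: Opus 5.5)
The four claims follow from the child table in Section~\ref{sec:merge} together with the volume bounds already proved for each kind of child. I would argue by induction along a root-to-node path, tracking $(\rank,\Psi,j,\ell)$, and show that every child satisfies $\Psi_{\rm child}\le\Psi(I)$. For face or rank-losing children this is Lemma~\ref{lem:volume}(3), since their pools are subpools of smaller rank. For thin same-rank children, Lemma~\ref{lem:thin} gives $\Psi_{\rm child}\le(rK/H_j)\Psi(I)$. Here $r\le D$ by the induction hypothesis and $H_j\ge H_{\min}\ge\sqrt{12}\,DK$, so the factor is below one. Arithmetic children have proper full-rank core lattices, so Lemma~\ref{lem:volume}(2) gives $\Psi_R\le\Psi(I)/2$. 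Lift children satisfy \eqref{eq:lift-volume}, with factor $2(r+1)W/\nu\le 2(D+1)W/N^{\delta}\le N^{-\delta/2}<1$ by \ref{S:large}. The rank bound in this last step needs $r\le D$ before the lift, which is again the induction hypothesis. Hence $\Psi$ never increases, and $\Psi\le\Psi_0$ follows from Lemma~\ref{lem:volume}(5).

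For the lift count, I would set up a potential. Every non-lift step keeps $\Psi$ from increasing. Each lift multiplies $\Psi$ by at most $N^{-\delta/2}$. On the other side, $\Psi\ge1$ always by Lemma~\ref{lem:volume}(1), and $\Psi_0\le N^{2d}$ by \ref{S:large}. So the number $\ell$ of lifts on a path satisfies $N^{-\ell\delta/2}N^{2d}\ge1$, that is $\ell\le4d/\delta\le\T_{\rm lift}$.

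For the rank, only lifts raise rank, and each raises it by exactly one. Face children lower rank, and all other children keep it. The root has rank at most $d$, so the stored rank is at most $d+\T_{\rm lift}=D$. This closes the induction hypothesis $r\le D$ that the thin and lift steps used.

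For arithmetic depth, each arithmetic child at least halves $\Psi$, and $\Psi$ never increases between arithmetic steps while staying $\ge1$. So there are at most $\log_2\Psi_0<J$ arithmetic steps, using $J=\lceil\log_2\Psi_0\rceil+D$ with $D\ge1$. This gives $j\le J-1$ and $j+1\le J$.

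The main obstacle is circularity. Lemmas~\ref{lem:core-alg}, \ref{lem:collar} and \ref{lem:high-low} themselves assume $j+1\le J$, $r\le D$ and $\Psi\le\Psi_0$ at the current node. I would resolve this by making the induction explicit on the node depth. The hypotheses are that all four bounds hold at the parent. At the parent, the cited lemmas are then valid and produce children whose $\Psi$, rank, $j$ and $\ell$ satisfy the bounds above. Two facts need care. First, the thin bound must use the parent's $H_j$, and $H_j\ge H_{\min}$ for every $j\le J$. Second, the lift bound needs $r+1\le D$. The counting argument for $\ell$ shows that a parent with $\ell=\T_{\rm lift}$ cannot lift again: its child would have $\Psi<1$, which contradicts $\Psi\ge1$.
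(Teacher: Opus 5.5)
Your proof is correct and follows the paper's argument. Under face, thin and arithmetic children $\Psi$ does not increase (arithmetic children at least halve it), each lift multiplies it by at most $N^{-\delta/2}$ while $r\le D$, and $1\le\Psi\le N^{2d}$ then bounds the lifts by $\T_{\rm lift}$, the rank by $d+\T_{\rm lift}=D$, and the arithmetic steps by $\log_2\Psi_0<J$. Your explicit induction on node depth spells out what the paper's ``hypothetical $(\T_{\rm lift}+1)$-st lift'' argument does implicitly. One harmless slip: thin children, not only face children, can lose rank, but this does not affect the upper bound on rank.
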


\begin{proof}
All reductions other than lifts do not increase $\Psi$
(Lemmas~\ref{lem:volume}, \ref{lem:boundary}, \ref{lem:thin}; for
same-rank thin children $rK/H_j\le DK/H_{\min}\le1$ by
\eqref{eq:schedule}), an
arithmetic child at least halves it, and a lift multiplies it by at most
$2(r+1)W/\nu\le N^{-\delta/2}$ as long as $r\le D$, by \ref{S:large}.
Consider a path up to a hypothetical $(\T_{\rm lift}+1)$-st lift. Before it
the rank is at most $d+\T_{\rm lift}=D$, so all estimates apply, and after
it $\Psi\le N^{2d-(\T_{\rm lift}+1)\delta/2}<1$, contradicting
Lemma~\ref{lem:volume}. As $\Psi\ge1$, arithmetic children occur at most
$\log_2\Psi_0<J$ times.
\end{proof}

\subsection{The contracting potential}

For a node of rank $r$, lift count $\ell$, mass $w$, and normalized volume
$\Psi$ let
\begin{equation}\label{eq:potential}
 \Omega=12(\Athin+1)^3,\qquad Z=144(\Athin+1)^3\Alift^3,\qquad
 \Phi=\Omega^r\,Z^{\T_{\rm lift}-\ell}\,w^3\,\Psi^2 .
\end{equation}
The summand $1$ accounts for face children: by
Lemmas~\ref{lem:boundary} and~\ref{lem:thin} the face children of a node
have total mass at most $w$ and its thin children at most $\Athin w$, so
every subfamily of these children has total mass at most $(\Athin+1)w$.
Also $\Omega=C_d\cL^3$ and $Z=C_d\cL^6$.

\begin{proposition}[Contraction]\label{prop:contraction}
The potentials of all children of a node sum to at most $\frac7{12}\Phi$.
\end{proposition}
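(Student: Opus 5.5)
The plan is to split the children of a node into the four kinds of the table in Section~\ref{sec:merge} and bound the total potential of each kind by an explicit fraction of $\Phi$. The pieces are chosen so the fractions add to at most $\frac7{12}$. Throughout, $\Phi=\Omega^rZ^{\T_{\rm lift}-\ell}w^3\Psi^2$, and within one kind we use $\sum w_i^3\le(\sum w_i)^3$ on masses summed before cubing.

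\emph{Face and rank-losing children.} These have rank at most $r-1$, the same $\ell$, and $\Psi_{\rm child}\le\Psi(I)$ by Lemma~\ref{lem:volume}(3). Lemma~\ref{lem:thin} can also produce rank-losing children, so I count these together with the face children. Their total mass is at most $(\Athin+1)w$, by Lemmas~\ref{lem:boundary} and~\ref{lem:thin}. Hence their potentials sum to at most
\[
\Omega^{r-1}Z^{\T_{\rm lift}-\ell}(\Athin+1)^3w^3\Psi^2=\frac{(\Athin+1)^3}{\Omega}\Phi=\frac1{12}\Phi .
\]

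\emph{Thin same-rank children.} These keep $r$ and $\ell$, have total mass at most $\Athin w$, and satisfy $\Psi_{\rm child}\le(rK/H_j)\Psi\le(DK/H_{\min})\Psi$. By \eqref{eq:schedule}, $H_{\min}\ge\sqrt{12}(\Athin+1)^{3/2}DK$. Their contribution is therefore at most $\Athin^3\cdot\frac{1}{12(\Athin+1)^3}\Phi\le\frac1{12}\Phi$.

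\emph{Arithmetic children.} These keep $r$ and $\ell$. Their cores are proper full-rank sublattices by Lemma~\ref{lem:high-low}, and equal cores are merged. The second inequality of Proposition~\ref{prop:arithmetic} then gives $\sum w_R^3\Psi_R^2\le\frac13w^3\Psi^2$, so this kind contributes at most $\frac13\Phi$.

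\emph{Lift children.} These have rank $r+1$ and lift count $\ell+1$. They pick up a factor $\Omega/Z$, their total mass is at most $\Alift w$ by \eqref{eq:lift-mass}, and \eqref{eq:lift-volume} gives $\Psi_{\rm child}^2\le(2(r+1)W/\nu)^2\Psi^2\le\Psi^2$ via \ref{S:large}. Their contribution is thus at most
\[
\frac{\Omega\Alift^3}{Z}\Phi=\frac{12(\Athin+1)^3\Alift^3}{144(\Athin+1)^3\Alift^3}\Phi=\frac1{12}\Phi .
\]
Adding the four bounds gives $\frac1{12}+\frac1{12}+\frac13+\frac1{12}=\frac7{12}$.

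\emph{Main obstacle.} The delicate point is ensuring that every child falls into exactly one kind with the stated parameters. Rank-losing children coming from thin anchors must be folded into the face group, which is exactly why that group's mass bound is $(\Athin+1)w$. Lift children may also lose rank, but $\Omega^{r'}\le\Omega^{r+1}$ still holds for them. Lift children must also satisfy the hypotheses of Lemma~\ref{lem:depth-bounds}, namely $\ell+1\le\T_{\rm lift}$, so that the exponent of $Z$ stays nonnegative and $r+1\le D$. Finally, the mass bounds of different kinds must be valid simultaneously, since they are proved against the same parent mass $w$. I would check these bookkeeping points first. The arithmetic estimates themselves are direct substitutions.
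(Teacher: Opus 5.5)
Your proposal is correct and follows the paper's proof essentially line by line. You use the same four-way split: same-rank thin children; rank-losing children, with the face children folded in under the joint mass bound $(\Athin+1)w$; arithmetic children via the second inequality of Proposition~\ref{prop:arithmetic}; and lift children with the weight ratio $\Omega/Z$. This gives $\frac1{12}+\frac1{12}+\frac13+\frac1{12}=\frac7{12}$, and bounding the same-rank thin mass by $\Athin w$ instead of the paper's $(\Athin+1)w$ changes nothing, since $(DK/H_{\min})^2\le 1/(12(\Athin+1)^3)$ absorbs either.
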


\begin{proof}
Same-rank thin children: $\sum w_i\le(\Athin+1)w$ and
$\Psi_i\le(DK/H_{\min})\Psi$, contributing at most
$(\Athin+1)^3(DK/H_{\min})^2\Phi\le\Phi/12$ by \eqref{eq:schedule}.
Rank-losing children (face children and lower-rank thin children together):
$\sum w_i\le(\Athin+1)w$, $\Psi_i\le\Psi$ and
$\Omega^{r_i-r}\le1/(12(\Athin+1)^3)$, contributing at most $\Phi/12$.
Arithmetic children: at most $\Phi/3$ by
Proposition~\ref{prop:arithmetic}. Lift children: $\sum w_i\le\Alift w$,
$\Psi_i\le\Psi$, and the weight ratio is $\Omega/Z$, contributing at most
$12(\Athin+1)^3\Alift^3/Z=1/12$ times $\Phi$. In total
$\frac1{12}+\frac1{12}+\frac13+\frac1{12}=\frac7{12}$.
\end{proof}

Nodes whose batch becomes empty are discarded. Every remaining node has a
query $(\vt,s)$ with $\vt\in\Lambda_I$, and $\vt\in\cD_j(\cW)$ because
$\vzero\in\eps_jQ_I(\vt,s)\cap H_jP_I$; so $w\ge1$, and with $\Psi\ge1$ and
$\ell\le\T_{\rm lift}$ (Lemma~\ref{lem:depth-bounds}) we get $\Phi\ge1$. The
total potential at depth $m$ is at most $(7/12)^m\Phi_0$, so the recursion
for one root query terminates and has at most $\frac{12}5\Phi_0$ nodes.

\subsection{Computational primitives}

\begin{lemma}[Product-volume boxing]\label{lem:boxing}
Let $\va_1,\dots,\va_n$ generate a rank-$r$ lattice $\Lambda$ and
$P=\conv\{\pm \va_i\}$. In polynomial time (times $C_r$) one can choose
coordinates identifying $\Lambda$ with $\Z^r$ in which $|(\va_i)_j|\le q_j$
with integers $q_j\ge1$ and $\prod_jq_j\le C_r\vol(P)/\covol\Lambda\le
C_r\Psi$. Hence signed sums of at most $k'\nu$ columns admit an injective
integer encoding into an interval of length $C_{r,k'}\nu^r\Psi$.
\end{lemma}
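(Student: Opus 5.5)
We first compute a basis of $\Lambda$ from the columns, using Hermite normal form or LLL on the generators, and identify $\Lambda$ with $\Z^r$. The task is then to find a further unimodular change of basis after which every column lies in a box $\prod_j[-q_j,q_j]$ whose product $\prod_j q_j$ is comparable to $\vol(P)$. Since $\covol\Lambda=1$ in these coordinates, $\vol(P)$ equals $\vol(P)/\covol\Lambda$. By the definition \eqref{eq:psi}, this ratio is $C_r\Psi$.

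The construction runs in two steps. First we find a well-rounding ellipsoid or simplex for $P$. Following the proof of Lemma~\ref{lem:point-thin}, we pick columns $\va_{i_0},\dots$ and run the determinant-doubling exchange on $r$ independent columns $\vv_1,\dots,\vv_r$ among $\pm\va_i$. Replacing $\vv_i$ whenever some column has a coefficient larger than $2$ in absolute value in the basis $\vv_j$ more than doubles $|\det|$. By Lemma~\ref{lem:volume}(1), $|\det|\le\Psi$ in lattice units, so only $O(\log\Psi)$ exchanges occur. At termination every column lies in $2\cdot\conv\{\pm\vv_j\}$-type parallelepiped $\{\sum\lambda_j\vv_j:|\lambda_j|\le2\}$. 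Also $\vol(P)\ge 2^r|\det(\vv_j)|/r!$.

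Second, we pass from the real basis $\vv_j$ to a lattice basis. LLL-reduce $\Lambda$ with respect to the quadratic form that makes $\vv_j$ orthonormal, that is, reduce $T\Lambda$ with $T=V^{-1}$. This gives a basis $\vb_1,\dots,\vb_r$ with $\prod\|T\vb_j\|\le 2^{r(r-1)/4}|\det T|\covol\Lambda$. We then use the standard fact that for an LLL-reduced basis the coordinate functionals satisfy $|\vb_j^*(\vx)|\le C_r\|T\vx\|/\|T\vb_j\|$, because the dual vectors are short relative to $1/\|T\vb_j\|$ up to $C_r$ (Babai/orthogonality defect). Since $\|T\va_i\|\le 2\sqrt r$, the $j$-th coordinate of $\va_i$ is at most $q_j:=\lceil C_r/\|T\vb_j\|\rceil$. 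When $\|T\vb_j\|$ is large, so that coordinate $j$ is identically $0$, we take $q_j=1$.

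The product bound is the main obstacle, because the ceilings to $1$ can break the naive estimate. We need $\prod_j\max(1,C_r/\|T\vb_j\|)\le C_r|\det V|/\covol\Lambda$. Ignoring the ceiling, $\prod C_r/\|T\vb_j\|\le C_r\,|\det V|/\covol\Lambda$ by the LLL product bound, and this quantity is at most $C_r\vol(P)/\covol\Lambda$. To handle the indices with $\|T\vb_j\|>C_r$, we use that after LLL the long vectors can be ordered last and the short vectors span a primitive sublattice. The columns project to zero modulo the short sublattice in those long coordinates only if the width is less than $1$, but the columns generate $\Lambda$, so every coordinate functional is nonzero on some column. Thus $1\le|\vb_j^*(\va_i)|\le C_r/\|T\vb_j\|$, which forces $\|T\vb_j\|\le C_r$ for all $j$, and the ceilings cost only a factor $2^r$.

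The encoding statement then follows. A signed sum of at most $k'\nu$ columns has coordinate $j$ in $[-k'\nu q_j,k'\nu q_j]$. The mixed-radix encoding $\sum_j x_j\prod_{l<j}(2k'\nu q_l+1)$ is injective on that box, with range length at most $\prod_j(2k'\nu q_j+1)\le C_{r,k'}\nu^r\prod q_j\le C_{r,k'}\nu^r\Psi$. All steps, namely HNF, at most $\log\Psi_0$ exchanges each being an LP-free linear-algebra test over $n$ columns, and LLL, are polynomial in the bit length times $C_r$.
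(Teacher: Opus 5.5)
Your proof is correct and follows the paper's route: round $P$ by a determinant-doubling exchange, LLL-reduce the lattice in the rounded coordinates, and use that the columns generate $\Lambda$ to make every coordinate width a positive integer. The paper differs only in detail: it rounds with the simplex of Lemma~\ref{lem:point-thin} and reduces the dual lattice directly, so $\prod_j\|\vb_j^*\|\le C_r/\covol\Lambda$ gives the product bound in one line without any ceiling discussion.

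One correction: the estimate $\prod_j\|T\vb_j\|^{-1}\le|\det V|/\covol\Lambda$ comes from Hadamard's inequality for the basis $(T\vb_j)$, not from the LLL product bound. The LLL bound runs the other way, bounding $\prod_j\|T\vb_j\|$ from above. LLL is needed only for the orthogonality-defect bound $|\vb_j^*(\vx)|\,\|T\vb_j\|\le C_r\|T\vx\|$.
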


\begin{proof}
The simplex construction in the proof of Lemma~\ref{lem:point-thin}
gives a rational linear map $T$ and a point $\vo$ with
$\vo+B_2\subseteq TP\subseteq \vo+R_rB_2$. By symmetry also
$-\vo+B_2\subseteq TP$, so $B_2\subseteq TP$ by convexity, and
$TP\subseteq(\vo+R_rB_2)\cap(-\vo+R_rB_2)\subseteq R_rB_2$. Replace $P$ by $TP$
and $\Lambda$ by $T\Lambda$ (this preserves $\vol/\covol$), and LLL-reduce the dual of the transformed lattice to a basis
$\vb_j^*$ with $\prod\|\vb_j^*\|\le C_r/\covol$. The widths
$q_j=h_{P}(\vb_j^*)$ are positive integers, their product is at most
$C_r/\covol\le C_r\vol(P)/\covol$, and the dual basis gives the
coordinates. Use mixed radix $2\lceil k'\nu q_j\rceil+1$ in coordinate $j$.
\end{proof}

A per-coordinate bound $q_j\le C\Psi$ would give $\Psi^r$ and a quadratic
dependence on $d$ in the exponent; the product bound is essential. With
Lemma~\ref{lem:depth-bounds} this gives $H_{\rm amb}\le H^*_{\rm amb}$ in
every application of Theorem~\ref{thm:gap}.

\begin{lemma}[Output-sensitive enumeration]\label{lem:enum}
The $M'$ integer points of a projection to $\R^r$ of a rational polytope of
description length $\mathcal I$ can be listed with witnesses in
$f(r)(M'+1)\operatorname{poly}(\mathcal I)$ time.
\end{lemma}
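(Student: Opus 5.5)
The plan is a search-tree enumeration in the style of Lenstra's algorithm, with the branching driven by Lemma~\ref{lem:point-thin}. Let $\mathcal K=\{(\vx,\vy):\vy=B\vx,\ \vx\in\mathcal P\}$ be the extended description, where $\mathcal P$ is the rational polytope and $\vy=B\vx$ the projection to $\R^r$. We want the set $S=\vy(\mathcal K)\cap\Z^r$, of size $M'$, and every node of the tree should correspond to a face-free restriction of $\mathcal K$ in which some integer coordinates of $\vy$ are fixed.

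\emph{Step 1: the recursive procedure on a polytope $\mathcal Q\subseteq\R^r$.} The node receives the projected polytope $\mathcal Q=\vy(\mathcal K')$, given implicitly by the extended description $\mathcal K'$ of $\mathcal K$ with a lattice coset constraint. First an LP decides whether $\mathcal Q$ is empty; if it is, the node returns nothing.

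Otherwise we apply Lemma~\ref{lem:point-thin} to $\mathcal Q$. That lemma only needs LP optimization over $\mathcal Q$, and this can be done through the extended formulation, so it runs in $f(r)\operatorname{poly}(\mathcal I)$ time. There are two outcomes.
\begin{enumerate}
\item It returns a thin primitive $\vpsi$ of width at most $W_r$. We then branch on the at most $W_r+1$ integer values $v$ with $\min_{\mathcal Q}\vpsi\le v\le\max_{\mathcal Q}\vpsi$. Each branch adds $\vpsi(\vy)=v$, and after a unimodular change of coordinates (Hermite normal form) the problem becomes a rank-$(r-1)$ instance.
\item It returns an integer point $\vp\in\mathcal Q$, together with a witness $\vx$ obtained by one LP. We record $\vp$. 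To find the remaining points we split the integer points of $\mathcal Q\setminus\{\vp\}$ into at most $2r$ disjoint pieces: the first coordinate at which $\vy$ differs from $\vp$ is some index $j$, and it is either at most $p_j-1$ or at least $p_j+1$. Each piece is a polytope with coordinates $1,\dots,j-1$ fixed and one added halfspace, and we recurse on each.
\end{enumerate}

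\emph{Step 2: cost accounting, the main obstacle.} The danger is that empty or point-free branches could multiply without bound. In the thin outcome every branch drops the rank, so at rank $r$ a chain of thin outcomes contributes at most $(W_r+1)^r$ leaves. This would still be unbounded if it hung below every found point, so the charging has to be set up carefully. The first thing to try is to make every leaf of a thin-only chain that ends at an empty rank-$0$ polytope either detected as empty by the LP test, in which case it is charged to its parent, or contain a point.

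The cleaner alternative is a different recursion: replace the point-splitting of Step~1 by recursion on the full fibres. In this version we call Lemma~\ref{lem:point-thin}, and whenever it outputs a point we still ask it about the polytope; it may keep giving points without ever giving a thin direction. Applied repeatedly, this is the standard Lenstra enumeration, and its costs are controlled by the following two facts.
\begin{enumerate}
\item Every node whose polytope is nonempty after the LP test either is a thin node with at most $W_r+1$ children, or produces a new distinct point. Distinctness holds because of the halfspace splitting, which makes the pieces disjoint.
\item The depth is at most $r$ for coordinate fixings, plus the number of splits along a path.
\end{enumerate}

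Bounding the splits is exactly the point I expect to be delicate. My plan is to charge each nonempty node to the lexicographically smallest integer point below it, when one exists. Every point is charged by at most $f(r)$ nodes, since on the path to a given point each split step fixes one more coordinate or adds a one-sided bound. Nodes with no integer point below them are thin-only chains, because a point outcome would supply a point, so each of them has at most $f(r)$ descendants. This gives $f(r)(M'+1)$ LP calls in total.

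\emph{Step 3: bit length.} Every node adds $O(r)$ constraints with coefficients polynomial in $\mathcal I$. The unimodular coordinate changes come from Hermite normal form and have polynomially bounded entries, so every description has length $\operatorname{poly}(\mathcal I)$. Each call therefore costs $f(r)\operatorname{poly}(\mathcal I)$, and multiplying by the node count from Step~2 yields the claimed running time.
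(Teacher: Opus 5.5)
Your search tree is a workable alternative to the paper's route. At each node you call point-or-thin, branch on the at most $W_r+1$ values of a thin $\vpsi$, or split off a found point $\vp$ into at most $2r$ lexicographic pieces. However, Step~2 does not prove the node bound, because its central claim is false.

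\textbf{The charging argument.} You charge each node to the lexicographically smallest integer point in its region. You then assert that each point is charged at most $f(r)$ times, since every split ``fixes one more coordinate or adds a one-sided bound.'' But one-sided bounds on the first free coordinate can be added indefinitely without lowering the dimension. Take $r=1$ and the segment $[0,M'-1]$. Nothing in Lemma~\ref{lem:point-thin} prevents it from returning the right endpoint of the current interval every time. The tree is then a chain of $M'$ point nodes, all charged to the point $0$; even a midpoint rule charges $0$ about $\log_2M'$ times. Your ``cleaner alternative'' of asking the lemma again about the same polytope is not a well-defined procedure.

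\textbf{The correct count.} It is simpler, and it shows your first worry is unfounded. Sibling regions are disjoint, and a point node excludes the point it outputs. So point nodes are in bijection with the $M'$ outputs, and each has at most $2r$ children. Every other node is reached from the root, or from a child of a point node, by a chain of thin nodes, each of which lowers the dimension. Hence each such start node carries at most $\sum_{i=0}^{r}(W_r+1)^i$ nodes, for a total of at most $(1+2rM')\sum_{i=0}^{r}(W_r+1)^i=f(r)(M'+1)$. A thin block of size $f(r)$ below every found point is exactly what the bound allows.

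\textbf{Step~3 needs matching repairs.} The example above shows the tree can have depth $\Theta(M')$. If every node adds $O(r)$ constraints, the descriptions grow with $M'$ and a node no longer costs $\operatorname{poly}(\mathcal I)$. You must keep only the current fixings and the tightest lower and upper bound on each current coordinate. Separately, across up to $r$ nested thin levels, a merely ``polynomially bounded'' bit size for $\vpsi$ compounds to degree $C^r$. You need a bound on $\vpsi$ that is linear in the bit length of the current vertices. This follows from $|\vpsi(\vv_i-\vv_0)|\le W_r$ on the rounding simplex together with Cramer's rule for the inverse of the edge matrix.

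\textbf{Comparison with the paper.} The paper bounds each coordinate by LP, bisects the resulting box, and tests each box with the cited mixed-integer feasibility oracle with $r$ integer variables. Every feasible box contains an output, and the bisection depth is $\operatorname{poly}(\mathcal I)$. This gives $O((M'+1)\operatorname{poly}(\mathcal I))$ calls, with flatness and bit growth hidden inside the oracle. Your repaired route avoids the depth factor but must do that bookkeeping itself. Note also that with any oracle returning a point whenever one exists, your lexicographic splitting alone gives at most $2rM'+1$ calls, and the thin branching becomes unnecessary.
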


\begin{proof}
Bound each coordinate by LP and bisect boxes; test each box by
mixed-integer feasibility with $r$ integer variables~\cite{Lenstra83,Bredereck19}.
Discard infeasible boxes. Each feasible box has an output below it and the
depth is $\operatorname{poly}(\mathcal I)$, so there are
$O((M'+1)\operatorname{poly}(\mathcal I))$ oracle calls.
\end{proof}

All bodies used ($Q_I(\vq)$, $HP_I$, affordable bodies, fibres) have
polynomial extended LP descriptions, so supporting prices, gauges,
feasibility, and the interior-intersection tests of
Lemma~\ref{lem:packing} (maximize $\rho$ such that a common centre $\vz$ has
$\vz\pm\rho\vb_j$ in both bodies, where $\vb_1,\dots,\vb_r$ is a fixed basis
of the span of the current pool) are LP computations.

\paragraph{Removed-item tables.} Thin tables use the windows of
Lemma~\ref{lem:thin}, and deep tables (trial, restoration, lift) the window
$\frac E2P\subseteq\eps_jQ_I(\vv)\cap H_jP$ around an integral reference
(by \ref{S:sched}). The reference lies in $\vt_v+\Lambda_I$, so the window
has as many lattice points as $\vt_v+\frac E2P$, which is at most $w$ since
$\vv\in\cW$. All
windows are enumerated by Lemma~\ref{lem:enum}. Each node processes at most
$w$ queries and $w$ anchors, so its tables cost a polynomial of absolute
degree in $n$, $w$, $\cL$, and bit lengths.

\paragraph{Bit lengths.} After every lift the pool is reboxed by
Lemma~\ref{lem:boxing}, so stored columns have entries at most
$C_d\Psi_0$ and feasible targets at most $C_dn\Psi_0$. Maps between
coordinate systems are $BC^{-1}$ for square matrices of boxed columns and
have $C_d\log N$ bits. Costs remain the original integers, and allowances
have $O(\log(n(c_{\max}+1)))$ bits. LP prices ($\vpi$, $\vbeta$,
$\vartheta$), thin functionals, and rounding matrices are temporary: they are
used inside one node and never stored, and only the integer values
$\sigma_i\vpsi(\vu_i)$, which are immediately reboxed, are passed to children.
Hence bit lengths do not compound along paths. The structural routine
receives geometric data of $C_d\log N$ bits.

\subsection{Running time and correctness of the binary algorithm}

\begin{proposition}[Banded running time]\label{prop:binary-banded}
The algorithm of Sections~\ref{sec:prelim}--\ref{sec:complexity} solves the
problem of Theorem~\ref{thm:binary} deterministically, returning an
optimal set of item labels, in time
\begin{equation}\label{eq:binary-banded}
 f(d)\bigl[N(1+\log(c_{\max}+1))\bigr]^{C_0'd}
\end{equation}
plus the time to read the target, for an absolute constant $C_0'$.
\end{proposition}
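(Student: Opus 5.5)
Correctness and complexity are handled separately. The complexity bound reduces to bounding the number of nodes and the work per node. For correctness, I would argue by induction on the recursion tree, which is finite by Proposition~\ref{prop:contraction}. The claim is that every node either returns an actual solution of some query in its batch, or correctly certifies that no query in the batch is solvable. Every reduction has already been shown to be exact. Discarding infeasible queries is valid by the definition of the filters. Face reductions are exact by Lemma~\ref{lem:boundary} and thin reductions by Lemma~\ref{lem:thin}. For a deep anchor, exactness of the outcomes of Lemma~\ref{lem:high-low} is given there, and lifts are exact by Proposition~\ref{prop:lift}. Dominance and merging are justified by Lemma~\ref{lem:transfer}. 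Witness pointers compose along paths, so a returned completion reconstructs item labels at the root. Optimization follows by binary search over the threshold $s$, with $O(\log(n c_{\max}+1))$ root queries. The problem is preprocessed as in Remark~\ref{rem:gap-use}: zero columns are removed, and $N<N_0(d)$ is handled by exhaustive search. Since the columns are distinct, $n\le(2N+1)^d$, so $\log n=O(d\log N)$.

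For the node count, I would bound the root potential $\Phi_0=\Omega^{d}Z^{\T_{\rm lift}}w_0^3\Psi_0^2$. There is one root query, so $w_0\le|(\vt+H_0P)\cap\Lambda_I|$. By Lemma~\ref{lem:folding} and Lemma~\ref{lem:volume}, this is at most $C_dH_0^d\Psi_0$. We have $H_0=H_{\min}+JE=C_d\cL^2 K$, and $K=O_d(N^{\delta}\log^4N)$ by Lemma~\ref{lem:scales}. Also $\Omega,Z\le C_d\cL^6$ and $\cL=C_d(\log n+\log(c_{\max}+1)+d\log N)$ by \eqref{eq:bands}. Combining these, $\Phi_0\le f(d)\,[N\cL]^{O(d)}$, using $\T_{\rm lift}=400d$ and $D=O(d)$. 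Since $\cL\le C_d\log N\,(1+\log(c_{\max}+1))$, this is $f(d)[N(1+\log(c_{\max}+1))]^{O(d)}$. Proposition~\ref{prop:contraction} then bounds the number of nodes by $\frac{12}5\Phi_0$. It also gives $w\le\Phi^{1/3}\le\Phi_0^{1/3}$ at every node, which bounds the per-node mass.

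The per-node work needs two bounds. Packing, LPs, and tables cost a polynomial of absolute degree in $n$, $w$, $\cL$, and bit lengths; this uses the paragraph on removed-item tables together with Lemma~\ref{lem:enum}, whose output counts are controlled by the windows and hence by $w$. Bit lengths are $C_d\log N+O(\log(n(c_{\max}+1)))$ by the bit-length paragraph, since reboxing via Lemma~\ref{lem:boxing} prevents them from compounding. The calls to Theorem~\ref{thm:gap} cost $\widetilde O(n\nu+\nu^2H^*_{\rm amb})=N^{O(d)}$ per anchor. Core construction costs $C_d n\kappa J\Psi_0\cdot\mathrm{poly}$ by Lemma~\ref{lem:core-alg}. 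Point-or-thin (Lemma~\ref{lem:point-thin}) and Lenstra-type calls cost $f(D)\,\mathrm{poly}$. All of these are $f(d)[N\cL]^{O(d)}$. Multiplying the node count, the per-node cost, and the number of binary-search queries gives \eqref{eq:binary-banded}. The target is read once, and every later target has $C_d\log N$-bit size after translation and reboxing.

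I expect the main obstacle to be the root bookkeeping rather than the recursion. The first difficulty is that the original target and allowance may have large bit length. Unless the first node reduces the instance promptly, for example by the boundary or LP reduction, the time to read the target could enter the per-node polynomial. The second difficulty is confirming that every factor polynomial in $n$ or in bit length is absorbed into $N^{O(d)}(1+\log(c_{\max}+1))^{O(d)}$. I would handle the first by translating targets relative to LP endpoint references, so that only $O(\log(nc_{\max}))$-bit allowances propagate. I would handle the second by checking that each polynomial term is of the form $\mathrm{poly}(n,\log N,\log(c_{\max}+1))$, which the bound $n\le(2N+1)^d$ absorbs.
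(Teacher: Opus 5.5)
Your proposal is correct and follows the paper's own argument: bound $\Phi_0$ at the root, turn it into a node count with Proposition~\ref{prop:contraction}, multiply by the per-node cost (tables, LPs, Theorem~\ref{thm:gap}, Lemma~\ref{lem:core-alg}) and by the $O(1+\log(n(c_{\max}+1)))$ binary-search queries, and prove correctness by induction over the exact reductions.

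Two steps are simpler than you make them. First, the paper gets $w_0\le(2H_0N+1)^d$ directly from $H_0P\subseteq[-H_0N,H_0N]^d$. Lemmas~\ref{lem:folding} and~\ref{lem:volume} by themselves do not give your $C_dH_0^d\Psi_0$ count, although that bound is true. Second, the target bit-length worry is settled by rejecting at once any target with a coordinate outside $[-nN,nN]$. Allowances need no translation: each is the binary-search threshold minus a subset cost, so it automatically has $O(\log(n(c_{\max}+1)))$ bits.
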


At the root, $\cW$ has one query and its buffer lies in a translate of
$H_0P\subseteq[-H_0N,H_0N]^d$ with $H_0\le H_{\min}+JE\le2H_{\min}$. Thus
$w_0\le(2H_0N+1)^d$, $\Psi\le\Psi_0$, and by Lemma~\ref{lem:scales}
\begin{equation}\label{eq:root}
 \Phi_0\le C_dN^{(5+3\delta)d}(1+\log N)^{12d}\cL^{9d+6\T_{\rm lift}}.
\end{equation}
Since $n\le(2N+1)^d$, we have $\cL\le C_d(1+\log N+\log(c_{\max}+1))$,
hence \eqref{eq:root} is at most $f(d)[N(1+\log(c_{\max}+1))]^{C_0'd}$. Each node
costs $f(d)$ times an absolute-degree polynomial in $n$, $w\le\Phi_0$,
$\cL$, the time $\widetilde O(n\nu+\nu^2H^*_{\rm amb})=N^{O(d)}$ of
Theorem~\ref{thm:gap}, the core construction of Lemma~\ref{lem:core-alg},
and bit lengths. Multiplying by the number of nodes, and by the
$O(1+\log(n(c_{\max}+1)))$ root queries of the binary search below (a
factor absorbed in the same form), gives \eqref{eq:binary-banded}.

\emph{Soundness.} Reported solutions consist of actual items with disjoint
witnesses and exact original costs; every old equation is retained by the
coordinate maps.

\emph{Completeness.} By induction on the finite recursion tree: discarded
queries have no strictly feasible completion; forced face endpoints hold for
all completions; thin windows are safe by Lemma~\ref{lem:deviation}; tables
with dominance preserve all continuations (Lemma~\ref{lem:transfer});
Lemma~\ref{lem:high-low} either certifies that the anchor's queries have no
solution, finds a centre, or restores the core exactly;
and Proposition~\ref{prop:lift} retains every integer value of the new
coordinate of every completion. A failed rounding attempt is never used to
reject. Hence every true query has a true child.

\emph{Optimization.} Test the bound $\vc^{\T}\vx\le q$ by the allowance
$s=q+1$. Test feasibility at $q=nc_{\max}$, then binary search over
$[-nc_{\max},nc_{\max}]$, keeping a witness; this is
$O(1+\log(n(c_{\max}+1)))$ queries. Targets with a coordinate outside
$[-nN,nN]$ are infeasible. This proves Proposition~\ref{prop:binary-banded}.

%======================================================================
\section{Finite lower and upper bounds}\label{sec:bounded}

\begin{theorem}[Proximity {\cite{EW19}}]\label{thm:proximity}
Let $A\in\Z^{d\times n}$ with $|A_{ij}|\le N$, $\bar \vu\in\Z_{\ge0}^n$, and
suppose $\min\{\vc^{\T}\vz:A\vz=\vb_0,\ \vzero\le \vz\le\bar \vu,\ \vz\in\Z^n\}$ is feasible. For
every optimal vertex $\vv$ of its LP relaxation there is an optimal integer
$\vz^*$ with $\|\vz^*-\vv\|_1\le R_{\rm px}:=d(2dN+1)^d$.
\end{theorem}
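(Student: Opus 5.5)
The plan is to follow the classical Cook–Gerards–Schrijver–Tardos / Eisenbrand–Weismantel argument, which uses the Steinitz lemma, and to work with the difference between an optimal integer solution and the LP vertex. Take any optimal integer solution $\vz$ and write $\vw=\vz-\vv$. Then $A\vw=\vzero$, and every coordinate of $\vw$ has the sign pattern relative to the bounds that is typical for proximity arguments. If $w_i>0$ then $v_i<\bar u_i$. If $w_i<0$ then $v_i>0$. Decompose $\vw$ into its integer part $\lfloor\cdot\rfloor$ toward zero plus a fractional part. The vertex $\vv$ has at most $d$ fractional coordinates, so the fractional part contributes at most $d$ in $\ell_1$. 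I would first move $\vz$ to the optimal integer solution that minimizes $\|\vz-\vv\|_1$, and then aim for a contradiction if this norm exceeds $R_{\rm px}$.

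The key step is a Steinitz rearrangement. List the columns of $A$ with multiplicity: $\va_i$ repeated $|\lfloor w_i\rfloor|$ times with sign $\operatorname{sgn}(w_i)$. Add the at most $d$ fractional pieces $(w_i-\lfloor w_i\rfloor)\va_i$. Together these sum to $\vzero$, and each lies in $[-N,N]^d$. By the Steinitz lemma in the $\ell_\infty$ norm, the sequence can be reordered so that every prefix sum lies in $d N\cdot[-1,1]^d$. Hence the prefix sums lie in a box with at most $(2dN+1)^d$ integer points. The fractional pieces shift the lattice; to handle this cleanly I would attach the fractional pieces at the start of the sequence, or treat the sums modulo that offset. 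If the number of integer steps exceeds $d(2dN+1)^d$, pigeonhole then yields more than $d$ disjoint consecutive blocks whose sums are $\vzero$, and so in particular at least one nonempty integer cycle $\vy$. Such a $\vy$ satisfies $A\vy=\vzero$, $\vzero\preceq\vy\preceq\lfloor\vw\rfloor$ conformally, and it uses integer steps only.

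From such a cycle we get the contradiction by a two-sided exchange. Since $\vy$ is conformal to $\vw$, the point $\vz-\vy$ is still integral and within the bounds: it moves toward $\vv$. Likewise $\vv+\vy$ is LP-feasible, because $\vy$ is conformal to $\vz-\vv$ and bounded by it coordinatewise. LP optimality of $\vv$ gives $\vc^{\T}\vy\ge0$, and integer optimality of $\vz$ gives $\vc^{\T}\vy\le0$. Hence $\vc^{\T}\vy=0$, so $\vz-\vy$ is an optimal integer solution strictly closer to $\vv$, which contradicts minimality. This bounds the number of integer steps by $(2dN+1)^d$ per residue class. Adding the fractional part and the slack in counting, the total is $\|\vz-\vv\|_1\le d(2dN+1)^d$.

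The main obstacle is the accounting in the Steinitz step. Two things must be checked: that the fractional terms do not spoil the integrality of the cycle sums, and that the constant comes out exactly as $d(2dN+1)^d$ rather than a slightly larger bound. I would try first to place the at most $d$ fractional vectors as separate terms. Among $m$ prefix sums in the box, any repeat yields a zero-sum block; I need that block to avoid the fractional terms. There are at most $d$ fractional terms, so they split the sequence into at most $d+1$ segments of integer steps, and applying pigeonhole inside a segment of length at least $(2dN+1)^d$ gives the bound. If the constants resist this, I would simply cite \cite{EW19}, since the statement is quoted from there.
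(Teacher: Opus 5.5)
The paper does not prove this statement; it imports it from \cite{EW19}, which is also your fallback. Your exchange step is correct. A nonempty zero-sum block of integer steps gives $\vy$ conformal to $\vw$ with $|y_i|\le|w_i|$. Then $\vz-\vy$ is integer feasible, $\vv+\vy$ is LP feasible, $\vc^{\T}\vy=0$, and $\vz-\vy$ is an optimal solution strictly closer to $\vv$.

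The gap is in the counting, which as written does not reach $R_{\rm px}$.
\begin{itemize}
\item The prefix sums are not integral, so the pigeonhole over $(2dN+1)^d$ integer points in your second paragraph does not apply.
\item Even for integral prefix sums, more than $d(2dN+1)^d$ of them force only $d+1$ equal values. That is $d$ zero-sum blocks, and each of them might contain one of the $d$ fractional pieces.
\item Putting the fractional pieces first is not available. The Steinitz lemma dictates the order, and the remaining integer steps need not sum to $\vzero$.
\item Your last-paragraph segment argument is valid, but $k\le d$ fractional pieces cut the sequence into $k+1$ segments. Pigeonhole produces a segment with $(2dN+1)^d$ integer steps only once the total exceeds $(k+1)\bigl((2dN+1)^d-1\bigr)$. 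This yields $\|\vz-\vv\|_1<(d+1)(2dN+1)^d-1$, which is larger than $R_{\rm px}$. (This weaker radius would still suffice for the paper's later digit-tag count, but it is not the statement.)
\end{itemize}

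The missing observation is that the first and last segments lie in the same coset $\Z^d$. Let $\vp_0,\dots,\vp_t$ be the prefix sums. Since $\vp_t=\vzero$, every prefix sum after the last fractional piece is minus a sum of integer steps, hence integral. The same holds for those before the first fractional piece.

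So read the sequence cyclically. Suppose the last fractional piece sits at position $a$ and the first at position $b+1$. The list $\vp_a,\dots,\vp_{t-1},\vp_0,\dots,\vp_b$ lies in $\Z^d\cap[-dN,dN]^d$. A repetition $\vp_i=\vp_j$ with $i\ge a$ and $j\le b$ gives a zero-sum wrap-around block: the steps at positions $i+1,\dots,t$ and $1,\dots,j$, all of them integer steps.

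Now there are only $k$ cyclic segments. (If $k=0$, then $\vv$ is integral and the closest optimum is $\vv$ itself.) When no repetition occurs, each segment has at most $(2dN+1)^d-1$ integer steps, since its $L+1$ prefix sums lie in one coset inside the box. For the closest optimal $\vz$ this gives $\|\vz-\vv\|_1<k\bigl((2dN+1)^d-1\bigr)+k\le d(2dN+1)^d$. That is exactly $R_{\rm px}$, so with this modification your Steinitz argument proves the cited bound.
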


\paragraph{Translation and grouping.} Reject if some $\ell_i>u_i$. Put
$\vz=\vx-\vell$, $\bar \vu=\vu-\vell$, $\vb_0=\vb-A\vell$; drop variables with $\bar u_i=0$. Solve
the LP; if infeasible, so is \eqref{eq:main}. Otherwise compute an optimal
vertex $\vv$ (by successively minimizing coordinates on the optimal face).
For each distinct column $\va$ let $I_a=\{i:\va_i=\va\}$, $C_a=\sum_{I_a}\bar u_i$,
$s_a=\sum_{I_a}v_i$, and let $m_{\rm col}$ be the number of distinct columns, so
$m_{\rm col}\le(2N+1)^d$. Within $I_a$ sort the variables by nondecreasing cost
(ties by index); variable $i$ supplies $\bar u_i$ unit \emph{slots} of cost
$c_i$, giving a nondecreasing sequence $q_{a,1}\le\dots\le q_{a,C_a}$,
stored as at most $n$ runs.

\begin{lemma}[Prefixes and window]\label{lem:window}
Using exactly $k$ units of group $\va$, the least cost is
$\sum_{j\le k}q_{a,j}$, attained by the first $k$ slots. Let
$\underline k_a=\max\{0,\lceil s_a-R_{\rm px}\rceil\}$ and
$\overline k_a=\min\{C_a,\lfloor s_a+R_{\rm px}\rfloor\}$. Fix the first
$\underline k_a$ slots to one and the slots beyond $\overline k_a$ to zero,
and expose the others as labelled binary items with their column and
cost, giving
\begin{equation}\label{eq:residual}
 \min\Bigl\{\textstyle\sum_jc'_jy_j:\ \sum_j\va'_jy_j=\vb_{\rm res},\
 \vy\in\{0,1\}^{M}\Bigr\},\qquad
 \vb_{\rm res}=\vb_0-\sum_a\va\,\underline k_a .
\end{equation}
Then \eqref{eq:main} is feasible iff \eqref{eq:residual} is, optimal values
differ by the fixed offset $\vc^{\T}\vell+\sum_a\sum_{j\le\underline k_a}q_{a,j}$,
and $M\le2R_{\rm px}m_{\rm col}$, $\|\vb_{\rm res}\|_\infty\le Nm_{\rm col}R_{\rm px}$, $|c'_j|\le c_{\max}$.
\end{lemma}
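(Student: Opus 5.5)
The lemma has three parts: the prefix-cost formula inside one column group, the exactness of the window reduction, and the size bounds. I would prove them in that order.

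\emph{Prefix formula.} Within a group $I_a$ all variables share the column $\va$. Hence only the total $\sum_{i\in I_a}z_i=k$ affects the weight $A\vz$, and the cost contribution is $\sum_{i\in I_a}c_iz_i$ with $0\le z_i\le\bar u_i$. Splitting each variable into $\bar u_i$ unit slots of cost $c_i$ turns this into choosing $k$ slots from a multiset. By an exchange argument the cheapest choice is the first $k$ slots in nondecreasing order. Any integer assignment with group total $k$ costs at least $\sum_{j\le k}q_{a,j}$, and filling variables in sorted order attains it. The same greedy argument also gives the fractional version: the LP restricted to a group is a continuous knapsack with nondecreasing marginal costs.

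\emph{Window and exactness.} I pass to the aggregated problem in the variables $k_a\in\{0,\dots,C_a\}$, with column $\va$ and convex piecewise-linear cost $F_a(k_a)=\sum_{j\le k_a}q_{a,j}$. By the prefix formula, this problem has the same optimal value as the translated problem on $\vz$, and every optimum lifts back by filling slots greedily. For the LP vertex $\vv$, I want the aggregated point $(s_a)_a$ to be an optimal LP vertex of a problem to which Theorem~\ref{thm:proximity} applies.

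The cleanest route is to apply Theorem~\ref{thm:proximity} directly to the original translated program with vertex $\vv$. It yields an optimal integer $\vz^*$ with $\|\vz^*-\vv\|_1\le R_{\rm px}$, so $|\sum_{I_a}z^*_i-s_a|\le R_{\rm px}$ for every group. Replacing $\vz^*$ within each group by the greedy prefix of the same total does not increase the cost, so the result is still optimal. Its group total $k_a^*$ is an integer with $|k_a^*-s_a|\le R_{\rm px}$, hence $\underline k_a\le k_a^*\le\overline k_a$. So some optimal solution uses exactly the first $k_a^*$ slots, all of which lie in the window.

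Conversely, each binary $\vy$ for \eqref{eq:residual} together with the fixed slots gives a feasible integer $\vz$: for each variable $i$, count its slots that are selected. Its cost is the offset plus $\sum_jc'_jy_j$. Each optimal solution of \eqref{eq:main} maps to a residual solution with the same offset, namely take slots $\underline k_a{+}1,\dots,k_a^*$. Feasibility transfers the same way: if \eqref{eq:main} is feasible, the proximity theorem applies. This proves that optimal values differ exactly by the stated offset.

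\emph{Bounds.} Each group exposes at most $\overline k_a-\underline k_a\le 2R_{\rm px}$ slots, so $M\le 2R_{\rm px}m_{\rm col}$. For the residual target, $\vb_{\rm res}=\vb_0-\sum_a\va\underline k_a=\sum_a\va(s_a-\underline k_a)$ because $A\vv=\vb_0$. Here $0\le s_a-\underline k_a\le R_{\rm px}$ when $\underline k_a>0$. When $\underline k_a=0$ we have $s_a< R_{\rm px}+1$; using the ceiling carefully gives $s_a\le R_{\rm px}$ in that case too, up to the precise rounding. Summing over groups gives $\|\vb_{\rm res}\|_\infty\le Nm_{\rm col}R_{\rm px}$. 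The costs $c'_j$ are original costs, so $|c'_j|\le c_{\max}$.

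\emph{Main obstacle.} The delicate step is making sure that regrouping a proximity optimum preserves both optimality and the window. The greedy rearrangement changes $\vz^*$, but it keeps the group sums, and only the group sums are compared with $s_a$, so the $\ell_1$ bound survives. The second point to check is the strictness at $\underline k_a$, i.e.\ that $k_a^*\ge s_a-R_{\rm px}$ with $k_a^*$ an integer gives $k_a^*\ge\lceil s_a-R_{\rm px}\rceil$. This is immediate.
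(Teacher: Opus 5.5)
Your proposal is correct and follows the paper's proof essentially step for step. It shows that greedy prefixes are optimal within a group, applies proximity to the translated program at $\vv$ so that the group totals satisfy $|k^*_a-s_a|\le R_{\rm px}$, replaces each group by its prefix to land in the window, and uses $\vb_{\rm res}=\sum_a\va(s_a-\underline k_a)$ for the size bounds. Your hedge in the target bound is unnecessary. The case $\underline k_a=0$ means $\lceil s_a-R_{\rm px}\rceil\le0$, which is exactly $s_a\le R_{\rm px}$. For $\underline k_a>0$, the lower bound $s_a-\underline k_a>R_{\rm px}-1\ge0$ is the step that uses $R_{\rm px}\ge1$.
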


\begin{proof}
Any $k$ selected slots cost at least the sum of the $k$ smallest, and
the first $k$ slots respect capacities; all slots of a group have the same
column. Every residual subset, added to the fixed slots, is a legal
assignment. Conversely, take $\vz^*$ from Theorem~\ref{thm:proximity}; its
group totals satisfy $|k^*_a-s_a|\le\sum_{I_a}|z^*_i-v_i|\le R_{\rm px}$, so
$\underline k_a\le k^*_a\le\overline k_a$. Replacing each group by its
prefix of the same total keeps feasibility and optimality and is
represented in \eqref{eq:residual}. Finally $\overline k_a-\underline k_a\le2R_{\rm px}$,
and $\vb_{\rm res}=\sum_a\va(s_a-\underline k_a)$ with
$0\le s_a-\underline k_a\le R_{\rm px}$ (as $R_{\rm px}\ge1$).
\end{proof}

\begin{lemma}[Digit tags]\label{lem:tags}
Let $M$ labelled binary items have columns $\va_j\in\Z^d$, costs $c_j$, and
target $\vt$, and let $M\le b_{\rm tag}^{q_{\rm tag}}$ with $b_{\rm tag}\ge2$. Index the items by
$j=0,\dots,M-1$ with base-$b_{\rm tag}$ digit vectors $\vtau_j\in\{0,\dots,b_{\rm tag}-1\}^{q_{\rm tag}}$.
The instance with columns $(\va_j,1,\vtau_j)$ (cost $c_j$) and
$(\vzero,-1,-\vtau_j)$ (cost $0$) and target $(\vt,0,\vzero)$ has $2M$ distinct
columns, $d+1+q_{\rm tag}$ equations, entries at most $\max\{\max_j\|\va_j\|_\infty,b_{\rm tag}-1\}$,
and its feasible set projects onto the original one, preserving cost.
\end{lemma}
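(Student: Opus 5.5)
The plan is to verify the four claims of Lemma~\ref{lem:tags} separately: the column count and distinctness, the number of equations, the entry bound, and the cost-preserving projection of feasible sets. The equation count and entry bound are read off directly. There are $d$ original rows, one counter row with entries $\pm1$, and $q_{\rm tag}$ digit rows with entries in $\{-(b_{\rm tag}-1),\dots,b_{\rm tag}-1\}$. Since $b_{\rm tag}\ge2$ gives $1\le b_{\rm tag}-1$, the stated maximum bounds every entry.

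For distinctness, I separate the two families by the counter coordinate: $+1$ for the columns $(\va_j,1,\vtau_j)$ and $-1$ for the columns $(\vzero,-1,-\vtau_j)$. Within each family the digit vectors $\vtau_j$ are pairwise distinct, because base-$b_{\rm tag}$ expansions of distinct indices $j<M\le b_{\rm tag}^{q_{\rm tag}}$ differ. This gives $2M$ distinct columns, even when some original columns $\va_j$ coincide.

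For the projection claim, write $y_j$ and $y'_j$ for the binary variables of the two columns attached to item $j$. The equations read
\[
 \sum_j\va_jy_j=\vt,\qquad \sum_j(y_j-y'_j)=0,\qquad \sum_j\vtau_j(y_j-y'_j)=\vzero .
\]
The main step is to show that these force $y'=y$. Put $\delta_j=y_j-y'_j\in\{-1,0,1\}$; the digit rows give $\sum_j\delta_j\vtau_j=\vzero$.

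A digitwise vanishing argument alone is not enough, since distinct digit vectors can cancel. The plan is therefore to combine the digit rows with positional weights. Take the linear functional $\vtau\mapsto\sum_r\tau_rb_{\rm tag}^{r}$, which sends $\vtau_j$ to $j$. Applying it to the digit equation gives $\sum_j\delta_jj=0$, and the counter row gives $\sum_j\delta_j=0$. These two identities still permit cancellations such as $\delta_0=\delta_2=1$, $\delta_1=-1$, $\delta_3=\dots$, so the intended reduction does not follow from them. This is the step I expect to fail as stated.

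The repair I would try is to interpret the projection claim in the weaker but sufficient form. Every feasible solution $(y,y')$ of the new instance maps to $y$, which satisfies $\sum_j\va_jy_j=\vt$ with the same cost, since the primed columns have cost $0$. Conversely, every original solution $y$ lifts to $(y,y)$, which satisfies all rows and has equal cost. So the projection onto the $y$-variables is exactly the original feasible set, with costs preserved, and equal optimal values follow. This argument needs neither $y'=y$ nor any use of the digit rows; those rows serve only to make the columns distinct.
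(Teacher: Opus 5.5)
Your final argument is correct and matches the paper's proof. The marker coordinate separates the two families, and distinct base-$b_{\rm tag}$ digit vectors separate columns within each family. An original solution $\vy$ lifts to $(\vy,\vy)$. Conversely, any extended solution projects to a vector satisfying the first $d$ equations with the same cost, because the second-family columns are zero there and carry cost $0$.

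The paper also remarks explicitly that the tag equations need not force the two copies to agree. Your abandoned attempt to prove $y'=y$ therefore was never needed, and it indeed fails: for $b_{\rm tag}=q_{\rm tag}=2$, take $\delta_0=\delta_3=1$ and $\delta_1=\delta_2=-1$.
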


\begin{proof}
Tags distinguish columns within each family and the marker coordinate
separates the families. Given an original solution $\vy$ take $\vw=\vy$.
Conversely the first $d$ equations of an extended solution say
$\sum \va_jy_j=\vt$, and only $\vy$ carries cost. (The tag equations need not
force $\vw=\vy$; this is harmless.)
\end{proof}

\begin{proposition}[Banded running time, finite bounds]\label{prop:bounded-banded}
Problem \eqref{eq:main} can be solved deterministically in time
$\operatorname{poly}(L_{\rm in})+F(d)[N(1+\log(c_{\max}+1))]^{C'd}$, with
$C'$ and the degree of the polynomial absolute.
\end{proposition}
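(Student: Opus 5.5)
The plan is to reduce to Proposition~\ref{prop:binary-banded} by the chain of Section~\ref{sec:bounded}: translation, LP proximity, a slot window, digit tags, and finally the binary algorithm.

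\emph{Step 1: translation and the LP.} Reject the instance if some $\ell_i>u_i$. Substitute $\vz=\vx-\vell$, so the bounds become $\vzero\le\vz\le\bar\vu$ and the target becomes $\vb_0=\vb-A\vell$, and drop variables with $\bar u_i=0$. Then solve the LP relaxation exactly and obtain an optimal vertex $\vv$ by successively minimizing coordinates on the optimal face. All of this, together with the arithmetic on the binary-encoded bounds, costs $\operatorname{poly}(L_{\rm in})$. If the LP is infeasible, report infeasibility.

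\emph{Step 2: grouping and the window.} Group variables by their distinct columns. Within each group, sort slots by cost, stored as at most $n$ runs. Apply Lemma~\ref{lem:window}, whose correctness rests on Theorem~\ref{thm:proximity}. This yields an equivalent binary instance \eqref{eq:residual} with $M\le 2R_{\rm px}m_{\rm col}\le 2d(2dN+1)^d(2N+1)^d=f(d)N^{2d}$ labelled items, costs bounded by $c_{\max}$, and a known additive offset. Fixing the prefix slots is done by run arithmetic in $\operatorname{poly}(L_{\rm in})$ time, since the exposed slots number only $f(d)N^{2d}$.

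\emph{Step 3: make columns distinct and solve.} Repeated columns remain, so apply Lemma~\ref{lem:tags} with base $b_{\rm tag}=N$ and $q_{\rm tag}=\lceil\log_N M\rceil\le 2d+C_d'$. The result has $d'=d+1+q_{\rm tag}=O(d)$ equations, entries bounded by $N$, $2M$ distinct columns, and unchanged costs. Because the feasible set projects onto the original one with the same cost, an optimal labelled solution returned by Proposition~\ref{prop:binary-banded} gives an optimal $\vy$ for \eqref{eq:residual} after deleting the second family. Undoing the slot fixing and the translation then yields $\vx$. The binary call costs
\[
 f(d')\bigl[N(1+\log(c_{\max}+1))\bigr]^{C_0'd'}=F(d)\bigl[N(1+\log(c_{\max}+1))\bigr]^{C'd}
\]
with $C'=O(C_0')$ absolute, since $d'\le Cd$ for $d\ge1$. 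Reading the target costs at most $\operatorname{poly}(L_{\rm in})$.

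\emph{Main obstacle.} The step needing care is the bookkeeping that keeps $d'$ linear in $d$ and the entries at $O(N)$. The tag dimension must be $O(d)$ and must not depend on $n$ or on the bounds; this holds because $M$ depends only on $d$ and $N$ after the proximity window. A secondary check is that an optimal vertex can be found in polynomial time and that Theorem~\ref{thm:proximity} applies to it. If $N<N_0$-type small cases arise in the binary call, they are covered by the exhaustive enumeration remark.
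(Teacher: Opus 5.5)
Your reduction chain (translation, an optimal LP vertex, Lemma~\ref{lem:window}, Lemma~\ref{lem:tags}, then Proposition~\ref{prop:binary-banded}) is the paper's. However, the step you single out as the main obstacle fails with your choice of tags.

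With $b_{\rm tag}=N$ you need $q_{\rm tag}=\lceil\log_N M\rceil$ digits, and $M$ really can be of order $R_{\rm px}m_{\rm col}$. For example, take all $(2N+1)^d$ columns with capacities exceeding $R_{\rm px}$; every group whose variable sits at a bound of the LP vertex then exposes $R_{\rm px}$ slots. Since $R_{\rm px}m_{\rm col}\approx d(2dN+1)^d(2N+1)^d$, the excess $\log_N M-2d$ is about $d\log_N(2d)+O(d)$. So your $C_d'$ is of order $d\log_N d$, which for $N=2$ is $\Theta(d\log d)$. Hence $d'=d+1+q_{\rm tag}$ is \emph{not} bounded by $Cd$ with $C$ absolute.

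In the black-box bound of Proposition~\ref{prop:binary-banded} the power of $N$ survives: the excess digits contribute $N^{C_0'(q_{\rm tag}-2d)}\le N^{C_0'}(M/N^{2d})^{C_0'}$, where $M/N^{2d}$ is bounded by a function of $d$ alone. But the factor $(1+\log(c_{\max}+1))^{C_0'd'}$ now has an exponent of order $d\log d$. For any absolute $C'$, take $N=2$, fix $d$ so large that $C_0'\log_2 d>C'$, and let $c_{\max}\to\infty$. Then this factor is not bounded by $F(d)[N(1+\log(c_{\max}+1))]^{C'd}$, so your final display does not follow.

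The repair is to let the base, rather than the number of digits, absorb the dependence on $d$. The paper takes $b_{\rm tag}=6dN$ and $q_{\rm tag}=2d$. This is legitimate because $m_{\rm col}\le(3N)^d$ and $R_{\rm px}\le d(3dN)^d$ give $M\le 2d(9d)^dN^{2d}\le(6dN)^{2d}$. The tagged instance then has exactly $3d+1$ equations and entries at most $6dN$. The binary call costs $f(3d+1)[6dN(1+\log(c_{\max}+1))]^{C_0'(3d+1)}\le F(d)[N(1+\log(c_{\max}+1))]^{4C_0'd}$, with the factor $(6d)^{O(d)}$ absorbed into $F(d)$.

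So your stated aim of keeping the entries at $N$ is the wrong trade-off: entries of size $O(dN)$ cost only a function of $d$, whereas a superlinear number of equations breaks the claimed exponent. With this change the rest of your argument goes through, including the reconstruction and the time accounting. For the latter, note that enumerating the exposed slots costs a polynomial in $L_{\rm in}$ and $M$, not in $L_{\rm in}$ alone; this is split between the two terms of the bound.
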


\begin{proof}
First reject if some $\ell_i>u_i$. For $d=0$ choose $\ell_i$ if $c_i\ge0$
and $u_i$ otherwise. For $d\ge1$ build
\eqref{eq:residual}; if $M=0$ test $\vb_{\rm res}=\vzero$. Otherwise apply
Lemma~\ref{lem:tags} with $b_{\rm tag}=6dN$ and $q_{\rm tag}=2d$: since $m_{\rm col}\le(3N)^d$ and
$R_{\rm px}\le d(3dN)^d$, we get $M\le2d(9d)^dN^{2d}\le(6dN)^{2d}$ (equivalent to
$2d\le(4d)^d$). The resulting instance has $3d+1$ equations, distinct
columns, entries at most $6dN$, costs at most $c_{\max}$ in absolute value, and a target
independent of the capacities. Proposition~\ref{prop:binary-banded} costs
$f(3d+1)[6dN(1+\log(c_{\max}+1))]^{C_0'(3d+1)}\le
F(d)[N(1+\log(c_{\max}+1))]^{4C_0'd}$. Reconstruction adds the selected
slots to the fixed counts and then $\vell$. LP, grouping, sorting, run
arithmetic, and enumeration of the $M\le F(d)N^{2d}$ exposed units take a
polynomial of absolute degree in $L_{\rm in}$ and $M$.
\end{proof}

%======================================================================
\section{An absolute exponent for the cost bit length: proof of the main theorems}\label{sec:profit-precision}

The gap bands in Section~\ref{sec:buffers} account for the dimension-dependent
power of $1+\log(c_{\max}+1)$ in Propositions~\ref{prop:binary-banded}
and~\ref{prop:bounded-banded}. We now replace that
packing by a single packing in weight space, ordered by decreasing gap.
The resulting locality is one-sided, but it suffices for every reduction
and for the arithmetic ownership argument. All geometric parameters and
the recursion potential can then be chosen independently of the costs.

In this section we prove Theorems~\ref{thm:binary} and~\ref{thm:bounded},
that is, the running times \eqref{eq:binary-time} and \eqref{eq:bounded-time}
with an absolute exponent $p$ of the cost bit length; the implementation
described below permits $p=6$.

This section modifies the algorithm already analysed above; it uses the same
structural input, Theorem~\ref{thm:gap}. The banded algorithm and its
analysis remain valid. In the modified algorithm only the packing,
buffer scales, and the indicated locality arguments are replaced.
Write
\[
 B=1+\lceil\log_2(c_{\max}+1)\rceil.
\]
This is the bit length of $c_{\max}$, up to one. The symmetric item hull
continues to be denoted by $P_I$.

\subsection{One horizontal packing for all gaps}

Fix a node, after the boundary reduction of Lemma~\ref{lem:boundary}.
Write $f=f_I$, $P=P_I$, $H=H_j$, $Q_q=Q_I(\vq)$, and
$S_q=Q_q\cap HP$. All these bodies are full-dimensional in the same weight
space. For $0<\alpha<1/100$ put
\begin{equation}\label{eq:horizontal-constants}
 C_\alpha=\left(\frac{1+\alpha}{1-\alpha}\right)^2,
 \qquad \gamma=\alpha(1+C_\alpha).
\end{equation}
Order the queries by nonincreasing gap, breaking ties deterministically.
Retain a query as an anchor precisely when the relative interior of
$\vt_q+\alpha S_q$ is disjoint from those of all previously retained
anchors. Otherwise assign it to an earlier anchor whose relative interior
it meets. Every retained anchor is assigned to itself.

\begin{lemma}[Horizontal packing]\label{lem:horizontal-packing}
For a query $\vq$ assigned to $\vv$,
\begin{equation}\label{eq:horizontal-locality}
 g_q\le g_v,\qquad Q_q\subseteq C_\alpha Q_v,\qquad
 S_q\subseteq C_\alpha S_v,\qquad
 \vt_q-\vt_v\in\gamma S_v.
\end{equation}
If $\alpha/2\le\eps_j$, the anchors satisfy
\begin{equation}\label{eq:horizontal-charge}
 \sum_v\left|\left(\vt_v+\tfrac\alpha2 S_v\right)\cap\Lambda_I\right|
 \le w.
\end{equation}
The sum is over all anchors, without a factor for gap bands.
\end{lemma}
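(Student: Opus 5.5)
The plan is to compare the two Macbeath bodies through a common weight point $\vz$, and to use the decreasing-gap order to control the mismatch in heights. Let $\vq$ be assigned to $\vv$. If $\vq=\vv$ everything is trivial. Otherwise $\vv$ was retained earlier in the order, so $g_q\le g_v$, which is the first claim. Also, by construction, there is a point $\vz$ in the relative interiors of both $\vt_q+\alpha S_q$ and $\vt_v+\alpha S_v$. Thus $\vz-\vt_q\in\alpha Q_q$ and $\vz-\vt_v\in\alpha Q_v$.

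\emph{Lifting to the epigraph.} Since $Q=\proj_{\rm weight}M$, there are heights $h,h'$ with $\vp=(\vz,h)\in\vq+\alpha M_I(\vq)$ and $\vp'=(\vz,h')\in\vv+\alpha M_I(\vv)$. I will write $\vp$ and $\vp'$ for the corresponding queries at $\vz$. Lemma~\ref{lem:macbeath} with $\rho=\alpha$, projected to weights, gives
\[
 (1-\alpha)Q_q\subseteq Q_I(\vp)\qquad\text{and}\qquad Q_I(\vp')\subseteq(1+\alpha)Q_v .
\]
By \eqref{eq:fibre}, the gap of $\vp$ is at most $(1+\alpha)g_q$, and the gap of $\vp'$ is at least $(1-\alpha)g_v$.

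\emph{Comparing $Q_I(\vp)$ and $Q_I(\vp')$.} These two bodies share the target $\vz$ but have different gaps. This is the step I expect to be the main obstacle, since $h$ and $h'$ need not be ordered. I handle it by a convex scaling argument. Put $c(\vu)=f(\vz+\vu)+f(\vz-\vu)-2f(\vz)$. This function is convex with $c(\vzero)=0$, and the body at $\vz$ with gap $g$ is $\{\vu:c(\vu)\le2g\}$. For $\lambda\ge1$, convexity gives $c(\vu/\lambda)\le c(\vu)/\lambda$, so the body at gap $\lambda g$ lies in $\lambda$ times the body at gap $g$. Now
\[
 \frac{(1+\alpha)g_q}{(1-\alpha)g_v}\le\frac{1+\alpha}{1-\alpha}
\]
because $g_q\le g_v$. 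Applying the scaling with this ratio, and using that the body is monotone in the gap, gives
\[
 Q_I(\vp)\subseteq\frac{1+\alpha}{1-\alpha}\,Q_I(\vp').
\]
Chaining the three inclusions yields
\[
 Q_q\subseteq\frac1{1-\alpha}\cdot\frac{1+\alpha}{1-\alpha}\cdot(1+\alpha)\,Q_v=C_\alpha Q_v .
\]
Intersecting with $HP$ and using $C_\alpha\ge1$ gives $S_q\subseteq C_\alpha Q_v\cap HP\subseteq C_\alpha S_v$. For the displacement,
\[
 \vt_q-\vt_v=(\vz-\vt_v)-(\vz-\vt_q)\in\alpha S_v+\alpha S_q\subseteq\alpha(1+C_\alpha)S_v=\gamma S_v,
\]
where $S_q$ is symmetric and both points lie in $HP$ because they belong to the clipped bodies.

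\emph{Charging.} For a full-dimensional symmetric $S_v$, the set $\vt_v+\frac\alpha2S_v$ lies in the relative interior of $\vt_v+\alpha S_v$. The retention rule makes these relative interiors pairwise disjoint over anchors, so the inner bodies are pairwise disjoint. Since $\alpha/2\le\eps_j$ and $S_v\subseteq Q_v\cap H_jP$, each inner body lies in $\vt_v+(\eps_jQ_v\cap H_jP)$. Because $\vv\in\cW$, its lattice points belong to $\cD_j(\cW)$. Summing over the disjoint bodies gives \eqref{eq:horizontal-charge}, with no factor for gap bands.
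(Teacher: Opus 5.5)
Your proof is correct and follows essentially the same route as the paper. You lift the common point $\vz$ separately into $\vq+\alpha M_I(\vq)$ and $\vv+\alpha M_I(\vv)$, compare the two bodies at the shared target using monotonicity and the convex dilation bound for sublevel sets of $\vu\mapsto f(\vz+\vu)+f(\vz-\vu)-2f(\vz)$ together with $g_q\le g_v$, chain this with Lemma~\ref{lem:macbeath}, and then charge the pairwise disjoint half-sized bodies to $\cD_j(\cW)$. The only detail you leave implicit, $\alpha/2\le1$ when placing $\frac\alpha2 S_v$ inside $H_jP$, is immediate.
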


\begin{proof}
We first record two comparisons. If $\mathbf r\in\vq+\alpha M_I(\vq)$, then
Lemma~\ref{lem:macbeath} and \eqref{eq:fibre} give
\begin{equation}\label{eq:horizontal-local-comparison}
 (1-\alpha)Q_q\subseteq Q_r\subseteq(1+\alpha)Q_q,
 \qquad (1-\alpha)g_q\le g_r\le(1+\alpha)g_q.
\end{equation}
At a fixed target $\vz$, the body $Q_I(\vz,f(\vz)+g)$ is the sublevel set
\[
 \{\vu:f(\vz+\vu)+f(\vz-\vu)-2f(\vz)\le2g\}.
\]
The function on the left is convex, nonnegative, and zero at the origin.
In particular this body is monotone in $g$, and, for $\lambda\ge1$,
\begin{equation}\label{eq:horizontal-gap-dilation}
 Q_I(\vz,f(\vz)+\lambda g)
 \subseteq\lambda Q_I(\vz,f(\vz)+g).
\end{equation}

Let $\vz$ be a common point of $\vt_q+\alpha S_q$ and
$\vt_v+\alpha S_v$. Lift it separately to
$\mathbf r_q\in\vq+\alpha M_I(\vq)$ and
$\mathbf r_v\in\vv+\alpha M_I(\vv)$; their heights need not agree. Since
$g_q\le g_v$, their gaps satisfy
$g_{r_q}\le(1+\alpha)g_{r_v}/(1-\alpha)$. Applying
\eqref{eq:horizontal-local-comparison} and
\eqref{eq:horizontal-gap-dilation} at their common target gives
\[
 (1-\alpha)Q_q\subseteq Q_{r_q}
 \subseteq\frac{1+\alpha}{1-\alpha}Q_{r_v}
 \subseteq\frac{(1+\alpha)^2}{1-\alpha}Q_v.
\]
This proves the $Q$ inclusion. Intersecting with $HP$, which is convex
and contains the origin, gives the $S$ inclusion. Writing
$\vz=\vt_q+\alpha\vu_q=\vt_v+\alpha\vu_v$ with
$\vu_q\in S_q$ and $\vu_v\in S_v$ gives the last inclusion in
\eqref{eq:horizontal-locality}.

The half-sized bodies $\vt_v+(\alpha/2)S_v$ lie in the relative interiors
of their respective $\vt_v+\alpha S_v$ and hence are pairwise disjoint,
including at their boundaries. Each lies in the buffer of its actual
anchor, by $\alpha/2\le\eps_j$ and $\alpha/2\le1$. Their lattice points
therefore form disjoint subsets of $\cD_j(\cW)$, proving
\eqref{eq:horizontal-charge}.
\end{proof}

There is no reverse inclusion in \eqref{eq:horizontal-locality}:
an assigned query may have an arbitrarily smaller positive gap than its
anchor. We next supply the replacements for the arguments that previously
used two-sided comparison or a lower bound on its allowance.

\subsection{One-sided locality suffices for the reductions}

\begin{lemma}[Allowance and deviations]\label{lem:horizontal-deviation}
Let $\vpi\in\partial f_I(\vt_v)$, $a=g_v$, and let $\vx_0$ be an optimal
fractional vertex at $\vt_v$. For every assigned query $\vq=(\vt_q,s_q)$
its normalized allowance satisfies
\begin{equation}\label{eq:horizontal-allowance}
 0<g_q\le a_q:=s_q-f_I(\vt_v)-\vpi(\vt_q-\vt_v)
 \le(1+2\gamma)a.
\end{equation}
If $\gamma<1/3$ and $\vx$ is a binary solution of $\vq$, then
\begin{equation}\label{eq:horizontal-deviation}
 \sum_i|x_i-x_{0i}|[-\va_i,\va_i]\subseteq(2+\gamma)Q_v\subseteq3Q_v,
 \qquad \sum_i|x_i-x_{0i}|\,\|\va_i\|_{Q_v}\le3r.
\end{equation}
\end{lemma}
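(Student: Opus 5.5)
We prove the allowance bounds \eqref{eq:horizontal-allowance} first and then repeat the argument of Lemma~\ref{lem:deviation} with the one-sided locality of Lemma~\ref{lem:horizontal-packing}.

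\emph{Allowance.} The key identity is that $a_q$ is the normalized allowance of $\vq$ at price $\vpi$. By Remark~\ref{rem:anchor-allowance}, $f_I(\vt_v)=\vpi\vt_v+\sum_i\min(0,c_i-\vpi\va_i)$. Subtracting this from $s_q-\vpi\vt_q-\sum_i\min(0,c_i-\vpi\va_i)$ gives exactly $s_q-f_I(\vt_v)-\vpi(\vt_q-\vt_v)$.

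\emph{Lower bound.} The supporting inequality at $\vt_v$ gives $f_I(\vt_q)\ge f_I(\vt_v)+\vpi(\vt_q-\vt_v)$. Hence $a_q\ge s_q-f_I(\vt_q)=g_q$, and $g_q>0$ because only positive-gap queries are kept.

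\emph{Upper bound.} We need to control $f_I(\vt_q)-f_I(\vt_v)-\vpi(\vt_q-\vt_v)$. By \eqref{eq:horizontal-locality}, $\vt_q-\vt_v=\gamma\vu$ with $\vu\in S_v\subseteq Q_v$.
\begin{itemize}
\item The points $\vt_v\pm\vu$ have fractional witnesses with $f_I(\vt_v+\vu)+f_I(\vt_v-\vu)\le 2s_v$.
\item Writing $V(\vw)=f_I(\vt_v+\vw)-f_I(\vt_v)-\vpi\vw\ge0$, we get $V(\vu)+V(-\vu)\le 2g_v$, so $V(\vu)\le 2a$.
\item Convexity and $V(\vzero)=0$ give $V(\gamma\vu)\le\gamma V(\vu)\le 2\gamma a$.
\end{itemize}
Therefore $a_q=g_q+V(\gamma\vu)\le g_v+2\gamma a=(1+2\gamma)a$, using $g_q\le g_v$ from \eqref{eq:horizontal-locality}. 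This is the one place where the decreasing-gap ordering enters.

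\emph{Deviations.} Fix a binary solution $\vx$ of $\vq$ and any $\vdelta$ with $|\delta_i|\le|x_i-x_{0i}|$. As in Lemma~\ref{lem:deviation}, $\vx+\vx_0\pm\vdelta\in[0,2]^I$. In normalized costs at price $\vpi$ (all nonnegative), $\vx_0$ has cost $0$ and $\vx$ has cost below $a_q$.
\begin{itemize}
\item The weight of $\vx+\vx_0$ is $2\vt_v+\gamma\vu$. To recentre it, add $\gamma\vz^-$, where $\vz^-$ is the witness at $\vt_v-\vu$ from the definition of $Q_v$. Its normalized cost is $V(-\vu)\le 2a$.
\item Set $\vy^\pm=(\vx+\vx_0\pm\vdelta+\gamma\vz^-)/(2+\gamma)$. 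These are box-feasible, with weights $\vt_v\pm A_I\vdelta/(2+\gamma)$.
\item Their summed normalized cost is at most $2\bigl(a_q+2\gamma a\bigr)/(2+\gamma)$. We need this to be at most $2a$, i.e.\ $a_q+2\gamma a\le(2+\gamma)a$.
\end{itemize}
By the upper bound just proved, $a_q+2\gamma a\le(1+4\gamma)a$, which is at most $(2+\gamma)a$ when $3\gamma\le1$. This is exactly where $\gamma<1/3$ is used. Since normalized costs are $f_I$ minus a linear function, the pair condition defining $Q_v$ is equivalent to its normalized version. Hence $A_I\vdelta\in(2+\gamma)Q_v$.

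\emph{Gauge bound.} This follows verbatim from the first inclusion, via the maximal-determinant dual functionals as in Lemma~\ref{lem:deviation}.

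\emph{Main obstacle.} The delicate step is the cost accounting in the recentred mixture. The lower bound $a_q\ge g_q$ cannot be used here; we need the upper bound on $a_q$ relative to $g_v$. The plan is to charge both $\vx$ and the correction $\gamma\vz^-$ against $g_v$. If the constant comes out slightly worse, one can instead scale $\vz^-$ from a point of $\gamma S_v$ directly, using \eqref{eq:horizontal-gap-dilation}.
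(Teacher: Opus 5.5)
Your proposal is correct and follows the paper's proof essentially step for step. The allowance bound comes from $a_q=g_q+V(\vt_q-\vt_v)$, convexity giving $V(\vt_q-\vt_v)\le 2\gamma a$, and $g_q\le g_v$. The deviation bound uses the same recentred mixture $(\vx+\vx_0\pm\vdelta+\gamma\vz^-)/(2+\gamma)$, with average normalized cost below $(1+4\gamma)a/(2+\gamma)<a$. One small imprecision: a paired witness at $\vt_v-\vu$ has normalized cost at most $2a$, not exactly $V(-\vu)$ unless you take it optimal there, but $\le 2a$ is all the argument needs.
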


\begin{proof}
Put $V(\vt)=f_I(\vt)-f_I(\vt_v)-\vpi(\vt-\vt_v)\ge0$.
For $\vu\in Q_v$, paired witnesses imply
$V(\vt_v+\vu)+V(\vt_v-\vu)\le2a$. Since
$\vt_q-\vt_v\in\gamma Q_v$, convexity gives
$V(\vt_q)\le2\gamma a$. Now $a_q=g_q+V(\vt_q)$ and $g_q\le a$.

Write $\mathbf h=\vt_q-\vt_v$ and choose a fractional $\vz^-$ with target $\vt_v-\mathbf h/\gamma$ and normalized cost at
most $2a$; such a point exists because $\mathbf h/\gamma\in Q_v$, the paired
witnesses have total normalized cost at most $2a$, and each has
nonnegative normalized cost.
For $|\delta_i|\le|x_i-x_{0i}|$, the vectors
\[
 \vy^\pm=\frac{\vx+\vx_0\pm\vdelta+\gamma\vz^-}{2+\gamma}
\]
are box-feasible and have targets
$\vt_v\pm A_I\vdelta/(2+\gamma)$. The normalized cost of $\vx_0$ is zero,
that of $\vx$ is less than $a_q$, and the normalized cost is an affine
function of the assignment; hence the average normalized cost of
$\vy^\pm$ is less than
\[
 \frac{a_q+2\gamma a}{2+\gamma}
 \le\frac{1+4\gamma}{2+\gamma}a<a.
\]
This proves the first inclusion in \eqref{eq:horizontal-deviation}.
The maximal-determinant dual-functionals argument in
Lemma~\ref{lem:deviation} gives the second inequality without change.
\end{proof}

The following checks explain how to use the new locality in the old
reductions.

\paragraph{Thin anchors and removed assignments.}
Use Lemma~\ref{lem:horizontal-deviation} in place of
Lemma~\ref{lem:deviation}. Every relevant heavy-deviation prefix remains
in $3rS_v$. The translated child buffers still lie in
$\vt_v+(3r+2)S_v$, because $H_jP_{\rm child}\subseteq S_v$.
Thus the volume factor $rK/H_j$ of Lemma~\ref{lem:thin} is unchanged.
For deep anchors keep $\theta_0=a/(128k_0)$ and $\varrho=a/32$.
Lemma~\ref{lem:drift} now bounds expensive deviations by
$128(1+2\gamma)k_0$, using \eqref{eq:horizontal-allowance}.
The same drift constant
\[
 E=(260+6J)k_0
\]
still works: for the small $\gamma$ chosen below,
$128(1+2\gamma)k_0+\kappa J+k_0\le E/2$.

\paragraph{The every-price collar.}
In Lemma~\ref{lem:collar}, replace
$\zeta=\gamma+\eps_*(1+\gamma)$ by
\begin{equation}\label{eq:horizontal-zeta}
 \zeta=\gamma+C_\alpha\eps_*.
\end{equation}
Indeed, embedding into the attaining source gives
$\vz_b\in\eps_*Q_q\subseteq\eps_*C_\alpha Q_v$, while the query offset
lies in $\gamma Q_v$. The affordable-depth proof, the supporting-price
bound, and the lattice identity \eqref{eq:collar-price} follow with this
replacement. In particular, that identity still holds for \emph{every}
supporting price at every buffered point.

\paragraph{High residual or proper restoration.}
In Lemma~\ref{lem:high-low}, the argument starting from a restored
residual with allowance at least $a/4$ uses only the upper bound in
\eqref{eq:horizontal-allowance}; replace
$128(1+\gamma)k_0$ by $128(1+2\gamma)k_0$ in its displacement estimate.
To prove that $\Gamma=\Lambda_I$ is impossible when no high trial entry
exists, use the anchor $\vv$ itself. It is retained in its own group;
when $X=\emptyset$ its normalized allowance is exactly $a>a/4$.
The same depth argument therefore produces a high surviving entry.
No lower bound proportional to $a$ is needed for the other queries.

\paragraph{A uniform lift for the entire table.}
Only the chosen high centre needs allowance at least $a/4$.
If $d(\vY)\ge0$ is the normalized cost of a removed assignment, a buffered
completion of its residual has total normalized cost at most
\begin{equation}\label{eq:horizontal-lift-afford}
 d(\vY)+(1+\eps)(a_q-d(\vY))
 \le(1+\eps_*)(1+2\gamma)a<2a<U=4a.
\end{equation}
This is the required version of Lemma~\ref{lem:lift-afford}.
The projected child-buffer displacement lies in $S_v$, since
$\eps_*C_\alpha<1$, embedding holds, and clipping uses a subhull of $P$.
Two assigned query targets differ by an element of $2\gamma S_v$.
Consequently the depth and patch-radius parameters remain
\[
 \mu=(1/128-\gamma)K-2k_0-2E,
 \qquad R_{\rm p}=(1+2\gamma)K+E+2k_0.
\]
The reflection margin
$1/128-\gamma-(1+2\gamma)/254$ is positive for our choice of $\gamma$.
Thus the point-or-thin construction, endpoint forcing, and the strip
\eqref{eq:strip} of width $511W$ retain \emph{every buffered point} of
every child table, just as in Proposition~\ref{prop:lift}.

\paragraph{Aggregate geometric mass.}
Apply Lemma~\ref{lem:folding} to the same thin and lift windows, now
using \eqref{eq:horizontal-charge}. There are uniform factors, denoted
again by $\Athin$ and $\Alift$, such that
\begin{equation}\label{eq:horizontal-mass}
 \sum_{\rm thin}w_i\le\Athin w,\qquad
 \sum_{\rm lift}w_i\le\Alift w,\qquad
 1\le\Athin,\Alift\le C_d\alpha^{-C_1D},
\end{equation}
where $C_1$ is absolute. Dimension factors include $2^r$ thin branches
and the $511W+1$ lift fibres. The same estimates cover intermediate
table windows. There is no dependence on the number of gap bands.

\subsection{Arithmetic ownership and the new buffer schedule}

The remaining use of reverse body comparison was the parent-containment
step of Proposition~\ref{prop:arithmetic}. It is replaced by containment
in the buffer of the anchor, rather than in that of its assigned source.

\begin{lemma}[Ownership into the anchor buffer]\label{lem:horizontal-ownership}
Suppose that the scales satisfy
\begin{equation}\label{eq:horizontal-schedule-condition}
 \eps_j\ge\gamma+C_\alpha\eps_{j+1}+E/K,
 \qquad H_j-H_{j+1}\ge\gamma H_j+E.
\end{equation}
Then Proposition~\ref{prop:arithmetic}, including both contraction
inequalities in \eqref{eq:arith-contraction}, holds for the horizontal
packing.
\end{lemma}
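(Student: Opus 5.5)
We follow the proof of Proposition~\ref{prop:arithmetic} step by step and change only the parent-containment step. Fix a core $R$ of a deep anchor $\vv$ and a buffered point $\vy$ of the merged child at depth $j+1$. Let $(\vt,s)$ be its attaining source, assigned to $\vv$, with removed assignment $\vY$, so that $\vy=\vt-A_X\vY+\vz_b$. The collar lemma holds in the horizontal packing, with $\zeta$ replaced by \eqref{eq:horizontal-zeta}. It gives a full-dimensional neighbourhood of $\vy$ in the domain of $f_R$ and the price-stability bound \eqref{eq:collar-price} for every $\vpi_y\in\partial f_R(\vy)$. As before, at most $\kappa J$ endpoints change, and the modified drift bound (with $128(1+2\gamma)k_0$) keeps the removed displacement within $\frac E2P$. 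So
\[
 U_R(\vy)-\vt_v=(\vt-\vt_v)+\vz_b-A_X\bigl(\vY-\ve_X(\vpi_y)\bigr)
 \in\gamma S_v+\vz_b+EP .
\]

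The new step is to place $U_R(\vy)$ in the buffer of the \emph{anchor} $\vv$ rather than of its source; this is legitimate because $\vv\in\cW$. For the $Q$-part, Lemma~\ref{lem:horizontal-packing} gives $\vt-\vt_v\in\gamma Q_v$, and $\vz_b\in\eps_{j+1}Q_q\subseteq C_\alpha\eps_{j+1}Q_v$. By deepness, $EP\subseteq(E/K)Q_v$. Hence $U_R(\vy)-\vt_v\in(\gamma+C_\alpha\eps_{j+1}+E/K)Q_v\subseteq\eps_jQ_v$ by the first condition in \eqref{eq:horizontal-schedule-condition}. For the $HP$-part, $\vt-\vt_v\in\gamma S_v\subseteq\gamma H_jP$. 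The clipped buffer displacement lies in $H_{j+1}P$, since it is clipped by the child hull contained in $P$ and the embedding preserves this. The drift lies in $EP$. Thus the total lies in $(\gamma H_j+H_{j+1}+E)P\subseteq H_jP$ by the second condition. Therefore $U_R(\vy)\in\vt_v+(\eps_jQ_v\cap H_jP)$, which is contained in $\cD_j(\cW)$, and together with injectivity from Lemma~\ref{lem:ownership} this gives $w_R\le w$.

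The chain argument then goes through verbatim. Fix $\vu\in\cD_j(\cW)$ and $\vpi^*\in\partial f_I(\vu)$. Lemma~\ref{lem:ownership} makes $\vpi^*$ a supporting price of every core $R$ at its preimage of $\vu$. The collar identity, which holds for \emph{every} supporting price at every buffered point, identifies $\lat R=\lat\cR_\kappa(L_{\vpi^*}(\tau_R))$. These lattices form a chain; equal ones were merged, and the distinct proper ones have indices at least $2,4,8,\dots$. Lemma~\ref{lem:volume} then gives $\sum_{R}\Psi_R^2\le\frac13\Psi(I)^2$ per $\vu$. Summing over $\vu$ and using $w_R\le w$ gives both inequalities in \eqref{eq:arith-contraction}.

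The main obstacle is verifying the hypotheses of the collar lemma with only one-sided locality. We must check that the point $\vy-A_R\ve_R(\vpi)$ still lies in $\bar\vt_v+\zeta Q_v+\frac E2P$ when $\zeta=\gamma+C_\alpha\eps_*$, and that \ref{S:collar} still holds with this $\zeta$. The latter requires $\alpha$ (hence $\gamma$ and $C_\alpha\eps_*$) small enough that $1/64-\zeta$ stays bounded below by about $1/70$. We also need $j+1\le J$ from Lemma~\ref{lem:depth-bounds}, so that $\eps_{j+1}\le\eps_*$. The buffer of the source $(\vt,s)$ is smaller than one would like only in the harmless direction, because $Q_q\subseteq C_\alpha Q_v$. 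For this reason the proof never needs a lower bound on $Q_q$ or on $g_q$.
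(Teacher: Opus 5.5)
Your proposal is correct and follows essentially the same route as the paper. You keep the tie-safe map $U_R$, bound the drift $A_X(\vY-\ve_X(\vpi_y))$ in $EP$ via the modified collar and drift estimates, and then place $U_R(\vy)$ in the buffer of the anchor $\vv$ itself rather than of its source. That placement uses $\vt_q-\vt_v\in\gamma S_v$, $\vz_b\in\eps_{j+1}Q_q\cap H_{j+1}P$ with $Q_q\subseteq C_\alpha Q_v$, $EP\subseteq (E/K)Q_v$, and the two schedule conditions, after which injectivity and the single-price chain argument yield both contraction inequalities exactly as in the paper.
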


\begin{proof}
Keep the tie-safe map of Lemma~\ref{lem:ownership}:
$U_R(\vy)=\vy+A_X\ve_X(\vpi_y)$, where
$\vpi_y\in\partial f_R(\vy)$ and $X=I\setminus R$.
For an attaining source $\vq$ assigned to $\vv$, write
\[
 \vy=\vt_q-A_X\vY+\vz_b,
 \qquad
 \mathbf d=A_X(\vY-\ve_X(\vpi_y)).
\]
The modified collar and drift estimates give $\mathbf d\in EP$ exactly as
in Proposition~\ref{prop:arithmetic}. Embedding and clipping give
$\vz_b\in\eps_{j+1}Q_q\cap H_{j+1}P$. Therefore
\begin{align*}
 U_R(\vy)-\vt_v
  &=(\vt_q-\vt_v)+\vz_b-\mathbf d\\
  &\in(\gamma+C_\alpha\eps_{j+1}+E/K)Q_v,\\
 U_R(\vy)-\vt_v
  &\in(\gamma H_j+H_{j+1}+E)P.
\end{align*}
Here deepness supplies $EP\subseteq(E/K)Q_v$.
By \eqref{eq:horizontal-schedule-condition} the image belongs to the
actual anchor's parent buffer and hence to $\cD_j(\cW)$.

Injectivity of $U_R$ is unchanged, so $w_R\le w$.
At a shared parent image choose a single supporting price $\vpi^*$.
Lemma~\ref{lem:ownership} makes it support every preimage core, and the
modified every-price collar identifies each core lattice as
$\lat\cR_\kappa(L_{\vpi^*}(\tau_R))$. These lattices form a chain.
Equal lattices give equal actual restored cores, which are merged;
distinct proper lattices have indices at least $2,4,8,\ldots$.
The sum of inverse squared indices is at most $1/3$, giving both
inequalities in \eqref{eq:arith-contraction} by the original counting
argument.
\end{proof}

\paragraph{Explicit cost-independent scales.}
Keep $\delta,\T_{\rm lift},D,\eta,\nu,\Psi_0,J,H^*_{\rm amb},k_0,\kappa$
from \eqref{eq:params-a}, including
$J=\lceil\log_2\Psi_0\rceil+D$, and keep $E=(260+6J)k_0$.
Replace the packing parameters and $K$ of \eqref{eq:params-b} by
\begin{equation}\label{eq:horizontal-scales}
 \eps_*=2^{-10},\qquad
 \alpha=\frac{\eps_*}{1000(J+1)},\qquad
 K=\frac{1000(J+1)E}{\eps_*},
\end{equation}
and use \eqref{eq:horizontal-constants} for $C_\alpha,\gamma$.
The new $K$ is larger than the old one, and
$C_\alpha-1\le5\alpha$, $\gamma\le3\alpha$.
Set
\begin{equation}\label{eq:horizontal-epsilon-schedule}
 \Delta_\eps=\gamma+(C_\alpha-1)\eps_*+E/K,
 \qquad \eps_j=\frac{\eps_*}{4}+(J-j)\Delta_\eps.
\end{equation}
Since $E/K=\alpha$, we have $\Delta_\eps\le5\alpha$ and
\[
 \eps_*/4\le\eps_j<\eps_*/2,
 \qquad \alpha/2\le\eps_j,
 \qquad E/(2K)\le\eps_j.
\]
The first inequality of \eqref{eq:horizontal-schedule-condition}
follows because $\eps_{j+1}\le\eps_*$.

Choose the dimension factor in
\begin{equation}\label{eq:horizontal-H-schedule}
 \begin{gathered}
 H_{\min}=C_d(\Athin+1)^2K
 \ge\max\{JE,\sqrt{12}(\Athin+1)^{3/2}DK,K\},\\
 \Delta_H=2E+2\gamma H_{\min},\qquad
 H_j=H_{\min}+(J-j)\Delta_H
 \end{gathered}
\end{equation}
large enough for the displayed inequalities. This is possible since
$JE/K\le\eps_*/1000$ and $D$ depends only on $d$.
Since $\gamma J<1/2$,
$\Delta_H-\gamma(H_{\min}+J\Delta_H)-E\ge0$,
which proves the second inequality of
\eqref{eq:horizontal-schedule-condition}. Also
$H_0\le2H_{\min}+2JE\le4H_{\min}$.
The scales are nonincreasing along recursion.

Item \ref{S:sched} of Lemma~\ref{lem:scales} is \emph{replaced} by
\[
 \eps_*/4\le\eps_j<\eps_*/2,\qquad E/K=\alpha,\qquad E\le H_{\min},
\]
proved above. These are exactly the properties used downstream:
$\eps_j<1$ in Lemma~\ref{lem:boundary}, $\alpha/2\le\eps_j$ in the packing,
$E/(2K)\le\eps_j$ for deep table windows, $\eps_{j+1}\le\eps_*$ in the
collar, and $\frac E2P\subseteq S$ (from $E\le K$ and $E\le H_j$) in the lift
mass bound. The remaining depth inequalities \ref{S:large},
\ref{S:core}--\ref{S:lift} hold with these choices and the replacements
described above. For clarity,
the only increased allowance term is
$128(1+2\gamma)k_0$; the collar uses
$\zeta=\gamma+C_\alpha\eps_*<1/512$; and the lift reflection margin is
larger than $0.0037$. The new $K/E=1000(J+1)/\eps_*$ is larger than the
old $2^{13}(J+1)$ by more than a factor of $100$, so these inequalities
have ample slack. The lower bound needed for deep table windows is
$E/(2K)\le\eps_j$, already verified above.

\subsection{A potential independent of the cost magnitude}

Neither $\alpha$ nor the new buffers enter the product-volume encoding
used by Theorem~\ref{thm:gap}. Thus $H_{\rm amb}\le H^*_{\rm amb}$,
the lift-volume bound, and the prospective depth argument of
Lemma~\ref{lem:depth-bounds} remain unchanged. In particular,
$r\le D$, $\ell\le\T_{\rm lift}$, $\Psi\le\Psi_0$, and $j<J$.

Use exactly the potential \eqref{eq:potential}, with the new mass factors:
\[
 \Omega=12(\Athin+1)^3,\qquad
 Z=144(\Athin+1)^3\Alift^3,\qquad
 \Phi=\Omega^r Z^{\T_{\rm lift}-\ell}w^3\Psi^2.
\]
The $+1$ still accounts for face children. The four child families
contribute at most $\Phi/12$, $\Phi/12$, $\Phi/3$, and $\Phi/12$,
respectively: use \eqref{eq:horizontal-mass},
\eqref{eq:horizontal-H-schedule}, and
Lemma~\ref{lem:horizontal-ownership} in the proof of
Proposition~\ref{prop:contraction}. Every nonempty unresolved node has
$\Phi\ge1$, so one strict root query uses at most $12\Phi_0/5$ nodes.

Because $J=O_d(1+\log N)$, the new scales satisfy
\[
 \Athin,\Alift\le C_d(1+\log N)^{O(d)},\qquad
 K=O_d\bigl(N^\delta(1+\log N)^4\bigr),\qquad
 H_0\le C_dN^\delta(1+\log N)^{O(d)}.
\]
At the root $w_0\le(2H_0N+1)^d$ and $\Psi_0=C_dN^d$. It follows that
\begin{equation}\label{eq:horizontal-root}
 \Phi_0\le f(d)N^{(5+3\delta)d}(1+\log N)^{O(d^2)}
 \le f(d)N^{6d}.
\end{equation}
For the last inequality, for every fixed $a>0$ and every exponent $k(d)$,
\[
 (1+\log N)^{k(d)}\le f_a(d)N^{ad}.
\]
Indeed, with $t=\log N$, the maximum of
$(1+t)^{k(d)}e^{-adt}$ on $t\ge0$ is finite and depends only on $d$.
Take $a=0.97$ and $\delta=0.01$. This absorption involves $\log N$ only;
it does not absorb any power of $B$ into $f(d)$.

\subsection{Explicit bit accounting}

To give a numerical bound on $p$, we specify the implementations rather
than appeal to an unspecified polynomial for mixed-integer feasibility.

\begin{lemma}[Enumeration without a cost-dependent count]\label{lem:horizontal-enum}
Every bounded lattice window needed for tables and buffered masses can be
enumerated, with exact membership tests and witnesses, by at most
$f(d)N^{C_2d}$ rational LP calls, for an absolute $C_2$ independent of $B$.
\end{lemma}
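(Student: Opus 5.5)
We would replace the generic mixed-integer enumeration of Lemma~\ref{lem:enum}, whose count depends on the description length $\mathcal I$ and hence on $B$, by a two-stage procedure. First, we enumerate candidate lattice points of the window using only the geometric data. Second, we test membership exactly by rational LPs. The key observation is that every window used for tables and buffered masses is contained in an explicit cost-independent box. In every case it lies in a translate of $\rho\, H_jP$ with $\rho\le C_d$: thin windows lie in $\vt_v+(3r+2)S_v$, deep windows in $\vt_v+\frac E2P$, and lift windows in $\vt_v+4S_v$ times the strip \eqref{eq:strip} of $511W+1$ integer heights. Here $P\subseteq$ the boxed coordinates of Lemma~\ref{lem:boxing}, whose integer side lengths have product at most $C_r\Psi\le C_d\Psi_0$. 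So we plan to list all integer points of $\vt_v+C_dH_0P$ in the boxed coordinates of $\Lambda_I\cong\Z^r$ by a plain loop over the box $\prod_j[-C_dH_0q_j,C_dH_0q_j]$. This loop has at most $(C_dH_0)^r C_d\Psi_0\le f(d)N^{C_2d}$ points, using $r\le D=O(d)$, $H_0\le C_dN^\delta(1+\log N)^{O(d)}$, and absorbing the logarithms as in \eqref{eq:horizontal-root}. No cost data enter the count.

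For each candidate point we then decide membership in the actual window by a constant number of rational LPs over the extended descriptions. The window conditions are of three kinds: membership in $\eps_jQ_I(\vq)$, membership in $H_jP$ (a convex hull LP), and, for deep tables, membership in the drift region. Each of these is a single feasibility LP, since $Q_I(\vq)$ is the projection of the paired-witness polytope with the cost row $\vc^{\T}(\vx^++\vx^-)\le2s$. The LP also returns the fractional witness. Only the cost row carries $B$-bit numbers, so $B$ affects the bit length of each LP but not the number of calls. Intermediate table states are handled the same way: each prefix window is again a subset of the same box, so it is enumerated once per item scanned, still within $f(d)N^{C_2d}$ calls per table (the factor $n\le(2N+1)^d$ is absorbed). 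Dominance updates are integer comparisons of stored costs and need no LP.

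The step I expect to be the main obstacle is making sure that every window really lies in a cost-independent box. For lift windows the new coordinate is bounded only relative to $\vbeta\vw+\vartheta C_\epsilon$, and $\vbeta,\vartheta$ are LP duals with cost-dependent bit length. To handle this, I would enumerate the old-coordinate part from the box as above. For each old point, I would compute the interval of admissible new coordinates from \eqref{eq:strip} by two LPs, which yields at most $511W+1$ integers. The $\vartheta C_\epsilon$ offset is then a number read off the LP, and its size never enters the count. After the lift, the reboxing of Lemma~\ref{lem:boxing} restores cost-independent entries. This keeps the total at $f(d)N^{C_2d}$ LP calls, with $C_2$ absolute.
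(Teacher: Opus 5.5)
Your proposal is correct and follows essentially the same route as the paper: scan a coordinate box containing $C_dH_jP_I$, whose number of integer points is $f(d)N^{O(d)}$ by the product-volume bound of Lemma~\ref{lem:boxing}, and decide membership point by point with rational LPs whose number does not depend on $B$; lifted windows are scanned in old coordinates, followed by the at most $511W+1$ values allowed by \eqref{eq:strip}. The remaining differences are minor. You rebox at each node, whereas the paper reuses the coordinates of the most recent reboxed ancestor, reboxing only after lifts and testing membership in the current span and lattice with the stored basis. The paper also scans the union $\cD_j(\cW)$ query by query, at most $w\le\Phi_0$ scans, which you leave implicit.
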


\begin{proof}
Each node uses the coordinates of its most recent reboxed ancestor
(Lemma~\ref{lem:boxing}); reboxing is performed after every lift. Since
the hull of a descendant without an intervening lift is contained in the
hull of that ancestor, and $\Psi\le\Psi_0$ along every path, in these
coordinates $P_I\subseteq\prod_{i=1}^{r'}[-q_i,q_i]$ with
$q_i\ge1$ and $\prod_iq_i\le C_d\Psi_0$, where $r'\le D$ is the rank of the
ancestor. At the root, before any lift, use the box $[-N,N]^d$ with
$\prod_iq_i=N^d\le\Psi_0$. Lower-rank nodes scan the same box and in
addition test membership in the span of their pool.
Every buffer and every table window, after its known translation, is
contained in $C_dH_jP_I$. A containing coordinate box therefore has at most
\[
 \prod_{i=1}^{r'}(2\lceil C_dH_jq_i\rceil+3)
 \le C_dH_j^{r'}\Psi_0
 \le f(d)N^{C_2'd}
\]
integer points, since $r'\le D=401d$, $\Psi_0=C_dN^d$, and
$H_j\le C_dN^\delta(1+\log N)^{O(d)}$.
As in \eqref{eq:horizontal-root}, powers of $1+\log N$ are absorbed into
$f(d)N^{O(d)}$. Scan the full box and test membership by its extended
LP description; lattice membership is handled by the stored lattice basis.
For a union of buffers, scan the box for each query and remove duplicates.
There are at most $w\le\Phi_0$ queries at any node. Lifted windows may
instead be scanned in old coordinates followed by the at most $511W+1$
new coordinate values from \eqref{eq:strip}; their count obeys the same
bound. No scan enumerates a cost coordinate or a cost interval. This
replaces the clipped-window uses of Lemma~\ref{lem:enum} by rational LP
and lattice arithmetic, with a larger absolute constant in the exponent
of $N$. Affordable fibres are still handled by the point-or-thin
primitive; their integer points are not exhaustively enumerated.
\end{proof}

\begin{lemma}[Linear dependence of encodings on cost precision]
\label{lem:horizontal-bits}
There is a quantity $G=f(d)N^{C_3d}$, with $C_3$ absolute and independent
of $B$, such that each stored or temporary rational coefficient used by
the modified algorithm can be represented with at most $GB$ bits. Each
extended LP has at most $G$ variables and constraints. All exact LP
tasks at a node can be implemented in $G'B^4$ bit operations per task,
where again $G'=f(d)N^{O(d)}$.
\end{lemma}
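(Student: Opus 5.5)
The plan is to classify every rational number the modified algorithm touches by its origin. Each number is then bounded by a geometric part, of size $f(d)N^{O(d)}$ and independent of $B$, plus a cost part that grows linearly in $B$.

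\emph{Stored quantities.} These are reboxed integer columns and targets, original integer costs, and allowances. By Lemma~\ref{lem:boxing} and the bit-length paragraph, columns have $C_d\log N$ bits and targets $O(\log(nC_d\Psi_0))$ bits. Costs have at most $B$ bits, and allowances $O(\log n+B)$ bits. Since $n\le(2N+1)^d$, all of these are at most $GB$ once $G\ge f(d)N^{C_3d}$.

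\emph{Size of each extended LP.} The LPs define $f_I$, $Q_I(\vq)$, $M_I(\vq)$, the affordable bodies $\cT_Y(b)$, the fibres $\mathcal K_\varrho$, membership tests, and the interior-intersection tests of the packing. Each has $O(|I|+D)$ variables and constraints, and the working pool satisfies $|I|\le n+$ (items exposed at the root), which is $f(d)N^{O(d)}$ by Lemma~\ref{lem:window} and Lemma~\ref{lem:tags} in the bounded case. Their coefficient matrices are built from $\bar\va_i$, $\bar c_i$, $\vu_i$, $\phi$, box bounds, and scales. The structural data $\vu_i,\phi,\Bx$ have $O(\log H^*_{\rm amb})$ bits by Theorem~\ref{thm:gap}. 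The scales $K,E,H_j,\eps_j,\alpha$ are rationals whose sizes depend only on $d$ and $N$. The only cost-dependent entries are $c_i$, $s$, and the normalized costs $\bar c_i=|c_i-\vpi\va_i|$.

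\emph{Temporary prices.} For the prices $\vpi$, $\vbeta$, $\vartheta$, the thin functionals $\vpsi$, and basic optimal solutions, I will invoke Cramer's rule. A basic solution of an LP with $m\le G$ rows and entries of bit length $\beta$ has entries of bit length $O(m(\beta+\log m))$. The cost-dependent entries occupy at most one row or column (costs in the objective or in one budget constraint), so expanding the determinant along that row or column makes the cost bits enter additively. This gives bit length $O(G\log G)+O(B+\log n)$.

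This bound is fine for a single LP, but a price computed in one LP ($\vpi$) is then fed into another LP (through $\bar c_i$) and into the point-or-thin step (through $\vbeta,\vartheta$). Here I expect the main obstacle: these one-level nestings must not compound. The plan is to note that within a node the chain has bounded depth, namely price, normalization, fibre LP, and dual, so each stage multiplies the bit bound by at most a factor $G^{O(1)}$. Since the stored data never carry these prices across nodes, the result stays of the form $GB$ after enlarging $C_3$. For the LLL/simplex steps of Lemma~\ref{lem:point-thin}, the input is of length $GB$, and the bounded number of replacements keeps it there. I would double-check that $B$ appears to the first power only, because each entry mixes costs linearly.

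\emph{Bit-operation count.} Each LP is solved exactly by an interior-point method followed by rounding to a vertex, or equivalently by a polynomial LP algorithm. This costs $\operatorname{poly}(G)\cdot L$ arithmetic operations on numbers of $O(L)$ bits, where the input length is $L=G\cdot GB$. With schoolbook arithmetic this gives $\operatorname{poly}(G)B^{2}$ per operation times $\operatorname{poly}(G)B$ iterations. To be safe I budget $G'B^4$, absorbing all $G$-powers into $G'=f(d)N^{O(d)}$.
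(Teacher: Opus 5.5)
Your architecture matches the paper's. You classify numbers by origin and bound basic primal and dual optima by Cramer's rule, so cost bits enter linearly. You allow only a fixed number of rational stages per node, pass only original costs, thresholds and reboxed columns to children, and then count LP bit operations.

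The gap is in your treatment of the point-or-thin primitive. You say that ``the bounded number of replacements'' keeps the bit length at $GB$. But the number of simplex replacements in Lemma~\ref{lem:point-thin} is \emph{not} bounded independently of the costs. It is $\operatorname{poly}(\mathcal I)$, and since the description of $\mathcal K_\varrho$ contains the budget row $\sum\bar c_ix_i\le\varrho$, it is of order $G'B$ (Remark~\ref{rem:profit-six}). Your stage-by-stage argument, in which each stage may multiply the bit bound by $G^{O(1)}$, yields only a bound of order $G^{O(G'B)}B$ when applied to that many stages. Moreover, the volume range $[2^{-G'B},2^{G'B}]$ that limits the number of replacements itself presupposes a fixed bit bound on the vertices, which you have not established.

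The missing idea is that this iteration resets rather than compounds. Each new vertex is a basic optimum of one fixed extended description of $\mathcal K_\varrho$. Its coordinates therefore obey the Cramer bound of that description, regardless of which objective produced it or how many replacements preceded it. Correspondingly, the barycentric functionals, which are the LP objectives of the next round, must be recomputed from the current $m+1$ vertices rather than updated from the previous functionals.

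You also need the standard fact that LLL keeps bit lengths linear in its input. The functional $\vpsi=T^{\T}\vb_j^*$ is not a basic LP solution, so your Cramer argument does not cover it.

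Two smaller remarks. Your claim that the cost-dependent entries occupy a single row or column fails for some LPs: the packing intersection test has one budget row per membership witness. The claim is also unnecessary, because the bound $O(m(\beta+\log m))$ is already linear in $\beta$. Finally, your interior-point count is a valid alternative to the paper's LP implementation, provided you include purification to exact basic primal and dual optima. That count gives $\operatorname{poly}(G)B^3$ per LP. The paper instead uses ellipsoid feasibility calls with binary search, rational reconstruction, and successive optimization on the optimal face.
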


\begin{proof}
Stored costs are original integers. Each stored query threshold, in
original-cost coordinates, is the root threshold minus a cost of a subset
of the original labelled items, so has
$O(B+\log n)$ bits. After reboxing, current column entries are at most
$C_d\Psi_0$ and retained feasible targets at most $C_dn\Psi_0$.
All scales now depend on $d,N$ alone. Thus the initial descriptions of
the value-function, buffer, affordable-body, and fibre LPs have a
cost-independent number of coefficients, each of bit length at most a
cost-independent factor times $B$.

Use basic primal and dual optima, and reduced fractions, throughout.
Clearing denominators and applying Cramer's rule to matrices of these
sizes bounds the bit lengths of their solutions linearly in $B$;
determinants add bit lengths, not powers of $B$. The further rational
operations constructing normalization, the cost dual, the reflection,
and a lift are a fixed number of such stages. Repeated simplex
improvements in Lemma~\ref{lem:point-thin} are treated differently:
every new vertex is a basic optimum of \emph{one fixed extended
description}. Its bit bound therefore resets to the Cramer bound for
that description on every iteration. A barycentric functional is
recomputed from its current vertices, rather than from the preceding
functional. Rational LLL and its change-of-basis matrices have bit lengths
linear in their input bit length, with dimension factors, by the usual
LLL bounds~\cite{LLL82}. For Hermite and Smith normal forms use
deterministic algorithms with controlled transformation matrices, whose
entries also have bit lengths linear in the input coefficient bit
length, with polynomial factors in the matrix dimensions~\cite{Storjohann00}.

The temporary lifted columns may contain large shears. Rebox immediately
and store the resulting bounded integer columns and original costs;
temporary prices, rounding matrices, and thin functionals are not
passed as coefficients to a later node. For witness reconstruction
store the successive maps separately, without composing them. This
prevents the local bit factors from multiplying along a recursion path.
The number and sizes of geometric data, LP variables, labels, and scan
points are all $f(d)N^{O(d)}$. Enlarging $G$ absorbs their fixed-degree
products.

For completeness, a conservative exact-LP implementation has the claimed
fourth-power bound. The finite-precision ellipsoid feasibility algorithm
uses a polynomial in the numbers of rows and columns times the encoding
length in arithmetic operations, on numbers whose bit lengths are at
most a polynomial in those dimensions times that encoding
length~\cite{AS80}. With elementary quadratic-time integer arithmetic
and fixed-precision square roots, a feasibility call costs $G'B^3$.
An objective value or a coordinate of a basic feasible point has numerator
and denominator bounded by $2^{GB}$ by Cramer's rule. Binary search to
precision smaller than the reciprocal squared denominator bound,
followed by rational reconstruction, therefore needs at most $G'B$
feasibility calls. An optimal basic solution is recovered by successively
optimizing coordinates on the optimal face. Each successive face is a face
of the original polytope, so its eventual vertices satisfy the original
Cramer bound; it is unnecessary to multiply encoding bounds at each
coordinate. Unbounded auxiliary descriptions can first be restricted to
a Cramer-sized box preserving a suitable optimum or certificate.
The same argument applies to a dual optimum. Absorbing the number of
coordinates into $G'$ gives $G'B^4$ bit operations for each exact LP task.
\end{proof}

\begin{remark}[The absolute exponent can be bounded by six]
\label{rem:profit-six}
With the preceding implementations, one may take $p=6$ in
Theorems~\ref{thm:binary} and~\ref{thm:bounded}. Here is explicit conservative accounting.
All geometric counts, including nodes, anchors, box scans, table entries,
and core-construction steps, contribute only $f(d)N^{O(d)}$.
An exact LP task costs at most such a factor times $B^4$ by
Lemma~\ref{lem:horizontal-bits}. In the point-or-thin construction, the
simplex vertices have $GB$-bit coordinates from a fixed description, so
every nonzero simplex volume lies between $2^{-G'B}$ and $2^{G'B}$.
Each replacement more than doubles the volume; hence there are at most
$G'B$ replacements. This gives a $B^5$ bound for one point-or-thin call
and for one strict root query after all geometric counts are included.
Reboxing after a lift (Lemma~\ref{lem:boxing}) applies the same
simplex-rounding construction to sheared columns with $G'B$-bit entries,
followed by LLL, and therefore also costs at most $f(d)N^{O(d)}B^5$ per
node.
LLL uses at most a cubic power of its input bit length~\cite{LLL82}, and
the remaining rational linear algebra, normal-form computations, and
exact table comparisons fit within this allowance. In particular, the
normal-form algorithms with transformations in~\cite{Storjohann00}
fit within the conservative fourth-power bit budget. Lattice computations
in stored coordinates use only cost-independent geometric entries;
temporary transformations are processed with the same linear bit bound.
Finally the outer threshold search uses
$O(1+\log(n(c_{\max}+1)))\le f(d)N^{O(d)}B$ strict root queries. Thus
the total cost is
\[
 f(d)N^{Cd}B^6.
\]
This is an upper bound for a specified implementation, not a claim that
six is necessary. The exponent $6d$ in \eqref{eq:horizontal-root} bounds
only the number of recursion nodes and is separate from this cost-bit
exponent; the final absolute constant $C$ also includes local work and
the full-box scans.
\end{remark}

\begin{proof}[Proof of Theorems~\ref{thm:binary} and~\ref{thm:bounded}]
Horizontal packing is computable by rational LP intersection tests.
The modified reductions preserve the same exact witnesses and completeness
as before; the changed high-residual argument and ownership containment
were verified above. The potential bounds the whole finite recursion
independently of $B$. Lemmas~\ref{lem:horizontal-enum}
and~\ref{lem:horizontal-bits}, with
Remark~\ref{rem:profit-six}, give the binary bound.

Apply the bounded-to-binary reduction of Section~\ref{sec:bounded} once.
It produces $3d+1\le4d$ equations, entries at most $6dN$, and costs still
bounded by $c_{\max}$. Substitution into the binary bound gives
\[
 f(3d+1)(6dN)^{C(3d+1)}B^6\le F(d)N^{C'd}B^6
\]
for an absolute $C'$. The initial LP, input-bound arithmetic, grouping,
and reconstruction contribute the separate absolute-degree
$\operatorname{poly}(L_{\rm in})$ term, together with geometric-size
work already absorbed in $F(d)N^{O(d)}B^6$.
For $N<N_0(d)$, enumeration of the bounded number of distinct binary
columns also obeys this estimate. This proves
\eqref{eq:binary-time} and \eqref{eq:bounded-time}, with $p=6$.
\end{proof}

\end{document}